\newcommand{\RNum}[1]{\uppercase\expandafter{\romannumeral #1\relax}}
\DeclareMathOperator{\var}{\mathbb{V}\mathrm{ar}}
\DeclareMathOperator{\cov}{\mathbb{C}\mathrm{ov}}
\DeclareMathOperator{\cor}{\mathbb{C}\mathrm{orr}}
\DeclareMathOperator{\Pe}{\mathbb{P}}
\DeclareMathOperator{\E}{\mathbb{E}}
\DeclareMathOperator{\N}{\mathcal{N}}
\newcommand{\indep}{\perp\!\!\!\perp}
\newcommand{\eps}{\varepsilon}
\newcommand{\R}{\mathbb{R}}
\newcommand{\X}{\mathbb{X}}
\newcommand{\A}{\mathbb{A}}
\newcommand{\Y}{\mathbb{Y}}
\newcommand{\W}{\mathbb{W}}
\newcommand{\He}{\mathcal{H}}
\renewcommand{\L}{\mathcal{L}}
\renewcommand{\P}{\mathbb{P}}
\newcommand{\bimp}{\hspace{1mm} \Leftrightarrow \hspace{1mm}}
\newcommandx{\pto}[0]{\overset{P}{\to}}
\newcommand\BibTeX{{\rmfamily B\kern-.05em \textsc{i\kern-.025em b}\kern-.08em
T\kern-.1667em\lower.7ex\hbox{E}\kern-.125emX}}
\newtheorem{theorem}{Theorem}[]
\newtheorem{lemma}[theorem]{Lemma}
\newtheorem{corollary}[theorem]{Corollary}
\begin{document}

\title{Powering RCTs for marginal effects with GLMs using prognostic score adjustment} 
\author{Emilie Højbjerre-Frandsen*}

\author{Mark J. van der Laan}

\author{Alejandro Schuler}

\date{\today}

\keywords{Causal inference, Generalized linear models, Historical data, Prognostic Score, Randomized Trials, Statistical Efficiency}

\let\thefootnote\relax

\begin{abstract}
In randomized clinical trials (RCTs), the accurate estimation of marginal treatment effects is crucial for determining the efficacy of interventions. Enhancing the statistical power of these analyses is a key objective for statisticians. The increasing availability of historical data from registries, prior trials, and health records presents an opportunity to improve trial efficiency. However, many methods for historical borrowing compromise strict type-I error rate control. Building on the work by \citet{Schuler2020} on prognostic score adjustment for linear models, this paper extends the methodology to the plug-in analysis proposed by \citet{RosenblumvanderLaan+2010} using generalized linear models (GLMs) to further enhance the efficiency of RCT analyses without introducing bias. Specifically, we train a prognostic model on historical control data and incorporate the resulting prognostic scores as covariates in the plug-in GLM analysis of the trial data. This approach leverages the predictive power of historical data to improve the precision of marginal treatment effect estimates. We demonstrate that this method achieves local semi-parametric efficiency under the assumption of an additive treatment effect on the link scale. We expand the GLM plug-in method to include negative binomial regression. Additionally, we provide a straightforward formula for conservatively estimating the asymptotic variance, facilitating power calculations that reflect these efficiency gains. Our simulation study supports the theory. Even without an additive treatment effect, we observe increased power or reduced standard error. While population shifts from historical to trial data may dilute benefits, they do not introduce bias.
\end{abstract} 

\maketitle

\markright{Powering RCTs for marginal effects with GLMs using prognostic score adjustment}

\section{Introduction}\label{sec1}

In this study, we will specifically focus on randomized clinical trials (RCTs), which form the cornerstone of regulatory approval processes, as seen in \citet{Hatswell}. Within RCTs, participants are randomly assigned to treatment and control groups, to gather data enabling the estimation of an intervention's effectiveness compared to a placebo or an active comparator. Randomization ensures that these groups are statistically similar in terms of observed and unobserved confounders. Thus, reducing potential biases and often eliminating the causal gap between the causal and statistical estimand. This process plays a pivotal role in facilitating impartial and sound decision-making when determining the causal impact of an intervention \cite{pearl}.

A substantial increase in the number of participants in clinical trials can lead to significant economic costs and prolonged timelines. This was demonstrated in the review by \citet{bentley} on the costs, impact, and value of conducting clinical trials. Having fewer participants in a trial offers a means to accelerate the trial process and reduce the overall cost of the trial. Additionally, recruitment itself poses a major challenge in various disease areas, and in certain scenarios, it may not be ethically suitable to conduct large placebo-controlled studies, as outlined by \citet{temple}. 

Adjusting for measured baseline covariates during the statistical analysis of a clinical trial has been shown to reduce the variance of treatment effect estimators, ultimately enhancing study power. Various studies have indicated that the inclusion of baseline variables reduces variance in linear models \cite{Tsiatis2008, MoorevanderLaan+2009, RosenblumvanderLaan+2010, Balzer2016, HojbjerreFrandsen2024}. In the case of binary outcomes, \citet{MoorevanderLaan+2009} illustrated the potential for precision gains through adjustment using logistic regression to estimate treatment effects on different scales, including risk difference, risk ratio, and odds ratio. These findings were extended by \citet{RosenblumvanderLaan+2010} to a broad range of generalized linear models (GLMs) for marginal effect estimation within RCTs. However, in the context of smaller clinical trials, incorporating numerous covariates can result in overfitting and inflated type \RNum{1} error rates. Additionally, the process of post hoc selection for the adjustment set raises concerns about potential data dredging, where researchers may (knowingly or unknowingly) seek covariates that maximize power, consequently further risking an increase in type \RNum{1} error rates. To address these concerns, Regulatory agencies such as the U.S. Food and Drug Administration (FDA) and the European Medicines Agency (EMA) have recommended the inclusion of only a select few highly prognostic baseline covariates, prespecified in the protocol or statistical analysis plan (SAP) prior to any data unblinding \cite{FDA_cov, EMA_cov}. For example, it is recommended to always include stratification variables despite the fact that the benefits of covariate adjustment are only realized when the covariates are predictive of the outcome and are imbalanced between the groups \cite{Moore2011}. While some emphasize the need for cautious consideration when performing adjusted analysis, others have focused on the gains in precision and advocate for more refined recommendations for adjustment \cite{Kahan2014, AUSTIN2010142}. To address the issue of adaptively selecting an appropriate model while still pre-specifying the analysis, \citet{Balzer2016, Balzer2024} have proposed the use of the Adaptive Pre-Specification (APS) method in conjunction with Targeted Maximum Likelihood Estimation (TMLE). This method has exhibited notable increases in study power. An alternative approach to further increase study power using covariate adjustment involves data fusion methods, which integrate data from past trials or real-world evidence (historical data) into the new RCT study. 

{\color{black}
In recent years, there has been a surge in interest in the concept of a "super covariate" or "prognostic covariate" \cite{Holzhauer2022}. 
Since at least \citet{tukey1993tightening} it has been recognized that power gains from covariate adjustment depend on how predictive the covariates are of the outcome. \citet{tukey1993tightening} also suggests that covariates might be merged into a ``super-covariate''.
\citet{CUPED} built on this idea, using historical data to learn how to construct this composite so that it is optimally predictive of the outcome.
Borrowing the term ``prognostic score'' from \citet{Hansen2008-nj}, \citet{Schuler2020} independently developed the same idea, added a power calculation and established local efficiency results. \citet{karlsson2024}, \citet{Holzhauer2022}, \citet{Liao2023} and others have built on these results.
}

Incorporating such a prognostic covariate does not disrupt the randomization of the trial, thereby avoiding susceptibility to bias that other data fusion methods may encounter \cite{Lim2018, hill, kaplan, Li2023}, even when Pocock's criteria are met \cite{Pocock1976}.
For many external control methods, additional assumptions like exchangeability of the potential outcome mean between the historical and current trial data are often needed to obtain an efficiency gain. If such an assumption were met, one would then augment the control arm with external controls as suggested by \citet{Li2023}. However, in practice the assumption of exchangeability is overly stringent. When using a prognostic score, however, there is no risk of increasing the type \RNum{1} error even when this assumption is not fulfilled. The same robustness is not guaranteed for many external control methods.

\citet{Schuler2020} describe and theoretically validate linear adjustment with a prognostic score\footnote{Also trademarked as PROCOVA\texttrademark\, by the company \href{https://www.unlearn.ai}{Unlearn.ai}.} for the estimation of the average treatment effect (ATE). Practical implementations and extensive simulation studies for this method have been conducted by \citet{HojbjerreFrandsen2024}. Using linear prognostic score adjustment with a pre-specified power, \citet{Schuler2020} demonstrated large reductions in the control arm size for phase \RNum{3} trials. In September 2022, the Committee for Medicinal Products for Human Use \cite{EMA} at the EMA issued a qualification opinion for prognostic adjustment, expressing a generally favorable assessment of the method and highlighting its ability to control the type \RNum{1} error rate. \citet{Schuler2020} demonstrated that the ATE estimate obtained through prognostic adjustment is efficient, i.e., it has the smallest possible asymptotic variance among a large group of estimators under the assumption of constant treatment effect. This implies that we achieve local semi-parametric efficiency by using prognostic adjustment, where "local" refers to the additional assumption of a constant treatment effect, which constrains the statistical model. Many of the previously mentioned techniques for covariate adjustment also exhibit local efficiency \cite{Tsiatis2008, MoorevanderLaan+2009, RosenblumvanderLaan+2010, Balzer2016}. For example, when employing a generalized linear model (GLM) as a working model for the outcome, \citet{RosenblumvanderLaan+2010} demonstrate local efficiency of a marginal effect estimand in the context of an RCT. In this scenario, "local" arises from the requirement that the working GLM model be correctly specified. Another example is TMLE, which is applicable to various types of data, and there exist numerous extensions of the method that leverage cross-validation to maximize efficiency \cite{vanderLaanGruber+2010, GrubervanderLaan+2010, Weixin+vdLaan2020, chen2023, RosenblumvanderLaan+2010, Balzer2016, Balzer2024}. In many cases, the estimator is locally semi-parametric efficient, with \cite{intro_modern_CI, vanderLaanRubin+2006, vanderLaan+2017} discussing the conditions necessary for TMLE to be non-parametrically efficient. \citet{Liao2023} broadened the concept of prognostic score adjustment to include already efficient estimators.

Prognostic adjustment for ATE estimation enhances the existing theory of locally semi-parametric efficient estimators. Not only does this method offer compelling theoretical properties, it is also easy to comprehend, making it an attractive addition to the toolbox of statistical analysis for RCTs. However, prognostic score adjustment for non-continuous outcomes for estimating marginal effect measures such as the odds ratio has not been established yet. Moreover, the pharmaceutical industry still commonly employs standard GLM models, which underscores the need for further exploration and the importance of investigating whether the integration of historical data within the context of GLMs can lead to enhanced study power. This exploration is particularly pertinent given the critical role of GLMs in pharmaceutical research and the potential for substantial gains in precision. 

In this paper, we explore the benefits of including an estimated prognostic score as an adjustment covariate in GLM based procedure suggested by \citet{RosenblumvanderLaan+2010}. We prove that in this case we get the same type of local efficiency under the assumption that the true treatment effect is additive on the link scale. We begin by introducing the setting and notation as well as the GLM plug-in procedure. We extend the GLM plug-in method by including Negative Binomial regression, broadening its utility for various data types. Additionally, we provide an analytical formula for the asymptotic variance that relies on only a few estimable population parameters, facilitating prospective sample size estimation. We address practical considerations and implications of our findings, offering insights for future clinical trial designs. We also present a simulation study that supports our theoretical findings and examines the benefits of the method in other data generating scenarios. Finally, we find decreases in the estimated variance through a analysis of clinical trial data supplied by Novo Nordisk A/S.

\section{Setting and notation}

We consider a two-armed trial involving $n$ participants, where the observational units, denoted $O_i = \left(W_i, A_i, Y_i\right)$, are independent and identically distributed for $i\in \{1, 2, 3, \ldots, n\}$, where $O$ are samples from the true but unknown distribution $\mathcal{P}$. Given the i.i.d. nature of the observations, we use the shorthand notation $O=\left (W,A,Y\right)$ without the index $i$ for a generic observation. Here, $Y$ represents a primary endpoint variable that can be continuous, binary, or ordinal. The vector $W$ comprises $p$ baseline covariates $W^1, W^2, W^3, \ldots, W^p$ collected at the initial participant visit. After gathering the necessary baseline information, participants are randomly assigned to their respective treatment groups. This assignment is denoted by the variable $A$, which is $1$ if the participant is randomized to the new intervention and $0$ if the participant is randomized to the control group. The probability of treatment is $P\left(A = a \right) = \pi_a$ where $0<\pi_a<1$ for $a\in \{0,1\}$. We do not make any parametric assumptions about the distribution of $Y$ given $(W, A)$ and use the empirical distribution for $W$. This implies that the statistical model $\mathcal{M}$ describing the range of potential observed data distributions is semi-parametric. 
Using this notation, the trial dataset can be represented as $\P_n = \left(\W, \A, \Y\right)\in \mathcal{W}^n\times \{0, 1\}^n \times \mathcal{Y}^n$, where $\mathcal{W}$ is the range of $W$, accommodating continuous, binary or categorical covariates and similar for the range of $Y$ denoted by $\mathcal{Y}$. We use $n_1$ and $n_0$ for the sizes of the treatment and control groups, respectively.

In our analysis, we adopt the causal inference framework introduced by \citet{Petersen+vdLaan2014} which is based on the Rubin causal model from \cite{Imbens2004} and \cite{Sekhon2008}. Here each participant has two potential outcomes: the outcome under the new treatment $Y(1)$ and with control treatment $Y(0)$. Let $\Psi_a=\E[Y(a)]$ be the population mean outcome under treatment $a$. We are interested in marginal effects, which are causal effects of the form 
\begin{align}
    \Psi = r\left(\Psi_1, \Psi_0\right),
\end{align}
where $r$ is continuously differentiable in $(\Psi_1, \Psi_0)$. We call them marginal effects since they only depend on the marginal mean outcomes. Examples of marginal effects are the difference in means $\Psi_1 - \Psi_0$ and the odds ratio $\frac{\Psi_1 / (1- \Psi_1)}{\Psi_0 / (1-\Psi_0)}$. {\color{black} For our purposes we also require that the treatment effect is decreasing in the control mean $\Psi_0$ and increasing in the treated mean $\Psi_1$, as would be intuitively expected.} Specifically,
\begin{align}\label{eq:asump_r}
    r_0'(\Psi_0,  \Psi_1) \coloneqq \frac{\partial r}{\partial \Psi_0}(\Psi_0,  \Psi_1) \leq 0 \quad \text{and} \quad r_1'(\Psi_0,  \Psi_1) \coloneqq \frac{\partial r}{\partial \Psi_1}(\Psi_0,  \Psi_1) \geq 0.
\end{align}
This sensible condition is satisfied by most definitions of
the treatment effect, including difference-in-means and odds ratio. Under randomized treatment we can identify any marginal effect from observed data because
\begin{align}\label{eq:estimand}
    \E[Y(a)] = \E\E[Y(a)| W] = \E\E[Y(a)| W, A = a] = \E\E[Y | W, A=a].
\end{align}
Throughout, we let $\mu(w, a)= \E[Y|W=w, A=a]$.

\section{Estimating marginal effects with generalized linear models}\label{sec:est}

GLMs, introduced by \citet{Nelder1972}, are powerful tools in the analysis of RCTs. They are e.g. used in the examination of binary data with Logistic regression or count data with Negative Binomial regression, both types of data which frequently represent clinically significant endpoints. In a GLM the outcome is modelled from a particular distribution from the exponential family including e.g. Normal, Binomial, Poisson, Gamma, Negative Binomial (with known stopping time parameter) or Inverse Normal distributions. The conditional mean $\mu(W, A)$ is modelled parametrically by $g^{-1} (\beta_0 +x \beta_x)$ where $g$ is the link function and $(1, x)$ is a row in the $n\times (1+q)$ design matrix $\X$ using $A$, $W$ and potentially treatment-covariate interaction effects. The linear predictor is then $\beta_0 \mathbb{1}_n +\X \beta_x$ where $\mathbb{1}_n$ is a vector of ones. The unknown parameters $\beta = (\beta_0, \beta_x)$ are typically estimated by maximum likelihood estimation (MLE).

The coefficients of the GLM are usually not interpretable as direct estimators of the marginal effects as is the case for the ANCOVA model (after demeaning the covariates) and the difference in means estimand \cite{HojbjerreFrandsen2024}. This is due to collapsibility problems for the measure of the causal effect. This intuitively means that including a baseline covariate in the linear part of the GLM model will change the treatment estimand, whenever that covariate is associated with the outcome \cite{Huitfeldt2019, Rhian2021}. For example, \citet{Rhian2021} describes why the odds ratio is a non-collapsible effect measure, since the difference between the marginal and conditional (on $w$) effect measure is not explained by sampling variation. Specifically, they show that even when there is no confounding due to randomization and when the outcome variable is binary, then the odds ratio will change when including a covariate that is associated with the outcome, no matter how large the sample size is.

However, GLMs can still be used to estimate marginal effects if care is taken. \citet{MoorevanderLaan+2009} applied TMLE to prove that in RCTs, certain estimators based on logistic regression are asymptotically unbiased even when the working model is misspecified. \citet{RosenblumvanderLaan+2010} generalized this to a larger class of GLMs. They describe a simple and extremely general plug-in method in the setting of an RCT, that we will briefly describe here. Specifically, they use the machinery of MLE for GLMs, but do not make any assumptions on the models being correctly specified, i.e., the working GLM model can be arbitrarily misspecified; in the model family, link function, and the linear predictor. However, it is necessary for the GLM to use canonical link functions and originate from frequently utilized families, which include Normal, Binomial, Poisson, Gamma, or Inverse Normal distributions which should not depend on the data, i.e. the family, link function, and linear predictor should be pre-specified. Additionally, the linear predictor must include an intercept and the treatment variable $A$. The design matrix can also include functions $f_j(W, A)$ for $j=1, 2, 3, \ldots, l$, but we will assume that all $f_j$'s are bounded on compact subsets of $\R^p \times \{0,1\}$. We also assume that the columns in the design matrix are linearly independent and that there exists a maximizer $\beta^*$ of the expected log-likelihood where the expectation is with respect to the true but unknown data distribution and that each component of $\beta^*$ has an absolute value smaller than a bound $b$. These are standard regularity conditions that guarantee the convergence of the parameter estimates of the GLMs due to the strict concavity of the expected log-likelihood for GLMs with canonical link function proved by \citet{R+vdL2009}. For the Gamma and Inverse Normal families of distributions, \citet{RosenblumvanderLaan+2010} impose an additional restriction to ensure that the canonical link is bounded. 

The strategy is to use a GLM as a working model to estimate $\mu(W, a)$ for each treatment $a\in \{0, 1\}$. First, find the MLE $\hat{\beta}$ and then estimate the conditional mean functions as $\hat{\mu}(w, a) = g^{-1}\left(\hat{\beta}_0 + x\hat{\beta}_x\right)$. To estimate a marginal effect, we extract counterfactual predictions from the GLM, first assuming everyone in the sample was actually treated ($a=1$), and then assuming the opposite ($a=0$). This specifically means that the outer expectation for the statistical estimand (right side of \eqref{eq:estimand}) can be estimated by the expectation under the empirical distribution of $W$, i.e. 
\begin{align}
    \hat{\Psi}_a = \frac{1}{n}\sum_{i=1}^n \hat{\mu}(W_i, a). 
\end{align}
The marginal effect estimate is then given by the plug-in estimator
\begin{align}\label{eq:estimator}
    \hat{\Psi} = r\left(\hat\Psi_1, \hat{\Psi}_0\right).
\end{align}

\paragraph{}
\citet{RosenblumvanderLaan+2010} show that the GLM based plug-in estimator is a regular and asymptotically linear estimator (RAL) assuming randomization of the treatment. RAL implies that the estimator is consistent for the marginal effect and has an asymptotically normal sampling distribution with variance described by the influence function (IF). The argument stems from the MLE for the GLM working model being an M-estimator \cite{Leonard2002}, implying that $\hat\beta$ is a RAL estimator for $\beta^*$. The GLM model that plugs into $\beta^*$ will be denoted by $\mu^*(w, a)=g^{-1}(\beta_0^*+x\beta_x^*)$. This is the large-sample probability limit of the GLM fit and can be viewed as a projection of the true distribution onto the GLM working model. This of course only implies that $\hat{\mu}(w, a)$ is a RAL estimator of $\mu^*(w, a)$ for a $w\in\mathcal{W}$ by the delta method. 

To obtain the consistency and asymptotic normality towards the causal estimand $\Psi_a$, \citet{RosenblumvanderLaan+2010} show that the scores solved by the M-estimator span the efficient influence function (EIF) for $\Psi_a$ under the assumed distribution $p_0(W, A, Y)=p_{0}(Y| W, A)\pi_A p_0(W)$, where $p_{0}(Y| W, A)$ is the density using the pre-specified GLM model and $p_0(W)$ is the empirical distribution of $W$. Consequently, the EIF estimating equation is solved using $W\indep A$ and $\Pe(A)=\pi_a$. This is then used to conclude that the procedure is a TMLE. Now, since the estimator is a TMLE they use Theorem 1 of \citet{vanderLaanRubin+2006} when incorporating five additional conditions that they verify within the context of an RCT. Theorem 1 of \citet{vanderLaanRubin+2006} allows us to conclude that the estimated counterfactual means, $\hat{\Psi}_a \pto \Psi_a$, are themselves consistent, asymptotically normal estimators of the counterfactual means for all of the GLMs considered. Applying the delta method does the rest, ensuring that our plug-in based estimator of the marginal effect is RAL (remain consistent and asymptotically normal), regardless of the type of misspecification. In addition \citet{RosenblumvanderLaan+2010} show that the estimator is locally efficient meaning that under correct GLM specification, this estimator achieves the efficiency bound for the model, assuming only an RCT setting. 

\citet{RosenblumvanderLaan+2010} exclude the Negative Binomial regression in their work. However, since the Negative Binomial distribution is often used in clinical trials to model count data to account for overdispersion, i.e. the variance exceeds the mean, we show in Appendix~\ref{app:NB} that the result does hold for the Negative Binomial regression as well. 

\paragraph{}
To derive the asymptotic variance of the marginal effect estimate via the delta method we need to use the asymptotic variance of the estimated counterfactual means. As per asymptotic theory, this is given by the variance of the IF for each counterfactual mean estimate, which for the GLM based estimator $\hat{\Psi}_a$ is 
\begin{align} \label{eq:IF_mean}
    \phi^*_a = \frac{1_a(A)}{\pi_a}(Y-\mu^*(W, a))+(\mu^*(W, a)-\Psi_a).
\end{align}
Given the consistency of $\hat{\Psi}_a$ it must follow that 
\begin{align} \label{eq:consistency}
    \E[\mu^*(W, a)]=\Psi_a=\E[Y(a)].
\end{align}
This implies that the plug-in estimate using the large sample limit $\mu^*(a, w)$, $\Psi_a^*$, would indeed equal $\Psi_a$. Using the delta method, the IF for the marginal effect estimator is 
\begin{align}\label{eq:IF}
    \phi^*_\Psi = r_0'(\Psi_1, \Psi_0)\phi^*_0 + r_1'(\Psi_1, \Psi_0)\phi^*_1,
\end{align}
implying that the asymptotic variance is $v_\infty^2 = \var(\phi^*_\Psi)$. Naturally without knowledge of the true data-generating process we do not know the probability limits $\Psi_a^*$, nor can we calculate the population variance. Therefore in practice we must use the sampling variance of the IF (same as the 2nd moment since IFs have mean-zero) where we plug in estimates of the unknown functions and quantities it depends on. This estimate is consistent for the asymptotic variance and is given by, 
\begin{align}\label{eq:asympvar}
    \hat{v}_\infty^2 = \frac{1}{n}\sum_{i=1}^n \left( r_0'(\hat\Psi_1, \hat\Psi_0)\cdot\hat\phi_{0}(A_i, W_i, Y_i)  + r_1'(\hat\Psi_1, \hat\Psi_0)\cdot\hat\phi_{1}(A_i, W_i, Y_i) \right)^2,
\end{align}
where $\hat\phi_{a}(W_i, A_i, Y_i) = \dfrac{1_a(A_i)}{\pi_a}(Y-\hat\mu(W_i, a))+(\hat\mu(W_i, a)-\hat\Psi_a)$. Confidence intervals (CI) and hypothesis testing can be obtained based on this estimate of the IF. 

{\color{black}
In the targeted maximum likelihood and double machine learning literature it is common to estimate asymptotic variances such as the one above using \textit{cross-fitting}. In general settings, $\hat\Psi$ can depend on complex machine learning estimates of various functions which breaks many asymptotic arguments. Cross-fitting resolves this. In cross-fitting, the functional estimates are generated from one (training) sample of data and those functions are evaluated on data from a different sample (estimation sample) to calculate asymptotic variances, etc. Usually this process is repeated round-robin, taking a small estimation fold of the data and leaving the rest for training each time \cite{dml}. In this way the cross-fit functions can be evaluated on all of the data without being trained on that self-same data.

Cross-fitting is not for $\sqrt n$-consistency in our setting since the estimated conditional means from a GLM typically lie in a Donsker class. Nonetheless, cross-fitting may improve finite-sample performance, and indeed we observe this in our simulation study in Section~\ref{sec:sim}. Through the rest of the paper we specify textually when estimates are cross-fit without evoking it in notation.
}

In practice we can account for the modeling process in the variance estimator by using a cross-validated variance estimate. The data is split into a validation and training data set respecting the randomization ratio. The GLM is fit using the training data set and is then used to estimate the IF for the observations in the validation data set. The sample variance of the cross-validated estimate of the IF is then used as a variance estimate for the marginal effect estimate. 
{\color{black}
The method we describe is not the only way to obtain inference for a marginal effect but it is simple and very general. As an alternative, one might also estimate the (jointly normal) asymptotic distribution of $(\hat\Psi_1, \hat\Psi_0)$ and analytically translate this to a more exact (non-normal) asymptotically-approximate distribution for $r(\hat\Psi_1, \hat\Psi_0)$. It is possible that such methods are necessary when the outcome is rare and the exact distribution of $r(\hat\Psi_1, \hat\Psi_0)$ is highly skewed. Since such methods would depend on the exact form of $r$ we do not consider them in this paper and leave them to future investigation. 
}

\subsection{Prospective power}\label{sec:power}

The expression for the asymptotic variance given in \eqref{eq:asympvar} is useful as a plug-in estimate for the variance once the data have been collected. However, it is not clear how it can be used for prospective power estimation or sample size calculations. We will now present a reduced form of \eqref{eq:asympvar} that only depends on a small number of population parameters that can either be prospectively estimated from historical data or guessed based on interpretable, realistic assumptions. This section thus represents a novel contribution, as it addresses the practical utility of the reduced form of \eqref{eq:asympvar} in the context of prospective power estimation and sample size calculations. The sample size calculation we propose here is essentially a generalization of the method of \citet{borm}, which has been used productively to design several completed trials \cite{rct1, rct2, rct3, rct4}.

First, let $\kappa_a^2=\E\left[(Y(a)-\mu^*(W, a))^2\right]$ be the expected mean-squared error of the best-possible GLM $\mu^*$ when predicting for $a$ and let $\sigma_a^2 = \var\left[Y(a)\right]$ be the marginal variance of the potential outcome under treatment $a$. $\mu^*$ can also be thought of as the probability limit or population-level version of $\hat\mu$. Furthermore, let $\tau = \cor\left[Y(0), Y(1)\right]$ be the population correlation coefficient between the two potential outcomes and $\eta = \cor\left[Y(0)-\mu^*(W, 0), Y(1)-\mu^*(W, 1)\right]$ be the population correlation coefficient between residuals of the best-possible GLM fit in opposite treatment arms. We will in \autoref{sec:power} discuss how to estimate these parameters. Starting from the above expression, the algebraic manipulation in Appendix~\ref{app:reducedformasympvar} gives, 
\begin{align}\label{eq:var_pop}
    v_\infty^2 = r_0'^{\, 2}\left( \frac{\pi_1}{\pi_0}\kappa_0^2 + \sigma_0^2\right) + r_1'^{\, 2}\left(\frac{\pi_0}{\pi_1}\kappa_1^2 + \sigma_1^2\right)-2|r_0'r_1'|\left(\tau \sigma_0 \sigma_1 - \eta \kappa_0\kappa_1\right),
\end{align}
where $r_a'$ denotes the derivative evaluated in $(\Psi_1^*, \Psi_0^*)$.

Now our reduced form for the asymptotic sampling variance depends only on the population parameters described in \autoref{sec:est}. If we want to prospectively estimate sampling variance for a sample size or power estimation, we need to estimate values of these quantities. In this section we will describe how to estimate these using historical data $\P_{\widetilde{n}} = \left(\widetilde{\W}, \widetilde{\Y}\right)\in \mathcal{W}^{\widetilde{n}} \times \mathcal{Y}^{\widetilde{n}}$ obtained from $\widetilde{n}$ control participants ($A=0$) that we assume to be taken from the same distribution that governs the trial.

While numerous population parameters mentioned above can be estimated using historical data or by making assumptions with clear interpretations, the cross-counterfactual correlations ($\tau$ and $\eta$) pose significant challenges. Estimating these parameters naturally involves treatment-arm data, which is unavailable to us in most cases during the planning phase of a trial. Furthermore, determining suitable values for these parameters without precise knowledge is also unclear. Nevertheless, progress can potentially be achieved by constraining these terms. By assuming $\tau\geq 0$ {\color{black}(positive correlation} between the two potential outcomes) and considering the worst-case scenario $\eta=1$, we can conservatively limit the overall asymptotic variance, 
\begin{align}\label{eq:var_bound}
\begin{split}
    v_\infty^2 &\leq r_0'^{\, 2}\left( \frac{\pi_1}{\pi_0}\kappa_0^2 + \sigma_0^2\right) + r_1'^{\, 2}\left(\frac{\pi_0}{\pi_1}\kappa_1^2 + \sigma_1^2\right) + 2|r_0'r_1'|\kappa_0\kappa_1\\
    &= r_0'^{\, 2}\sigma_0^2+ r_1'^{\, 2}\sigma_1^2+ \pi_0\pi_1\left(\frac{|r_0'|\kappa_0}{\pi_0} + \frac{|r_1'|\kappa_1}{\pi_1} \right)^2\\
    &\equiv v_\infty^{\uparrow 2}.
\end{split}
\end{align}
Since the design $\pi_a$ is fixed and the marginal outcome variances $\sigma_a$ are fixed the bound implies that the root mean squared generalization errors ($\kappa_a$) of the limiting GLM are the primary factor influencing the reduction of large sample variance and consequently increasing statistical power. This means that we should include as many baseline covariates with high predictive power as practically feasible in the GLMs. It is also desirable to include pre-specified interactions and transformations thereof. 

{\color{black}
To help understand when $\tau \ge 0$, let $Z$ denote a (possibly unobserved) set of covariates such that, conditional on $Z$, the remaining idiosyncratic components
of $Y(0)$ and $Y(1)$ are uncorrelated, i.e. $Y(1) \perp Y(0) | Z$\footnote{This is automatic if 
$Y(a)=\mathbb E\{Y(a)\mid Z\}+\varepsilon_a$ with $\mathbb E(\varepsilon_a\mid Z)=0$
and $\mathrm{Cov}(\varepsilon_0,\varepsilon_1\mid Z)=0$.}. Following arguments by \citet{Schuler+2022+151+171}, a sufficient condition for $\tau\ge 0$ is
\[
\var\big(\mathbb E[Y(1)-Y(0)\mid Z]\big) \le\ \var\!\big(\mathbb E[Y(a)\mid Z]\big),
\]
for either $a=0$ or $a=1$.
Thus, $\tau\ge 0$ holds whenever treatment-effect heterogeneity across \textit{any set of definable strata} is not more
variable than the heterogeneity (across strata) in expected outcomes. If one cannot imagine a way to stratify a trial population such that the additive treatment effect would be more heterogeneous across those strata than the expected control or treated outcomes themselves would be, then $\tau \ge 0$ is a sensible assumption. Roughly speaking, $\tau \ge 0$ fails only if treatment heterogeneity is particularly extreme.
}

The conservative bound $v_\infty^{\uparrow 2}$ enables us to determine the power prospectively by estimating values for $r_a'$, $\sigma_a^2$ and $\kappa_a^2$ from the historical control data $\left(\widetilde{\W}, \widetilde{\Y}\right)$ or from expert knowledge. To get a value for $r_a'$ we need $\Psi_1$ and $\Psi_0$. For $\Psi_0$ a natural estimator is the sample mean, $\hat\Psi_0=\frac{1}{\widetilde{n}}\sum \widetilde{Y}_i$. To determine $\Psi_1$ we only need to consider the target effect size {\color{black}(minimum clinically important difference)} $\Psi$ and solve $r(\hat\Psi_1, \hat\Psi_0)=\Psi$. Plugging these values into the analytically derived derivative gives the needed value. 
To estimate the best case treatment specific average MSEs $\kappa_a^2$ we need to approximate ${\mu}^*(w,a)$. This is done by fitting the same GLM to the historical data as specified for the current trial. The resulting fit is then $\hat{\mu}(w, 0)$ since all of the of the historical participants have $A=0$. Thus the plug-in for $\kappa_0^2$ is just the empirical MSE of this fit,
\begin{align}
\begin{split}
    \hat\kappa_0^2=\frac{1}{\widetilde{n}}\sum \left(\widetilde{Y}_i-\hat{\mu}(\widetilde{W}_i, 0) \right)^2.
\end{split}
\end{align}
{\color{black}This MSE should be assessed out-of-sample on the historical data, or in a cross-validated fashion, so that the fit $\hat\mu$ is not being evaluated on the same data points $\widetilde W_i$ on which it was trained.}

The historical data does not contain information about the treatment arm, so we cannot use the same procedure to estimate $\kappa_1^2$. A reasonable and interpretable assumption would be $\hat\kappa_0^2=\hat\kappa_1^2$. One may conduct a sensitivity analysis by inflating this value to assess the robustness of the power estimation. 

The marginal outcome variance for the control group is trivially estimated by, 
\begin{align}
\begin{split}
    \hat\sigma_0^2=\frac{1}{\widetilde{n}}\sum \left(\widetilde{Y}_i-\hat{\Psi}_0 \right)^2.
\end{split}
\end{align}
{\color{black}
There are various ways to estimate the treatment-arm variance, depending on modeling assumptions. For example, in the case of binary outcome data we have $\sigma_1^2=\Psi_1(1-\Psi_1)$, so plugging in the counterfactual mean $\hat\Psi_1$ based on the target effect size would yield an appropriate estimate. On the other hand, when the outcome is continuous, it is often sensible to assume $\sigma_1 \approx \sigma_0$. Thus a sensible estimate is $\hat\sigma_1 = \hat\sigma_0$.
}

\paragraph{}
The procedure for doing a prospective power calculation can be summarised as: 
\begin{enumerate}
    \item Set design parameters $\pi_a$ and $\Psi$.
    \item Collect the historical data, that are representative of the new study in design, population and the type of data being collected. 
    \item Use these to estimate  $r_a'$, $\sigma_a^2$ and $\kappa_a^2$ with the procedure outlined above or sensible alternative based on context.
    \item Plug the estimates into \eqref{eq:var_bound} to obtain an estimate of $v_\infty^{\uparrow 2}$. 
    \item Now let $\Delta$ be the marginal treatment effect in case of no treatment difference, e.g. $0$ for ATE and $1$ for the rate ratio. Calculate the lower bound for the power by comparing $\He_0: \hat{\Psi} \sim F_0 = \N(\Delta ,v_\infty^{\uparrow 2} / n)$ to $\He_1: \hat{\Psi} \sim F_1 = \N(\Psi,v_\infty^{\uparrow 2} / n)$. Specifically, the power can be determined from $F_1\left(F_0^{-1}(1-\alpha/2)\right)=\gamma$, where $\alpha$ is the significance level and $1-\gamma$ is the power. A required sample size can be determined by increasing $n$ until power exceeds the desired amount.
\end{enumerate}


{\color{black}The only assumptions required for this sample size calculation to be valid (exceed design power) are that $\hat r'_a \ge r'_a$, $\hat\sigma_a^2 \ge \sigma_a^2$, and $\hat\kappa_a^2 \ge \kappa_a^2$, i.e., the population parameters are estimated conservatively, and that $\tau \ge 0$. The latter condition holds when the (additive) treatment effect function is not more variable than the conditional mean function of the outcome of either treatment group, e.g., when the effect is \textit{close enough} to constant. Practically speaking, one can ask: is there a set of hypothetical covariates for which the treatment effect varies more across covariate strata than the predicted outcome does across those same strata?
}

\section{Leveraging a prognostic model}\label{sec:prog_GLM}

{\color{black}
We now examine how historical data can be used to increase efficiency without incurring bias.

Suppose that at the design phase of the study we have access to a historical dataset of $\widetilde{n}$ patients containing covariate and outcome data, which we denote $\left(\widetilde{\W}, \widetilde{\Y}\right)$. 
Whereas we previously used the symbol $P$ to refer to the distribution of variables $(W, A, Y)$ in the trial population, we now change the definition to understand $P$ as the distribution of $(W, A, Y, D)$ in some broader population where $D=1$ indicates membership in the trial population and $D=0$ the population from which the historical dataset is drawn from. In this context, what we previously wrote as $P(W, A, Y)$ is now $P(W, A, Y|D=1)$ and our trial dataset is sample of size $n$ from this distribution. 
For this paper we assume that we only have access to historical \textit{control} data, i.e. $P(A=0|D=0)=1$. 
Given this knowledge, we consider the historical data to be a draw of size $\tilde n$ from the distribution $P(Y,W|D=0)$. 

Similarly, we redefine $\mu$ from $\mu(W,A) = \E[Y|W,A]$ to $\mu(W,A,D) = \E[Y|W,A,D]$. The former function is now $\mu(W,A,1)$. With this notation, we write the trial-specific treatment-specific means as $\Psi_a = \E[\mu(W,a,1)]$ and the trial-specific marginal effect $\Psi = r\left(\Psi_{1}, \Psi_{0}\right)$.
This encodes the fact that we are interested in the treatment effect \textit{in the trial population only}. The trial-specific effect still identifies its causal counterpart since treatment is randomized in the trial, i.e., $P(A|W,D=1) = 1/2$.
}

We define the \textit{prognostic score} as the expected observed outcome under the condition that the observation originates from the historical control data (where $A=0$): 
\begin{align}
    \begin{split}
        \rho\left(W\right) \coloneqq \E[Y \,|\, W,A=0, D=0].
    \end{split}
\end{align}
Let $\mathcal{F}$ represent a machine learning algorithm that links historical outcomes and covariates to a function $f:\mathcal{W} \to \mathcal{Y}$. The machine learning algorithm may include model selection and parameter tuning. The fitted function then estimates the conditional mean $\E[Y \, | \, W, D=0]$. Thus, we can use $\hat\rho = \mathcal{F}\left(\widetilde{\W}, \widetilde{\Y}\right)$ learned from historical data (with $\widetilde{n}$ observations) to estimate the prognostic score of a participant with covariates $w$. 

{\color{black}
If the control-arm RCT data and the historical data are sampled from the same distribution, i.e. $P(W,0,Y|D=1) = P(W,0,Y|D=0)$, then the prognostic score is also the control-arm conditional mean function in the trial, i.e. $\rho(w) \coloneq \mu(w,0,0) = \mu(w,0,1)$. This motivates us to treat the estimated prognostic score (as predicted on the trial observations) as a covariate in the trial analysis, since we expect it to be predictive of the outcome in the control arm at a minimum.
}

Specifically, our proposal is to use the GLM based marginal effect estimator as in \autoref{sec:est}, now including the prognostic score on the link scale, i.e., we include $g(\hat\rho(W))$ as an additional covariate in the design matrix. {\color{black}Including the prognostic score as a predictor on the link scale makes it so that the GLM simply predicts the prognostic score if its coefficient is set to one and all other coefficients to set to zero. In other words, it makes it easy for the GLM to simply ``pass through'' the prognostic score.} Again, we allow for any pre-specified adjustments of additional baseline covariates and interactions since the prognostic score enters like any other covariate.

Intuitively, the effectiveness of prognostic score adjustment in GLM models can be attributed to its ability to capture complex relationships between the outcome $Y$ and the covariates $W$. As discussed in \citet{HojbjerreFrandsen2024}, by constructing a prognostic score that explains a significant portion of the variation in $Y$, we can substantially reduce residual variance and enhance statistical power for linear models. The same holds for GLM models since such a prognostic score would lead to a reduction in $\kappa_a$ leading to smaller CIs and p-values. This is particularly advantageous when the relationship between $Y$ and $W$ can not be modelled by an GLM model or involves interactions among covariates. In such cases, traditional GLM adjustment methods may fail to capture these intricate patterns, leading to suboptimal adjustments. By leveraging machine learning techniques to develop the prognostic score on historical data, we can detect and model non-linear and subgroup effects, thereby capturing nuanced relationships that GLM models might miss. This approach not only mitigates the risk of overfitting by performing a sort of variable selection on historical data but also ensures that the adjustment is robust to the complexities inherent in the data. The prognostic score thus offers an opportunity to exploit the information present in all covariates without necessarily having to include a large number of main terms, splines, interactions, etc., as suggested by \citet{tukey1993tightening}. 

An illustrative example of this can be found in \citet{HojbjerreFrandsen2024}, where the authors demonstrate the superiority of prognostic score adjustment in scenarios where the relationship between $Y$ and $W$ cannot be adequately modeled using standard GLM methods. The success of the method obviously depends on the quality of the prognostic score; if the prognostic score is noisy (e.g., trained on a small dataset or a very different population), there will be no benefit. However, there will also be very little or no harm - even in very small samples. The only cost is one added degree of freedom in the GLM, and asymptotically, there is no cost at all.

\subsection{Efficiency}

Since the estimated prognostic score $\hat\rho$ is dependent on the historical data size $\widetilde{n}$, we can regard this a sequence of random functions $\{\hat\rho_{\tilde n}\}_{\widetilde{n}\in\mathbb{N}}$. 
Such a sequence is \textit{uniformly bounded} if there exists $K>0$ such that $P(\,|\hat\rho_{\tilde n}\, (W)| \le K)=1$ for all $\tilde{n}\in\mathbb{N}$. 
Secondly, we say, for $k>0$, that $\{\hat\rho_{\tilde n} \,(W)\}_{{\widetilde{n}}\in\mathbb{N}}$ \textit{converges in} $L^k$ towards the stochastic variable $\rho(W)$ if 
\begin{align}
    \lim_{{\widetilde{n}}\to \infty}\E_{W, \, \hat\rho_{\tilde n}}\left[\,|\hat\rho_{\tilde n}\,(W)-\rho(W)|^k\right]=0, 
\end{align}
denoted as $\hat\rho_{\tilde n}\,(W)\overset{L^k}{\longrightarrow} \rho(W)$. For convenience, we will sometimes refer to the sequence $\{\hat\rho_{\tilde n}\}_{{\widetilde{n}}\in\mathbb{N}}$ as just $\hat\rho_{\tilde n}$ or $\hat\rho$.

\citet{Schuler2020} show that using an ANCOVA model adjusting for an estimated prognostic score that converges in the $\L_2$ sense to the true conditional counterfactual mean of the control group in the current trial ($D=1$) under a homogeneous treatment effect will lead to an efficient estimate of the ATE {\color{black}when treatment is randomized}. In this Section we will show that including a prognostic score in a GLM will lead to the same type of local efficiency of the marginal effects estimator. We note here that it is a type of local efficiency due to the additional assumption of homogenenous treatment effect.

\begin{restatable}{theorem}{restatedtheorem}\label{thm:main}
Let $\P_n$ and $\P_{\widetilde{n}}$ be the empirical distributions of $n$ and $\tilde n$ draws from $P(W,A,Y|D=1)$ and $P(W,Y|D=0)$, respectively. 
Presume that the number of participants $n$ in the current trial increases such that $n=\mathcal{O}\left(\widetilde{n}\right)$.  Furthermore, let $\hat\rho_{\tilde n}$ be a uniformly bounded random function learned from the historical data $\P_{\widetilde{n}}$ obtained from $\widetilde{n}$.

Assume that the true treatment effect is additive on the link scale, i.e. $g\left(\mu(W, 1)\right) = \zeta + g\left(\mu(W, 0)\right)$ and that $\mathcal{W}$ and $\mathcal{Y}$ are compact. If the prognostic score converges to the control-arm conditional mean, i.e., $\left|\hat\rho_{\tilde n}\,(W) -\mu(W, 0)]\right|\overset{L^2}{\longrightarrow} 0$ as $\widetilde{n}\to\infty$, 
then the estimator obtained from the plug-in GLM procedure in \eqref{eq:estimator} suggested by \citet{RosenblumvanderLaan+2010} that uses $g(\hat\rho_{\tilde n}\,(W))$ as an additional covariate is consistent and efficient in the sense that it has the lowest possible asymptotic variance among all RAL estimators with access to $W$.
\end{restatable}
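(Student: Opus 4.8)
The plan is to reduce to an \emph{oracle} version of the procedure in which the true control-arm conditional mean $\mu(\cdot,0)$ is used as the extra covariate in place of $\hat\rho_{\tilde n}$, to show the oracle estimator is exactly efficient, and then to argue that substituting $\hat\rho_{\tilde n}$ leaves the limit unchanged. First I would verify that the oracle working GLM is \emph{correctly specified}. Its linear predictor contains an intercept, the treatment indicator $A$, the covariate $g(\mu(W,0))$ (a bounded function of $W$, hence admissible in the framework of \citet{RosenblumvanderLaan+2010}), and possibly further pre-specified terms; setting the intercept coefficient to $0$, the $A$-coefficient to $\zeta$, the $g(\mu(W,0))$-coefficient to $1$, and all remaining coefficients to $0$ yields the linear predictor $\zeta A + g(\mu(W,0))$, whose inverse link equals $\mu(W,0)$ when $A=0$ and, by the additive-link-scale hypothesis $g(\mu(W,1))=\zeta+g(\mu(W,0))$, equals $\mu(W,1)$ when $A=1$. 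Hence the truth $\mu(W,A)$ lies in the working mean model. Because the canonical-link expected log-likelihood is strictly concave \citep{R+vdL2009} its maximiser $\beta^\ast$ is unique, and since the GLM score $\E\!\left[X^\top\!\left(Y-g^{-1}(X\beta)\right)\right]$ vanishes at the coefficient vector just described (using $\E[Y-\mu(W,A)\mid W,A]=0$), that vector \emph{is} $\beta^\ast$; therefore $\mu^\ast(w,a)=\mu(w,a)$. This is exactly where the additive-effect hypothesis is used: without it the design forces a constant treatment effect on the link scale, and $\mu^\ast$ would in general differ from $\mu$.

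Given correct specification, I would invoke \citet{RosenblumvanderLaan+2010} --- together with the negative-binomial extension of Appendix~\ref{app:NB} --- to conclude that the plug-in estimator \eqref{eq:estimator} is regular and asymptotically linear with influence function $\phi^\ast_\Psi=r_0'\phi^\ast_0+r_1'\phi^\ast_1$ as in \eqref{eq:IF}, now with $\mu^\ast=\mu$, and that this influence function is efficient, so $\var(\phi^\ast_\Psi)$ equals the semiparametric efficiency bound under randomised treatment --- the smallest asymptotic variance attainable by any RAL estimator with access to $W$. Asymptotic linearity also delivers consistency. This settles the oracle case.

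It remains to transfer these conclusions from $\mu(\cdot,0)$ to the learned $\hat\rho_{\tilde n}$. I would condition on the historical sample $\P_{\widetilde{n}}$, so that $\hat\rho_{\tilde n}$ becomes a fixed, uniformly bounded function of $W$ and the trial-stage GLM is an ordinary M-estimator in a fixed bounded covariate; the regularity and asymptotic-linearity part of the analysis above then applies conditionally, giving a conditional central limit theorem with asymptotic variance $v_\infty^2(\hat\rho_{\tilde n})$, where $v_\infty^2(f)$ denotes that variance when $g(f(W))$ is the extra covariate. The crux is to show that $f\mapsto\beta^\ast(f)$, and hence $f\mapsto v_\infty^2(f)$, is continuous at $f=\mu(\cdot,0)$ with respect to the $L^2$ distance on the relevant compact range: compactness of $\mathcal{W}$ and $\mathcal{Y}$, uniform boundedness of $\hat\rho_{\tilde n}$, the regularity conditions guaranteeing a unique interior maximiser, and the Lipschitz behaviour of the canonical link $g$ on the compact set in play (which upgrades $L^2$ convergence on the outcome scale to $L^2$ convergence of $g(\hat\rho_{\tilde n}(W))$) jointly imply that the perturbed expected log-likelihood converges uniformly over $\{\beta:\|\beta\|_\infty\le b\}$, so its unique maximiser, the induced influence function, and its variance all converge. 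Since $|\hat\rho_{\tilde n}(W)-\mu(W,0)|\to0$ in $L^2$, this gives $\beta^\ast(\hat\rho_{\tilde n})\to\beta^\ast(\mu(\cdot,0))$ and $v_\infty^2(\hat\rho_{\tilde n})\to\var(\phi^\ast_\Psi)$ in probability; uniform boundedness supplies the uniform integrability to pass from the conditional to the unconditional statement, and $n=\mathcal{O}(\widetilde{n})$ guarantees $\widetilde{n}\to\infty$ whenever $n\to\infty$, so the two limits can be taken jointly. Combining yields $\hat\Psi\pto\Psi$ and $\sqrt{n}\,(\hat\Psi-\Psi)\Rightarrow\N\!\left(0,\var(\phi^\ast_\Psi)\right)$, which is the claimed consistency and efficiency.

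The main obstacle I expect is this last step: controlling the plug-in error of $\hat\rho_{\tilde n}$ when only $L^2$ consistency --- with no rate --- is available and two sample sizes interact. Conditioning on the historical data sidesteps Donsker-class and Neyman-orthogonality arguments and reduces the problem to continuity and uniform integrability, but carefully verifying that the unique interior maximiser $\beta^\ast(f)$ persists and varies continuously under small $L^2$ perturbations of the covariate function --- and that the induced influence functions converge in $L^2$, so that variances, not merely the estimators, converge --- is the technical heart of the argument.
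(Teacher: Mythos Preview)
Your proposal is correct and follows essentially the same route as the paper: the oracle step---setting coefficients to $(0,\zeta,1,0)$, checking the canonical-link score vanishes there, and invoking strict concavity for uniqueness---is precisely the paper's Theorem~\ref{theorem:true_prog}, and the continuity of $f\mapsto\beta^\ast(f)$ via uniform convergence of the expected log-likelihood over the compact parameter box is the paper's Lemma~\ref{lemma:convergence_beta}. The only difference is in the final reassembly: rather than condition on $\P_{\widetilde n}$, show $v_\infty^2(\hat\rho_{\tilde n})\to\var(\phi_\Psi)$, and then un-condition (which requires the uniformity-in-$\tilde n$ of the conditional RAL remainder that you correctly flag as the technical crux), the paper writes $\hat\Psi-\Psi=\tfrac1n\sum_i\phi_\Psi(O_i)+\tfrac1n\sum_i(\phi^*_\Psi-\phi_\Psi)(O_i)+o_P(n^{-1/2})$ and bounds the middle term directly by proving $\|\phi^*_\Psi-\phi_\Psi\|_{L^2}\to0$ (their Lemma~\ref{lemma:convergence_IF}), which sidesteps the joint-limit subtlety altogether since the $L^2$ bound involves only $\tilde n$ and the i.i.d.\ trial structure handles the $n^{-1/2}$ scaling.
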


The proof and supporting lemmas for this result can be found in Appendix~\ref{app:proof}. \autoref{thm:main} means that using the plug-in GLM method combined with prognostic score adjustment will result in the expected smallest valid CIs and p-values that are statistically possible to obtain with access to $W$ as long as there is an additive effect and the prognostic model is well-specified. 

{\color{black}
The condition that the treatment effect is additive on the link scale is equivalent to there being a constant conditional treatment effect, e.g. constant conditional odds ratio. This assumption can be eliminated, resulting in general efficiency, if we additionally have a prognostic score for $a=1$, i.e. $\rho_1(w) = \E[Y|W,D=1,D=0]$ such that $\left|\hat\rho_{\tilde n, 1}\,(W) -\mu(W, 1)]\right|\overset{L^2}{\longrightarrow} 0$ as $\widetilde{n}\to\infty$ that we include on the link scale along with interaction terms between treatment and covariates. In that case we can show using a similar argument to that of \autoref{theorem:true_prog} to conclude that $\mu^*(W, A)= \mu(W, A)$ and thus follow the same argument as for \autoref{thm:main} in \autoref{app:thm_main}. Since treatment arm data are rarely available in any meaningful quantity prior to the trial we restrict ourselves to having only control-arm historical data and prognostic scores in this paper.
}

We have demonstrated that adding a prognostic score under specific conditions leads to asymptotic efficiency in our estimates. This is a best case scenario. What happens when we relax the assumption that the prognostic score converges to the true conditional mean functions? In such cases, will adding the additional covariate always yield an efficiency gain? The answer is no, but in what follows we formalize under what scenarios it is impossible to hurt performance, at least asymptotically.

\citet{Schuler2020} showed that when the treatment probabilities are equal ($\pi_1 = \pi_0$), or when there is a constant treatment effect, including covariates in the \textit{linear} regression model never decreases the asymptotic efficiency of the ATE estimate compared to the unadjusted estimator. However, \citet{FREEDMAN2008180} show that for the logistic regression with $\pi_1 \neq \pi_0$ adjusting for additional covariates may hurt the asymptotic variance. 

{\color{black}
In our setting we formalize this problem by considering two nested regression models and their respective associated point estimates and variances when used to estimate marginal effects. One example of such a nesting is two regression models where the larger one has one additional covariate relative to the smaller. In our setting one may imagine that the additional covariate is the prognostic score, but the theorem is agnostic to this.
}

\begin{restatable}{theorem}{nested}\label{thm:nested}
Let $\widetilde\Psi$ and $\overline\Psi$ be estimators with IFs $\widetilde\phi$ and $\overline\phi$ indexed by functions $\widetilde\mu$ and $\overline\mu$ respectively as described by the construction in \autoref{eq:IF_gen}. If $\widetilde\mu$ and $\overline\mu$ are minimizers of population mean-squared error in nested models $\overline{\mathcal M} \subset \widetilde{\mathcal M}$ with $\widetilde{\mathcal M}$ a closed subspace of $\mathcal L_2$, then the asymptotic variance of $\widetilde\Psi$ is less than that of $\overline\Psi$.
\end{restatable}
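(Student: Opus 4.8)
The plan is to split the influence function of each estimator into the efficient influence function for $\Psi$ (whose variance is the model‑free efficiency bound) plus an orthogonal ``excess'' term, and then to reduce the comparison of excess variances to the elementary Hilbert‑space fact that a fixed vector has a larger projection onto a larger subspace. Concretely, write $\mu_0(w,a)=\E[Y\mid W=w,A=a]$ and, following \eqref{eq:IF_gen}, let $\phi_\mu=\sum_{a\in\{0,1\}}r'_a\big[\tfrac{1_a(A)}{\pi_a}(Y-\mu(W,a))+\mu(W,a)-\Psi_a\big]$ be the influence function attached to a candidate conditional‑mean function $\mu$, with $r'_a$ evaluated at $(\Psi_1,\Psi_0)$. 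A one‑line algebraic identity gives $\phi_\mu=\phi_{\mu_0}+b_\mu$, where $\phi_{\mu_0}$ is exactly the efficient influence function for $\Psi$ in the randomized trial (the augmented‑IPW influence function composed with the delta method) and $b_\mu=\sum_a r'_a\big(\tfrac{1_a(A)}{\pi_a}-1\big)g_a(W)$ with $g_a:=\mu_0(\cdot,a)-\mu(\cdot,a)$. Using randomization ($A\indep W$, $\E[A]=\pi_1$) one checks $\E[b_\mu\mid W]=0$, and the tower property gives $\E[Y-\mu_0(W,A)\mid W,A]=0$; together these make $b_\mu$ orthogonal in $L_2(P)$ to $\phi_{\mu_0}$. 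Hence $\var(\phi_\mu)=\var(\phi_{\mu_0})+\var(b_\mu)$, the first term does not depend on $\mu$, and (since both estimators are RAL, as in \autoref{sec:est}) the theorem reduces to showing $\var(b_{\widetilde\mu})\le\var(b_{\overline\mu})$.

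Next I would simplify $b_\mu$. Because $A\in\{0,1\}$, $\tfrac{1_0(A)}{\pi_0}-1=\tfrac{\pi_1-A}{\pi_0}$ and $\tfrac{1_1(A)}{\pi_1}-1=\tfrac{A-\pi_1}{\pi_1}$, so $b_\mu=(A-\pi_1)\,\psi_\mu(W)$ with $\psi_\mu:=\tfrac{|r'_0|}{\pi_0}g_0+\tfrac{r'_1}{\pi_1}g_1$; here the monotonicity condition \eqref{eq:asump_r} ($r'_0\le 0\le r'_1$) is precisely what makes both weights nonnegative, so $\psi_\mu$ is a genuine conic combination of the within‑arm residuals. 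By $A\indep W$ we get $\var(b_\mu)=\var(A)\,\E[\psi_\mu(W)^2]=\pi_0\pi_1\,\lVert\psi_\mu\rVert_{L_2(W)}^2$, so it remains to prove $\lVert\psi_{\widetilde\mu}\rVert\le\lVert\psi_{\overline\mu}\rVert$. Introduce the linear ``arm‑averaging'' operator $Tm:=\tfrac{|r'_0|}{\pi_0}m(\cdot,0)+\tfrac{r'_1}{\pi_1}m(\cdot,1)$, so that $\psi_\mu=T\mu_0-T\mu$, and recall that the population‑MSE minimizer over a model is the $L_2$ projection, $\mu=\Pi_{\mathcal M}\mu_0$. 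When the model has the product form $S\times S$ for a closed subspace $S\subseteq\mathcal L_2$ of covariate functions containing the constants — i.e.\ each covariate transformation enters the design together with its treatment interaction, which is the situation the prognostic‑score procedure of \autoref{sec:prog_GLM} is built to satisfy — the projection acts arm‑by‑arm and $T$ carries $\mathcal M$ into $S$, so with the fixed function $q:=T\mu_0$ we obtain $T\Pi_{\mathcal M}\mu_0=\Pi_S q$ and therefore $\psi_\mu=q-\Pi_S q=\Pi_{S^\perp}q$, whence $\lVert\psi_\mu\rVert^2=\lVert q\rVert^2-\lVert\Pi_S q\rVert^2$. Since $\overline{\mathcal M}\subset\widetilde{\mathcal M}$ corresponds to $\overline S\subseteq\widetilde S$, we have $\lVert\Pi_{\widetilde S}q\rVert\ge\lVert\Pi_{\overline S}q\rVert$, hence $\lVert\psi_{\widetilde\mu}\rVert\le\lVert\psi_{\overline\mu}\rVert$, and combining with the first paragraph gives $\var(\widetilde\phi)\le\var(\overline\phi)$; consistency of both plug‑in estimators (so ``asymptotic variance'' is meaningful) follows from $1\in S$ and randomization via \eqref{eq:consistency}.

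The main obstacle is making that last step legitimate, and it is really two linked issues. First, the collapse to a single squared norm uses \eqref{eq:asump_r} essentially: with indefinite signs of $r'_0,r'_1$ the excess variance is an indefinite quadratic form in $(g_0,g_1)$ rather than a norm, and adjustment can then strictly increase variance — this is the phenomenon \citet{FREEDMAN2008180} exhibits for logistic regression with $\pi_0\neq\pi_1$ — so the write‑up must make visible exactly where the signs enter. Second, the identity $\psi_\mu=\Pi_{S^\perp}q$ requires $\Pi_{\mathcal M}$ to commute with $T$, which holds for product‑form models but fails for a general closed subspace of $L_2(W,A)$ that treats the two arms asymmetrically (e.g.\ adding a covariate's treatment interaction without its main effect); for such nestings one can construct examples with $\var(\widetilde\phi)>\var(\overline\phi)$ even when $\pi_0=\pi_1$. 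I would therefore state the ``shared adjustment set'' (product) structure explicitly among the hypotheses — or, as alternatives that also suffice, equal allocation together with a near‑constant conditional treatment effect — and verify that the prognostic‑score design qualifies because the score (and any interactions) enters the design ``like any other covariate,'' pre‑specified alongside. The remaining steps (the $\phi_\mu=\phi_{\mu_0}+b_\mu$ identity, the orthogonality computations, and monotonicity of projection norms under inclusion of subspaces) are routine Hilbert‑space bookkeeping.
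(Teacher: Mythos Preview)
Your approach is essentially the paper's own: decompose each influence function as the efficient influence function plus an orthogonal excess term, then compare excess norms via nested $L_2$ projections. Your operator $T$ is the direct generalization of the paper's $\omega(W)=\mu(W,1)+\mu(W,0)$; in fact the paper's Appendix proves the result only for the ATE with $\pi_0=\pi_1=1/2$ (this restriction is stated at the top of the appendix though not in the theorem), whereas your version handles general $r$ and general $\pi$. Your explicit product-form hypothesis $S\times S$ is exactly what the paper is assuming when it writes ``$\overline\mu(\cdot,a)$ are the population minimizers \dots\ in some class $\overline{\mathcal M}$ which is a closed subspace of $\mathcal L_2$'' and then projects arm by arm; you are right that without it the commutation $T\Pi_{\mathcal M}=\Pi_S T$ fails and counterexamples exist.

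One correction: your first ``obstacle'' about the sign condition \eqref{eq:asump_r} is misplaced. Once you have $b_\mu=(A-\pi_1)\psi_\mu(W)$, the quantity $\var(b_\mu)=\pi_0\pi_1\|\psi_\mu\|^2$ is a squared $L_2$ norm regardless of the signs of $r'_0,r'_1$; it is never an indefinite quadratic form in $(g_0,g_1)$. The operator $Tm=-\tfrac{r'_0}{\pi_0}m(\cdot,0)+\tfrac{r'_1}{\pi_1}m(\cdot,1)$ is linear for any constants, so it commutes with the arm-wise projection and $\psi_\mu=\Pi_{S^\perp}q$ holds without any sign hypothesis. The Freedman phenomenon you cite is driven by the model \emph{not} being product form (main effects without interactions), not by the signs of $r'_a$; under your $S\times S$ hypothesis the conclusion holds for arbitrary $r'_a$ and arbitrary $\pi$. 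So drop the sign discussion and keep the product-form caveat, and the argument is clean.
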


The proof is in Appendix~\ref{app:ATE_nested}. This result indicates that if the regression coefficients are estimated by minimizing the mean-squared error, then incorporating the prognostic score does not increase the asymptotic variance. This aligns with the findings of \citet{Schuler2020} in the context of linear models, where maximizing the likelihood implies minimizing mean-squared error. Conversely, when we employ maximum likelihood (rather than least squares) to fit the GLM, we cannot guarantee that an additional covariate will not lead to an increase in asymptotic variance in certain scenarios. Nonetheless, adding even a non-informative covariate did not meaningfully increase empirical variance in any of our simulations. In practice, we suspect it is challenging to identify a scenario in which including a predictive covariate hurts the asymptotic variance, particularly when using the GLM plug-in approach. 

\subsection{Practical implementation of prognostic score adjustment for GLMs}\label{sec:rec}
In this subsection, we will discuss the practical implementations of the prognostic score method specifically tailored for GLMs. The recommendations provided here are closely aligned with those presented by \citet{HojbjerreFrandsen2024}. However, for the sake of completeness and to emphasize the practical aspects unique to GLMs, we restate them here. For a more in-depth understanding, readers are encouraged to refer to \citet{HojbjerreFrandsen2024} as well as the guidelines \cite{Handbook} by EMA on prognostic score adjustment for linear models. This discussion aims to provide clear guidance on how to effectively apply the prognostic score method in the context of GLMs, ensuring that practitioners can leverage its benefits in their analyses.

\noindent{}\textbf{Prospective power using a prognostic model}\\
The same powering strategy as in \autoref{sec:power} can be applied when using a prognostic model. The only difference or simplification comes with estimating $\kappa_0^2$. Since $\hat\rho_{\tilde n}(W)$ approximates the true conditional mean $\mu(W, 0)$, in the limit the GLM should simply pass this prediction through. Thus, $\mu^*(W, 0)\approx \hat\rho_{\tilde n}(W)$ and we might estimate 
\begin{align}
\begin{split}
    \hat\kappa_0^2=\frac{1}{n'}\sum \left(\widetilde{Y}_i-\hat\rho_{\tilde n}(\widetilde{W}_i) \right)^2,
\end{split}
\end{align}
where the historical data have been separated into a data set for training the prognostic model and one for testing. This new $\hat\kappa_0^2$ is the test data set MSE of the prognostic model. Since the GLM that includes the prognostic model as an additional covariate has more freedom to fit the data, this estimate will always be larger than the alternative using $\hat{\mu}(W, 0)$. Thus, this estimate will be conservative under the assumption that the conditional mean does not change from the historical population to the current trial population. 

\noindent{}\textbf{Curation of historical data}\\

While the theoretical optimality of the prognostic score method is well-established, its practical application requires careful consideration. To ensure the method's effectiveness, it is crucial to have a substantial amount of high-quality historical data that is independent of the new study. Ideally, this historical data should be representative of the new study population and the type of data being collected, adhering to Pocock's criteria for prognostic model adequacy \cite{Pocock1976}.

{\color{black}
While having historical data that is from a similar population as the trial is beneficial, it is not strictly necessary for improved efficiency. As long as the prognostic model is predictive to any extent in the trial population, there will be some large-sample power gain over an unadjusted analysis. However, the power gained is less if the model is less predictive. 

How predictive the model turns out to be is a function of both the amount of historical data and the population(s) they are drawn from. Large historical sample sizes drive down the variance (of the estimated prognostic function) and can thus increase out-of-sample predictive performance. But if the additional data comes from the wrong population it increases the bias (of the prognostic model, not of the trial effect estimate) and thus \textit{decreases} predictive performance. Therefore, when the historical data is not from the correct population there is a bias-variance tradeoff in deciding how much of it to use. 

Since adding data to predictive model training usually has diminishing returns after a large enough sample size, we recommend being selective only if the amount of historical data is large (e.g. tens of thousands of observations). Historical data can be matched to the proposed trial by comparing inclusion/exclusion criteria, looking at covariate distributions, and comparing metadata such as time and site. Data-driven methods like test-then-pool or survey weighting (during training) can also be used if some trial pilot data or trusted matched historical data is available. On the other hand, if there is little historical data, (e.g. low hundreds), then being selective could hurt the predictive ability of the model more than whatever bias is incurred by leaving the out-of-population data in. 

The impact on bias and variance is measurable to the extent that one can compare cross-validation error of the prognostic model trained with and without the data considered for removal, within and outside of these data. If performance increases substantially when adding the less well-matched data compared to that on the smaller sample of historical data that are strongly matched to the trial, then that could be evidence of a favorable bias-variance tradeoff, and vice-versa. This is also contingent on having a well-matched subset of historical data that one is sure is free from substantial \textit{unmeasured} differences from the trial population that would also affect patient outcomes. 

However, if the subset of well-matched data is too small, the error in measuring the performance itself may be too large to understand the impact one way or the other. 
Sufficient historical data are needed to partition them into training and test datasets (or run cross-validation), which are used to estimate population and prognostic model performance parameters for sample size determination. 
}

Lastly, the historical and current data should be in the same format to facilitate data collection and structuring for prognostic model building. Allocating sufficient resources early in the trial process to organize the historical data into a comprehensive subject-level dataset is essential. Large integrated databases, potentially involving data sharing between pharmaceutical companies, may be necessary to achieve this. {\color{black}Care must be taken during data harmonization to preserve the semantic meaning of different variables if reasoning about population differences depends on that meaning.}

{\color{black}
Ultimately, while we can provide some guidance on how to select historical data, there is no substitute for good judgment and expertise. Prognostic adjustment, and especially sample size calculation based on prognostic adjustment, should be undertaken only under the supervision of trained statisticians familiar with both machine learning and trial analyses and domain experts who understand the data collection and clinical course of care.
}

\noindent{}\textbf{Sensitivity of sample size estimation}\\
{\color{black}
If prognostic score adjustment is only used as the analysis but not accounted for in sample size calculation, the worst that can happen with large enough samples is that the study is properly powered, but not overpowered. However, if the trial is powered based on the assumption of good prognostic model performance and the model under-performs, then the trial may be \textit{under}powered. Therefore the guidance in the preceding section should be carefully adhered to if \textit{powering} a study with prognostic score adjustment. 

Moreover, it may be sensible to calculate the required population values $(\sigma, \kappa)$ based on a smaller subset of well-matched data, even if the prognostic model is trained on larger datasets. Here there is also a bias-variance tradeoff: if the subset of ``matched'' data is too small, e.g. for estimating the variance of the out-of-sample $\kappa$, then small-sample variance could be worse than the bias incurred from using a less-curated dataset to estimate these parameters. Thankfully, however, since the population parameters are averages, their estimates have $\sqrt{n}$-convergence. Therefore in most settings even low hundreds of samples should be sufficient to drive the variance down.
}

Sensitivity analyses for the sample size estimation can also be conducted by inflation of $\sigma_a^2$ and $\kappa_a^2$. It should however be noted that the sample size is already conservatively estimated even when the model is misspecified since it is based on the upper bound of the asymptotic variance in \eqref{eq:var_bound}. Similar to the guidelines in \cite{Handbook} a sensitivity analysis using an inflation parameter above 1 should be considered if one of the following applies: 1) if only one test data set similar to the new trial is available 2) if there are changes in the standard of care 3) if the prognostic model includes covariates thought to be highly predictive of heterogeneity in treatment. If the prognostic model incorporates such covariates, the expected mean squared error $\kappa_a^2$ might be larger (or smaller) for the new treatment group compared to the control group. In that case one might consider inflating $\kappa_1^2$ but not $\kappa_0^2$. Considerations on parameter choices for the sample size determination must be prespecified in the SAP.

All in all, use of a non-representative external dataset can lead to an underpowered study, which is unethical as it unnecessarily exposes participants to potential harm. This risk falls on the sponsor. The responsibility for assessing the representativeness of the external dataset lies with the sponsor, who must also convince the ethics committee that the study is not underpowered due to the use of linear prognostic score adjustment. If a suitable external dataset is not available, it is not advisable to \textit{power} the trial for prognostic adjustment, but it is still sensible to use prognostic adjustment to \textit{analyze} the trial.

\noindent{}\textbf{Prognostic score construction and adjustment}\\
We recommend using an adaptive, ensemble machine learning method (``super-learning'') \cite{superlearner} for constructing the prognostic score. In this approach, flexible models (e.g., splines, regression trees) can be combined with simple linear models, offering robustness and flexibility. Including simpler models in the ensemble helps to avoid overfitting. {\color{black}Guidance on appropriate use of ensemble learning may be found in \citet{phillips2023practical}.}

It is also advisable to include missingness indicators as inputs for the prognostic model and to handle covariate missingness as in any other RCT analysis. If there is different patterns of missingness between the historical data and new data, the inclusion of missingness indicators as an addition to imputation of the missing data might increase the accuracy of the prognostic model in prediction of outcomes in the new trial.

Resources should be allocated early to build and validate the model following good machine learning practices \cite{FDA_ml}. Model selection and tuning parameters must be prespecified in the SAP, and the prognostic model should be finalized before unblinding. Data scientists providing the prognostic scores should be blinded to the randomization code.

Other adjustment covariates and the choice of variance estimator should be specified in the SAP, in line with regulatory recommendations \cite{FDA_cov}. Linear adjustment for a prognostic score can be combined with multiple imputation using the estimand framework \cite{ICH_est}, following Rubin's rules. This should also be pre-specified in the SAP. The sponsor should conduct sensitivity analyses for recruitment bias, complete losses to follow-up, and treatment compliance, as in any other trial.

\section{Simulation study}\label{sec:sim}

{\color{black}
The primary aim of these simulations is to assess the extent to which GLM prognostic score adjustment increases efficiency (smaller standard errors) relative to standard covariate adjustment or unadjusted estimates. Second, we seek to assess whether GLM prognostic score adjustment preserves coverage, which assesses both the bias and the accuracy of standard errors. Lastly, we investigate whether our proposed power calculation is effective at achieving design power when the historical data are indeed drawn from the same population as the trial data, and to what extent it fails otherwise.

In these studies we simulate a count outcome and presume the estimand of interest is the rate ratio $\Psi_1/\Psi_0$. 
We set up data-generating processes for both the trial and historical populations, draw historical and trial data, and use these data to generate estimates of the rate ratio according to unadjusted, covariate-adjusted, and prognostically-adjusted estimators. We then compare the estimates to ground truth to assess various properties. We test across multiple data-generating processes to see how robust the method is against conditions designed to compromise it, especially changes between the historical and trial populations. Details are in the sections below and code for these simulations can be found \texttt{\href{https://github.com/NNEHFD/GLM_prog}{here}}.
}

\subsection{Data generation}\label{sec:setup}
{\color{black}
The simulation study is based on the structural causal model in \eqref{eq:DPG}. We will first present the data-generating process and then explain how we use it to address our simulation aims. The scaffold of the data-generating process is
\begin{align}\label{eq:DPG} 
\begin{split}
    U &\sim \mathcal N(u_d, 1) \\
    W_1 &\sim \mathcal N(w_d, 1) \\
    W_{2, \dots 5} &\sim \mathcal N(0,1) \\
    A &\sim 
    \begin{cases}
    \text{Bern}(0.5) & d = 1\\
    0 & d = 0
    \end{cases} \\
    Y(a) &\sim \text{Pois}(m(U, W, a)) \\
    Y &= Y(A).
\end{split}
\end{align}
Here $d$ indicates whether we are simulating from the trial population ($d=1$) or historical population ($d=0$). The mechanism for $A$ ensures randomization in the trial and $A=0$ deterministically in the historical data. The mean $w_d$ of the first covariate, the mean $u_d$ of the unmeasured covariate $U$, and the nature of the $(U,W,A)$-conditional mean $m$ are parameters that we modify to create different scenarios of interest, as discussed below.

We use the notation $m(U,W,A) = E[Y|U,W,A]$ to distinguish this function from $\mu(W,A,D) = E[Y|W,A,D]$. The latter conditional mean is what matters. Our prognostic score will attempt to model the historical control-arm predictive mean $\mu(W, A=0,D=0)$: if this is a good proxy for the trial means $\mu(W, A=0,D=1)$ and $\mu(W, A=1,D=1)$ then we expect to see a nice performance boost adjusting for the estimated prognostic score. The function $m$ (in conjunction with the unobserved variable $U$) is just a tool that helps us generate effect heterogeneity and structural changes between the historical and trial distributions. 

In our simulations, the dependence of the $W$-conditional mean $\mu(W,a,d)$ on the population $d$ is through the distribution $U|D$ and the function $m$. Although $m$ is fixed, if $U|D=1$ is different than $U|D=0$, the mean of $Y(a)$ conditional only on $W,D$ changes if $U|D$ changes. Therefore modifying the \textit{distribution} of the unobserved variable from the historical to trial populations induces changes in the \textit{functional relationships} between the outcome and other observed variables in the two populations. This reflects the fact that ``probabilistic'' relationships actually arise from structural, mechanistic relationships involving unmeasured variables. Moving between populations, the mechanisms of disease, laws of physics, etc. do not change, but the input variables might. Thus, as far as distributions over observables in a structural model are concerned, there is no difference between changing the distribution of exogenous variables and changing the structural equations themselves. 

In all of our simulations, we set the control-arm conditional mean to
$$
m(U, W, 0) = 0.1 + 2h(W_1+1) + W_2^2 + h(W_1 W_4) + |U|h(W_3+2)
$$
where $h$ is the hinge function at $h(z) = (z)\times 1(z > 0)$. We use the hinge function as a convenience to obtain partly linear relationships without suffering negative conditional means (our outcome is a positive integer so the mean must be positive). Since the covariates are standard normals (mainly in the range of $[-2,2]$), setting $t$ on the range of 2 in $h(W+t)$ makes such a term predominantly linear. Smaller values of $t$ induce more nonlinearity.

This choice of $m$ has some moderate nonlinearities and an interaction. The percent of variance of $Y$ explainable by $W$ according to this true conditional mean is about $0.5$, while the amount that is explainable by linear functions of $W$ is only about $0.3$ (based on running a regression in a very large sample). This was by design, to reflect a linear $R^2$ within the range one would expect for a trial and to leave some room for improvement by modeling nonlinearities.

Since $U$ enters $m$ in its own term and is independent of $W$, we can marginalize it out to see
$$
\mu(W,0,d) =  0.1 + 2h(W_1+1) + W_2^2 + h(W_1 W_4) + \E[|U|]h(W_3+2)
$$
where $\E[|U|]$ depends on $u_d$, the mean of $U$. Therefore we see that shifting the unobserved covariate makes the value of the coefficient on the $W_3$ term different between the trial and historical populations.

In the treatment arm, we set
$$
m(u, w, 1) = g^{-1}(\zeta + g(m(u, w, 0) + 2\eta h(W_4))
$$
where $g(z) = \log(z)$ is our link function and $\zeta$ controls the size of the treatment effect. $\eta$ is a binary variable that turns heterogeneity of effect on or off. The heterogeneity is on the scale of the other terms in the conditional mean function so it constitutes a moderate amount of heterogeneity.
When $\eta=0$, (link-scale additive effect) the rate ratio is given exactly by $e^\zeta$.

Using this scaffold, we define three trial scenarios shown in \autoref{tbl:trial-scenarios} and five modifications for the historical DGP for each shown in \autoref{tbl:hist-scenarios}. This is described in further details below.

\subsubsection{Trial Scenarios}

In our first trial scenario we encode a null effect in order to be able to compare the type \RNum{1} error rates of each estimator. For this scenario, in the trial DGP we encode the covariate means $u_1 = w_1 = 0$, $\eta = 0$ (no heterogeneity), and $\zeta=0$ (null effect; rate ratio equals one). 

In our second trial scenario we encode a constant (link-additive) effect, since our theory predicts that efficiency gains may be higher under constant effects and sample size calculations more conservative. For this scenario, in the trial DGP we encode the covariate means $u_1 = w_1 = 0$, $\eta = 0$ (no heterogeneity), and $\zeta=0.2$, which corresponds to a rate ratio of 1.22. 

In our third and final trial scenario we encode a moderate heterogeneous effect in order to challenge the prognostic adjustment method. For this scenario, in the trial DGP we encode the covariate means $u_1 = w_1 = 0$, $\eta = 1$ (heterogeneity present). We set $\zeta=0.057$ (which in this scenario also corresponds to a rate ratio of 1.22) in order to keep things as similar as possible to the additive scenario, modulo the heterogeneity. Under heterogeneity, $\zeta$ does not correspond to the rate ratio in the same way as before due to the extra term in the treatment-arm link function.

\begin{table}[htbp]
\centering
\caption{Trial scenarios used in the simulation study.}
\label{tbl:trial-scenarios}
\begin{tabular}{lccccc}
\hline
Scenario & $u_1$ & $w_1$ & $\eta$ & $\zeta$ & Rate ratio \\
\hline
Null effect           & 0 & 0 & 0 & 0    & 1.00 \\
Additive effect     & 0 & 0 & 0 & 0.2  & 1.22 \\
Heterogeneous effect                & 0 & 0 & 1 & 0.057 & 1.22 \\
\hline
\end{tabular}
\end{table}

\subsubsection{Modifications in Historical DGPs}

For each trial scenario, we also consider five different historical data-generating processes in order to understand how different estimators behave when the prognostic model is compromised. In practice, a prognostic score could be compromised by many different mechanisms: the distribution of covariates could be different between trial and historical populations, the relationship between the outcome and covariates could be different in different populations, the definitions of covariates and outcome could change from one dataset to another, a key covariate could be missing in the historical data, missing data in the historical sample could be imputed badly, the prognostic model could be misspecified, etc. Individually simulating each of these scenarios explicitly would be overwhelming and not very useful because they all result in the same thing: degraded predictive performance of the prognostic score in the trial population. Moreover, from a mathematical perspective, all of these real-world scenarios abstract to different distributions for the trial and historical data, regardless of the nature of the changes. Therefore in our simulations we focus only on explicit changes in covariate distributions (which can also change structural relationships when covariates are unmeasured). To capture the worst-case possible effects of any of these failure modes we include an estimator where the prognostic score is replaced with pure noise -- more on that below. 

Our historical data-generating process are based on the trial data-generating processes. For each trial scenario, we let $\eta$ and $\zeta$ be the same as in the trial data-generating process, i.e. these do not depend on $d$. However, we modify the historical covariate means $u_0$ and $w_0$ to be closer or further to their trial counterparts $u_1$ and $w_1$. As previously discussed, larger discrepancies between the historical and trial data-generating processes should diminish (but not necessarily eliminate) the efficiency gains of prognostic adjustment and also make sample size calculations unreliable. The five scenarios are given in \autoref{tbl:hist-scenarios}.

\begin{table}[htbp]
\centering
\caption{Historical data-generating process scenarios.}
\label{tbl:hist-scenarios}
\begin{tabular}{lcc}
\hline
Scenario & $u_0$ & $w_0$ \\
\hline
No shift & 0 & 0 \\
Small unobserved shift & 1.5 & 0 \\
Small observed shift   & 0   & 1.5 \\
Large unobserved shift & 3   & 0 \\
Large observed shift   & 0   & 3 \\
\hline
\end{tabular}
\end{table}

\subsection{Estimators}

We considered six different estimators of the rate ratio from the trial data, all of which can be described by different sets of adjustment covariates used in a GLM. That is, all of our estimators are found by the framework in \autoref{sec:est}. The only component that changes is the covariate set used to estimate the regression $\hat\mu$. The adjustment sets are described in \autoref{tbl:adjustment-sets}.


\begin{table}[htbp]
\centering
\caption{Adjustment sets considered in the simulation study.}
\label{tbl:adjustment-sets}
\begin{tabular}{p{0.22\linewidth} | p{0.44\linewidth} p{0.30\linewidth}}
\toprule
\bfseries Adjustment set & \bfseries Description & \bfseries Motivation \\
\specialrule{0.8pt}{3pt}{6pt}
None
& Unadjusted estimator.
& Simplest baseline. \\
\specialrule{0.4pt}{2pt}{2pt}
Covariates
& GLM adjusted for covariates $W$ (main terms).
& A strong, sensible baseline, commonly used. \\
\specialrule{0.4pt}{2pt}{2pt}
Noise + covariates
& GLM adjusted for $W$ and an additional noise variable.
& The worst-case setting for prognostic adjustment: when the prognostic model has no information. In practice, this could occur either because of bad specification of the prognostic model or mismatch between trial and historical populations, including not having access to key covariates in the historical population. \\
\specialrule{0.4pt}{2pt}{2pt}
Oracle + covariates
& GLM adjusted for $W$ and $\mu(W, A=0, D=1)$, the true control-arm conditional mean in the trial. We call this the ``oracle'' prognostic score.
& The best-case setting for prognostic adjustment: the prognostic function is known a priori without need of external data. \\
\specialrule{0.4pt}{2pt}{2pt}
Prognostic score
& GLM adjusted only for $\hat\rho(W)$, the prognostic score estimated from the external data.
& A variant of prognostic adjustment that is sensible if one is confident in the prognostic score. \\
\specialrule{0.4pt}{2pt}{2pt}
Prognostic score + covariates
& GLM adjusted for covariates $W$ and $\hat\rho(W)$, the prognostic score estimated from the external data.
& Our proposal for prognostic adjustment; the prognostic score is estimated from external data and the adjustment set includes covariates to hedge against a poor prognostic score. \\
\specialrule{0.9pt}{0pt}{0pt}
\end{tabular}
\end{table}

For adjustment in all estimators we used a GLM with log link and assuming a negative binomial outcome with dispersion parameter fixed to three. We did this so that the GLM would be mildly misspecified (the true outcome distribution is Poisson) but in pilot testing we saw essentially no difference in results for different values of the dispersion parameter, which makes sense since it minimally affects the fitting of the conditional mean. In all cases we computed standard errors according to \autoref{eq:asympvar} with 10-fold cross-fitting, although our results were essentially identical without the cross-fitting (see Appendix~\ref{app:sim_dpg_cv_no}).

In all cases we used a multivariate adaptive regression spline (MARS) learner \cite{Friedman1991-qh} to fit the prognostic score from the external data. We used the \texttt{tidymodels} wrapper for the \texttt{earth} package, setting \texttt{prod\_degree} to 3 and \texttt{num\_terms} to 50. We chose a single, fast-fitting learner rather than an ensemble so that the simulations could be easily replicated and run quickly on a single laptop. To counteract the possible loss in performance from using a single learner rather than an ensemble we deliberately chose an expressive learner that would be well-enough suited to learning the kinds of conditional means we encoded in the DGP: since we use mostly hinge functions, we expect MARS to be perform quite well. That said, the MARS learner is actually still slightly misspecified given the square term that appears in the true outcome regression.

To generate the ``noise'' for the noise + covariates estimator, we took the vector of in-trial predicted prognostic score values $\hat\rho(W)$ and shuffled the indices. This preserves the marginal distribution of the prognostic score while destroying any predictive power.

\noindent{}\textbf{Power Calculation}\\

Before estimating the rate ratio from trial data in each of our experimental runs, we first fit the prognostic score and use this to perform several power calculations according to the method described in \autoref{sec:power}. We calculated the prospective power for each of the estimators in \autoref{tbl:adjustment-sets}. The only difference in the power calculations was the value of $\hat\kappa_0$ used. For the unadjusted estimator, we used $\hat\kappa_0 = \hat\sigma_0$, the marginal standard deviation of the outcome in the external data. For the covariate adjusted estimator we used the cross-validated RMSE of a GLM (same settings as described above) fit to predict the outcome from the (main-terms) covariates based on the external data. For the prognostically adjusted estimators we used the cross-validated RMSE of the fit prognostic score $\hat\rho$ on the external data and for the oracle prognostically adjusted estimator we used the empirical RMSE of the oracle prognostic score $\mu(W,0,1)$ on the external data.

For all power calculations we used the true effect size as the target effect size. We used a desired type \RNum{1} error rate of 0.05 and a target power of 0.8.

\subsection{Experiments}

Given a combination of trial scenario and historical modification we first drew a historical dataset of size 2500 and estimated the prognostic score and the required sample size to hit 80\% power. We then drew trial datasets of sizes ranging from 100 to 400 and applied each of our estimators to each of the trial datasets. These sample sizes are to reflect a scenario where there is on the order of ten times more external data than trial data. We recorded the point estimates and estimated standard errors for each estimator. We repeated this process 1000 times for each combination of trial scenario and historical modification. 

\subsection{Results}

\noindent{}\textbf{Efficiency}\\

Our first question is whether prognostic adjustment increases efficiency relative to a baseline analysis where only (main terms) covariates are adjusted for, and how this changes across scenarios. To assess this, we computed the relative estimated efficiency of each estimator relative to the covariate adjusted GLM by dividing the estimated standard error for each estimator in each replicate of each scenario by the estimated standard error for the standard covariate adjusted estimator. This metric reflects the amount by which the confidence interval for each estimator would be smaller (or larger) than that of the covariate adjusted estimator. 
The relative efficiencies are shown in \autoref{fig:se} for $n = 250$ (trial sample size exactly one-tenth that of the external data).

\begin{figure}[!ht]
    \centering
    \includegraphics[width=1\textwidth, clip]{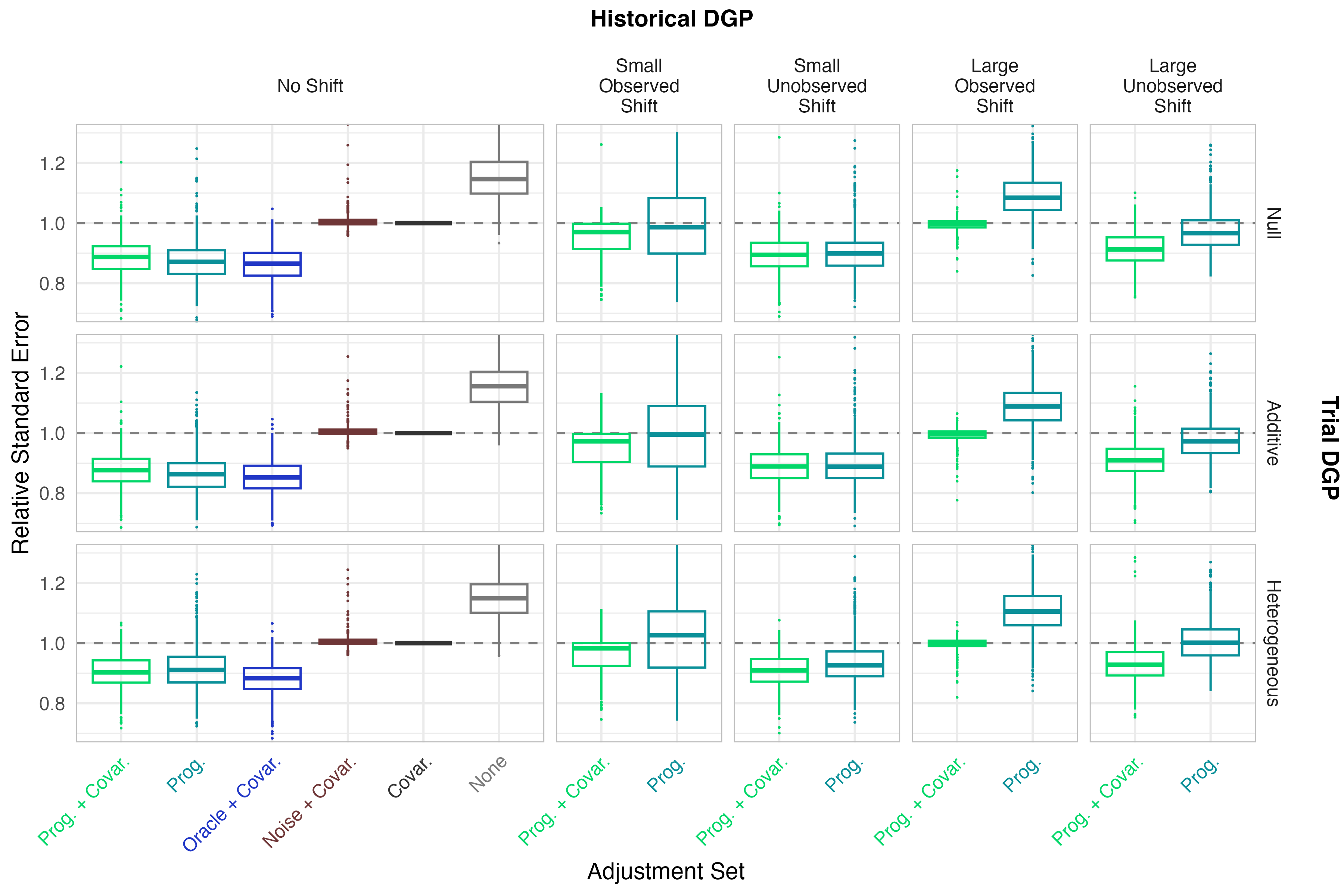}
    \caption{Estimated efficiencies of each estimator relative to standard covariate adjustment for $n = 250$. The box-and-whisker plot shows the distribution of the empirical efficiencies across replicates. Only the prognostically-adjusted estimators are shown for the shift scenarios since the other estimators do not make any use of the historical data and therefore do not change.}
    \label{fig:se}
\end{figure}

The figure shows that prognostic adjustment (with covariates) very, very rarely decreases efficiency by any meaningful amount, while in most cases it substantially improves it. The oracle prognostic score, which one would not have access to in practice, always increases efficiency and by a little bit more than the estimated prognostic score in our simulations.

When the prognostic model is pure noise the efficiency is generally unchanged relative to covariate adjustment, though there are a very small number of cases when the efficiency is more than 10\% worse. Thus the single added degree of freedom from adjusting for the prognostic score does not usually incur any substantial decrease in performance, even if the prognostic score is completely useless. This is meant to capture what could happen under a variety of potential catastrophic failures including misspecification of the prognostic score, changing definitions or omissions of covariates between the historical and trial datasets, drastic shifts between covariate distributions or their relationships to the outcome, etc. There is no need to simulate each of those scenarios explicitly because the impact is always the same: a prognostic score with decreased predictive ability in the trial, and no predictive ability at all in the worst-case scenario.

As expected, prognostic adjustment suffers in efficiency when there are larger differences between the historical and trial populations, either \textit{observed} covariate shifts or \textit{unobserved} covariate shifts (these are also structural changes in the outcome generating process, depending on perspective). Larger shifts mean worse performance, but prognostic adjustment with covariates almost never performs worse than covariate adjustment by itself, even in the most antagonistic scenarios. On the other hand, adjustment for \textit{only} the prognostic score can be much worse than covariate adjustment when the prognostic score is poorly estimated, and, in the limit, this estimator can perform as badly as the unadjusted estimator. The efficiency boost from prognostic adjustment is also slightly muted by effect heterogeneity.

\noindent{}\textbf{Coverage}\\
Naturally, it is very easy to for an estimator to have good empirical efficiency if it underestimates its own standard error. To assess this, we computed the empirical coverage of each of our estimators across scenarios, which is plotted in \autoref{fig:coverage}.

\begin{figure}[!ht]
    \centering
    \includegraphics[width=1\textwidth, clip]{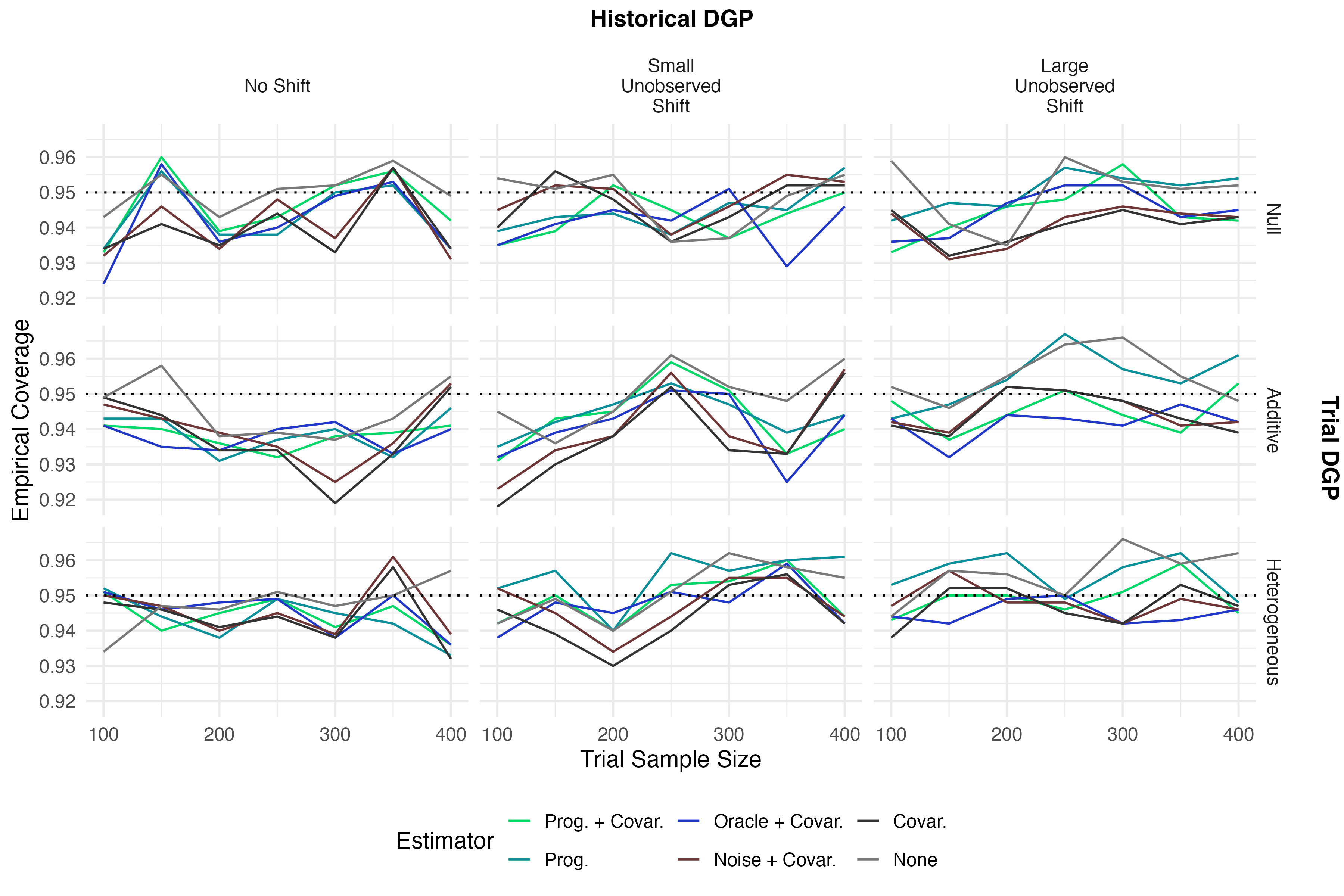}
    \caption{Empirical coverage of each estimator across scenarios as trial sample size varies.}
    \label{fig:coverage}
\end{figure}

Coverage is very close to nominal across sample sizes and scenarios for all estimators. Notably, coverage is still good even when the prognostic score is pure noise, demonstrating that prognostic adjustment is still safe as an analysis method regardless of population shifts, measurement errors, mismatched covariate and outcome definitions, etc. The good coverage shows both that none of our methods have any meaningful bias and that the standard errors are accurately estimated.

In Appendix~\ref{app:sim_dpg_cv_no} we show the equivalent coverage results \textit{without} cross-fit estimation of the standard errors. In that figure we see a very mild decrease in coverage for all the estimators that include the full set of covariates, which is why we recommend using cross-fitting for variance estimation. This is not specific to prognostic adjustment, since even the coverage for the standard covariate adjusted estimator improves with cross-fit standard errors.

\noindent{}\textbf{Sample Size Calculation}\\

Our second question is whether the sample size calculation presented above is empirically valid when the trial and external populations are identical and to what extent it fails when they are not. To assess this we computed the empirical proportion of significant results for each estimator across repetitions of each scenario and compared them to the estimated sample size we calculated for each estimator based on the historical data, averaged across repetitions. The results are shown in \autoref{fig:power}.

\begin{figure}[!ht]
    \centering
    \includegraphics[width=1\textwidth, trim={0mm 10mm 0mm 0mm}, clip]{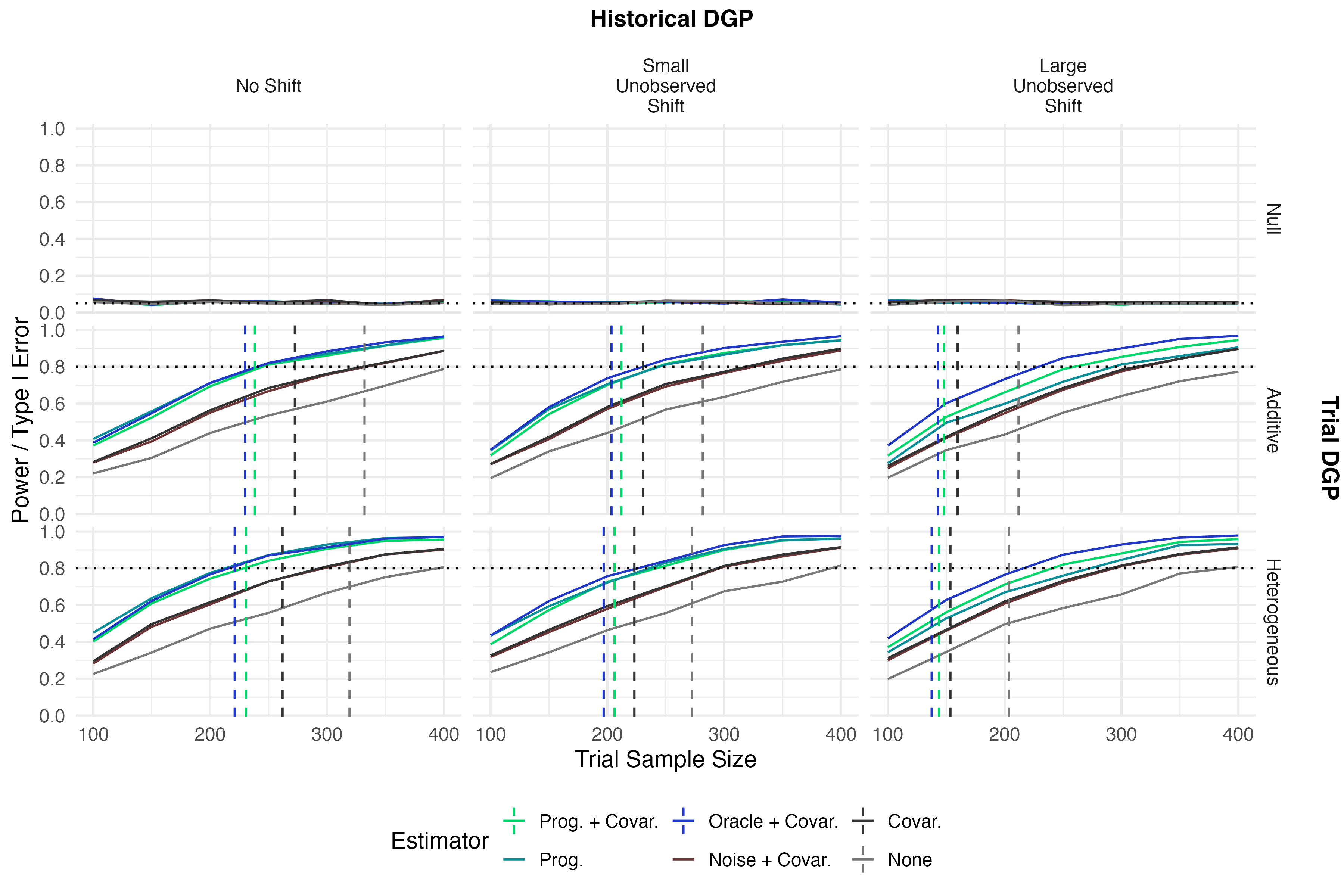}
    \caption{Empirical percentage of significant results (power, or type \RNum{1} error for the null scenario) of each estimator with increasing trial sample size across scenarios with no shift and unobserved covariate shifts. Vertical dashed lines indicate the average sample size estimated to attain 80\% power by each method. Equivalent plots for the observed covariate shifts show the same pattern so we omit them to reduce visual clutter. The estimated sample sizes when adjusting for only the prognostic score or for the noise prognostic score plus covariates are the same as that for prognostic adjustment with covariates.}
    \label{fig:power}
\end{figure}

The results confirm that the proposed sample size calculation is accurate when the required assumptions are met, i.e., when the trial and external populations are matched. Interestingly, our sample size calculation is slightly anti-conservative when the prognostic score and/or covariates are \textit{not} adjusted for, possibly due to the failure of asymptotic normality at these small sample sizes. However, this explanation is complicated by the fact that the unadjusted estimator has good coverage across scenarios. It is not entirely clear to us why accounting for the estimated or oracle prognostic score in the power calculation appears to resolve the issue. A more exact understanding of the sampling distribution could improve the performance of the unadjusted or covariate-adjusted sample size calculation but we leave this to future work because it does not directly pertain to prognostic adjustment.

The results also illustrate how the sample size calculation can fail when the trial and external populations are mismatched, even in ways that may not be immediately evident by examining observed covariate distributions (or comparing historical distributions to planned inclusion/exclusion criteria). Although smaller differences in the populations can be somewhat tolerated, large differences can lead to substantial underestimates of the required sample size. Overestimates are also theoretically possible, but we constructed the simulation to demonstrate the underestimate by making the shift result in more explainable variance than there actually is in the trial population.

Lastly, then null scenario results in \autoref{fig:power} show that all estimators achieve nominal type \RNum{1} error control. This is the case even when the prognostic score is pure noise (recall that this captures catastrophic failures due to population shifts, mismatched covariate/outcome definitions or omissions, etc.).

}

\section{Case study}

In this section, we examine the use of prognostic score adjustment for GLMs in RCTs involving people diagnosed with type 2 diabetes (T2D). Specifically, we explore the occurrence of hypoglycaemic episodes as defined by the American Diabetes Association (ADA) and investigate their implications for individuals diagnosed with type 2 diabetes (T2D). A hypoglycaemic episode can be classified in different categories, but in general, according to the ADA, it is characterized by a blood glucose level of less than 70 mg/dL, accompanied by symptoms that can include sweating, nausea, and confusion. Individuals with diabetes may experience hypoglycaemic episodes due to various factors, such as excessive insulin administration, insufficient food intake, or increased physical activity, leading to an imbalance in their glucose levels. The analyses presented here utilize data from Novo Nordisk A/S, drawn from 10 previously conducted RCTs in the field of diabetes, as detailed in \autoref{app_overview_cs}. For details on data missingness and imputation see \autoref{app_datmis}. Data from the same trials was used by L. Liao, E. Højbjerre-Frandsen, A. Hubbard and A. Schuler \cite{Liao2023} for a analysis of long term blood glucose level measured by  HbA1c using prognostic score adjustment for efficient estimators. Similarly, \citet{HojbjerreFrandsen2024} have done a prospective power estimation for a HbA1c analysis using linear adjustment for a prognostic score using the same trial data.

We conducted the analysis using data from the phase IIIb clinical trial designated NN9068-4229, which involved a participant population of insulin-naïve individuals diagnosed with T2D \citep{clinicalgov}. The participants in this trial were individuals whose diabetes was inadequately managed while receiving treatment with SGLT2 inhibitors, a class of oral anti-diabetic medications (OAD). Inadequately controlled was defined as having a HbA1C level ranging from 7.0\% to 11.0\% (inclusive). The trial was a 26-week, 1:1 randomized \textcolor{black}{comparing insulin IDegLira to insulin IGlar}, active-controlled, open-label, treat-to-target study that enrolled 420 participants. One participant was excluded due to lack of exposure to the trial product, resulting in a final sample size of $n=419$. Our related historical sample was derived from previously conducted RCTs that included a study population of insulin-naïve individuals with T2D who were inadequately controlled on their existing OADs. This historical sample comprised a total of approximately $\tilde{n} = 2286$ participants, all of whom received insulin IGlar.

Let RR be the rate ratio (IDegLira/IGlar) of the rate of any ADA classified hypoglycaemic episodes during the 26 weeks trial period. The number ADA classified hypoglycaemic episodes is the sum of hypoglycaemic episodes classified as either "Severe hypoglycaemia", "Documented symptomatic
hypoglycaemia", "Probable symptomatic
hypoglycaemia", "Asymptomatic hypoglycaemia" or "Pseudo hypoglycaemia" using the ADA definition. We test the $\He_0$-hypothesis of non-superiority of IDegLira against the alternative hypothesis of superiority as given by $\He_0: RR \geq 1$ and $\He_1: RR < 1$

\noindent{}\textbf{Prognostic score estimation}\\

To estimate the prognostic scores, a Discrete Super Learner was developed in accordance with recommended practices for effective machine learning outlined by \cite{FDA_ml}. This model possesses the oracle property, meaning it performs at the same level as the best machine learning algorithm available in the model library \cite{superlearner}. The Lasso machine learning model was chosen by the Discrete Super Learner for this analysis. The model achieved a root mean square error (RMSE) of 7.41 on the test data and 13.1 on the historical data. Further details regarding the prognostic score estimation can be found in \autoref{app:case_tuning}.

\noindent{}\textbf{Rate ratio estimation}\\

In this analysis we report the result of a similar analysis to that of the actual trial NN9068-4229 i.e. the GLM plug-in analysis using a Negative Binomial regression which adjust for treatment, region and log of the on treatment duration in years. The overdispersion parameter $r=\frac{\mu^2}{\sigma^2-\mu} = 0.54$ was estimated using the sample mean and standard deviation from NN9068-4229. We compare this to the proposed GLM plug-in analysis further adjusting for the prognostic scores. Additionally, we calculated the correlation between the fitted prognostic score and the trial outcome. The correlation for the prognostic score was found to be 0.514 for control subjects and 0.354 for treated subjects. These results suggest that adjusting for the prognostic score is likely to enhance estimation efficiency \cite{Schuler2020}. The results are summarised in \autoref{tab_res_full}.
\begin{table}[h]
\centering
\begin{tabular}{lccc}
{Prognostic score} & GLM plug-in analysis \\
\hline
with & 0.906 (s.e. 0.113\textcolor{black}{, CI [0.683, 1.128]}) \\
without & 0.907 (s.e. 0.123\textcolor{black}{, CI [0.667, 1.148]}) \\
\end{tabular}
\par\medskip
\caption{Estimates for rate ratio (IDegLira/IGlar) of the rate of any ADA classified hypoglycaemic episodes during the 26 weeks trial period with IF estimated standard deviation found using CV. \textcolor{black}{Without using CV the s.e. is 0.112 and 0.121 with and without prognostic score adjustment, respectively.} The total sample size is $n=419$. }
\label{tab_res_full}
\end{table}

From \autoref{tab_res_full}, we see that the smallest estimated standard deviations is obtained using the GLM plug-in with prognostic score. The two adjustment methods obtain similar point estimates. This amount in a change in post hoc power estimation of approximately 2.5\% and a relative power increase of 11.5\%.

\section{Discussion}
{\color{black}
Our theory and simulations show that prognostic adjustment with GLMs can improve the efficiency of trial analyses, even when trial and external populations are not identical. Moreover we develop a sample size calculation, demonstrate its accuracy when trial and external populations match, and highlight its failures when they do not.
}

Compared to other historical borrowing methods, prognostic covariate adjustment theoretically guarantees type \RNum{1} error rate control and confidence interval coverage in general settings. Specifically, using the GLM plug-in procedure  described by \citet{RosenblumvanderLaan+2010} with the variance estimate found from the IF we have consistency of the marginal effect estimate and asymptotic control of the type \RNum{1} error rate even when the model is misspecified under an RCT. In this sense, using GLM prognostic adjustment for trial analysis is ``assumption-free''. 

Efficiency gains from prognostic adjustment may be limited if highly prognostic baseline covariates are already directly adjusted for, or if there is not any substantial nonlinearity left to model after standard adjustment. However, our theory and simulations show that, even in the worst-case scenario, prognostic adjustment doesn't hurt as long as prognostic covariates are also adjusted for in the analysis. This is our recommendation, which aligns with the FDA's opinion \cite[p. 3]{FDA_cov}: "\textit{Covariate adjustment is acceptable even if baseline covariates are strongly associated with each other (e.g., body weight and body mass index). However, adjusting for less correlated baseline covariates generally provides greater efficiency gains.}".

Prognostic adjustment with GLMs offers several practical advantages over data fusion approaches. For instance, it does not require a single, well-defined treatment in the historical data. Moreover, it does not necessitate an exact overlap of the covariates measured in the historical and trial datasets. As long as there is some overlap and informativeness, the historical data can be leveraged effectively. {\color{black} This is evident in our simulation settings under covariate shifts, which also capture what would happen if covariates were defined differently in different populations. Even in the worst case when the prognostic score is not predictive at all due to shifts, omissions, or definitional errors, GLM prognostic score adjustment (including covariates) very rarely performs worse in variance than standard covariate adjustment.}

{\color{black}We also propose a corresponding method of sample size calculation for prognostically adjusted GLMs, which, like any other, depends strongly on assumptions and has failure points. Studies may be underpowered if the assumption that $\tau>0$ fails, or even if the marginal outcome variance is underestimated. The largest risk that is specific to prognostic adjustment is overestimating the predictive ability of the prognostic model. For this reason, it is important to always be aware of the tradeoffs, perform sensitivity analyses by setting parameters conservatively enough to tolerate inaccuracies, and perform thorough simulation testing. 

There is also no need to perform a single power analysis: the required sample sizes from a prognostically adjusted GLM, standard GLM, and unadjusted analysis may be compared. If there is no substantial sample size reduction from the former, there is no need to justify the additional assumptions. If there is a difference, and one is comfortable with, e.g., the assumptions for the unadjusted calculation, then the size of this gap might be used to justify further simulation studies or literature reviews to verify or falsify the additional assumptions for the adjusted analyses.} 

There has been significant growth in the availability and performance of nonlinear regression modeling, particularly in the area of deep learning. The intersection of this technological development with the creation of large historical control databases provides an opportunity to use prognostic covariate adjustment to substantially improve future clinical trials. However, in some clinical indications like rare disease studies the amount of historical data poses a challenge for using adjustment with a prognostic score, since there should be sufficient historical data to split into training and testing datasets. Another possibility in those cases, would be to utilized prognostic adjustment with pre-built or public prognostic models. The analyst does not need direct access to the historical data if they can query a model for predictions. This is particularly useful in cases where data cannot be moved, such as when privacy must be protected or data has commercial value. Individual subject data from the historical sample is not necessary to addend the trial data with the prognostic score.

When applying linear adjustment with a prognostic score in GLMs, it is crucial to consider that adjusting the sample size for the primary endpoint may lead to a reduction in power for secondary analyses. This is particularly significant in disease areas where a minimum number of participants must be exposed. However, this issue can be mitigated by incorporating a prognostic score for secondary endpoints to enhance power for these analyses. 
However, developing distinct models for each analysis is logistically challenging and resource-intensive, especially when dealing with numerous secondary endpoints. Therefore, we recommend using the method selectively for secondary endpoints of high clinical importance. Sponsors should consider the risk of underpowered secondary and subgroup analyses. Alternatively, sponsors could maintain the same sample size but increase power by using the method, which may be more readily accepted.

\subsection*{Acknowledgement}
We thank Martin Theil Hansen for curating the case study dataset. 
This research was supported by Innovation Fund Denmark (grant number 2052-00044B). 

\bibliographystyle{plainnat}
\bibliography{litterature}

\newpage

\appendix

\section{. \hspace{0.1cm} Negative Binomial working model}\label{app:NB}
In this appendix we will go through the argument by \citet{RosenblumvanderLaan+2010} to assess if Theorem 1 of \cite{RosenblumvanderLaan+2010} is also valid when the working model is a Negative Binomial regression. Firstly, the density of the Negative Binomial distribution for $k=0, 1, 2, \cdots$ is
\begin{align}\label{eq:exp_fam}
\begin{split}
    f(k; r, p) &= \binom{k+r-1}{k}(1-p)^k p^r\\
    &= \text{exp}\left(\text{ln}\left(\binom{k+r-1}{k}(1-p)^k p^r\right)\right)\\
    &= \binom{k+r-1}{k}\text{exp}\left(\text{ln}\left((1-p)^k p^r\right)\right)\\
    &= \binom{k+r-1}{k}\text{exp}\left(k\text{ln}\left(1-p\right) +r\text{ln}\left(p\right)\right)\\
    &= \text{exp}\left(Y\eta - b(\eta) + c(Y, \phi)\right),
\end{split}
\end{align}
where $Y=k$ is the outcome, $r$ is the number of successes until the experiment is stopped and $p\in [0,1]$ is the probability of success in each experiment. The last equality shows that with $r$ known the Negative Binomial distribution is a part of the exponential family written in canonical form with $\sum_j \beta_j f_j(W, A) = \eta = \text{ln}(1-p)$ and $b(\eta) = -r \text{ln}(p)$. \citet{Bickel2001} prove that for the canonical link the mean can be expressed as $\mu= g^{-1}(\eta) = \frac{\partial}{\partial \eta} b(\eta)$. For the Negative Binomial distribution this means that 
\begin{align}
\begin{split}
    \mu = \frac{\partial}{\partial \eta} -r\text{ln}(p) = -\frac{\partial}{\partial \eta} r\text{ln}(1-\text{exp}(\eta))= r \frac{\text{exp}(\eta)}{1-\text{exp}(\eta)} = r \frac{1-p}{p},
\end{split}
\end{align}
where we use $\eta = \text{ln}(1-p) \Leftrightarrow 1-\text{exp}(\eta)=p$. This means that the canonical link function is 
\begin{align}
\begin{split}
    g(\mu) = \text{ln}\left(\frac{\mu}{r+\mu} \right).
\end{split}
\end{align}
The expected log-likelihood is strictly concave since $-\frac{\partial^2}{\partial \eta^2} r\text{ln}(1-\text{exp}(\eta))=r \frac{\text{exp}(\eta)}{(1-\text{exp}(\eta))^2}>0$ for all values of $\eta$. This means that that there either exists a unique maximizer $\beta^*$ of the expected log-likelihood or that the euclidean norm of the MLE grows without bound a sample size goes to infinity, which would be detected based on the assumption that each component of $\hat\beta$ does not exceed the threshold $M$ \cite{RosenblumvanderLaan+2010}.

\paragraph{}
We now let the working model distribution be $p_0(W, A, Y)=p_{0}(Y| W, A)\pi_A p_0(W)$, where $p_{0}(Y| W, A)$ is now the density using the pre-specified Negative Binomial model with the parameters estimated by MLE and $p_0(W)$ is the empirical distribution of $W$. The derivative of the log-likelihood is $0$ at $\hat{\beta}$. This means that for the i'th term of the log-likehood we have
\begin{align}\label{eq:score_nb}
\begin{split}
   \frac{\partial}{\partial \beta} \ell (\eta(\beta)) = \left(Y - b'(\eta)\right)\cdot   \frac{\partial \eta}{\partial \beta} = \left(Y - \E_{p_0}[Y| W, A]\right)\cdot   \frac{\partial \eta}{\partial \beta} = 0,
\end{split}
\end{align}
where we have abused the notation slightly by using $p_0$ to denote the assumed density for $Y| W, A$. 
Since we included an intercept and the exposure variable in the linear regressor we now have 
\begin{align}
\begin{split}
   \sum_{i=1}^n\left(Y_i - \E_{p_0}[Y| W_i, A_i]\right) &= 0\\ 
   \sum_{i=1}^nA_i\left(Y_i - \E_{p_0}[Y| W_i, A_i]\right) &= 0. 
\end{split}
\end{align}
Since these are both 0 and the propensity scores $\pi_a$ are known we can multiply with $1/\pi_a$ (using the positivity assumption) to obtain the first term of the IF in \eqref{eq:IF_mean}. The last term of the IF is also 0, which can be seen by using the law of total expectation and that under $p_{0}(W)$ each observation has $\Pe(W | A) = \Pe(W) =\frac{1}{n}$ due to randomization, we have
\begin{align}
\begin{split}
  &\frac{1}{n} \sum_{i=1}^n\E[Y|W_i, A=a] = \frac{1}{n} \cdot n \E[Y|A=a] \\
  \bimp& \frac{1}{n} \sum_{i=1}^n\left(\E[Y|W_i, A=a] \E[Y|A=a]\right) = 0. 
\end{split}
\end{align}
Thus when using the Negative Binomial regression as our working model the scores solve the IF equation i.e. setting \eqref{eq:IF_mean} equal to 0. We can then use the exact same argument as \citet{RosenblumvanderLaan+2010} to conclude that the procedure is a TMLE procedure done in zero steps using the Negative Binomial working model. 

Now we need to show that the five conditions in Theorem 1 of \citet{vanderLaanRubin+2006} still hold when using the Negative Binomial regression model. We first note that the first two conditions relate to the statistical model $\mathcal{M}$, for which we have the same assumption of a semi-parametric model due to randomization. For the third condition we can use the exact same calculations as \citet{RosenblumvanderLaan+2010} replacing $1/2$ with $\pi_a$ for generality. These calculations rely heavily on the plug-in bias being 0 since the IF equation is solved and the independence of $W$ and $A$. The fifth condition can be shown using the exact same argument as \citet{RosenblumvanderLaan+2010}, which rely on the parameters $\hat{\beta}$ converging towards $\beta^*$ and the consistency of the counterfactual means $\hat{\Psi}_a \to_P \Psi_a$. 

So we now need to check the fourth condition to complete the proof. Firstly we need to check which conditions should be assumed for $\hat\phi_a$ to be in a Donsker class for $a\in\{0,1\}$. Since the IF is in a parametric class indexed by $\beta$ (again using that the propensity scores are known) we can use example 19.7 from \citet{van2000asymptotic}. This example shows that for a parametric class to be a Donsker class it suffices to show a type of Lipschitz continuety in $\L_2$, i.e. there exists a measurable function $m$ such that
\begin{align}
\begin{split}
  ||\hat\phi_{a, \beta}(W, A, Y)-\hat\phi_{a, \Tilde{\beta}}(W, A, Y)||_{\L_2}\leq m(W, A, Y)||\beta-\Tilde{\beta}||_{\L_2},
\end{split}
\end{align}
for all allowed values of $\beta$. This condition would for example be fulfilled if one assumed a bounded parameter space for the $\beta$ values which also should contain the true $\beta^*$. Instead of assuming this Lipschitz condition one can also assume that the the sectional variation norm of $\hat{\Psi}_{a, \beta}$ is uniformly bounded, and that could be achieved showing all the first order mixed derivatives w.r.t. continuous $W$ components are bounded in absolute value.

\section{. \hspace{0.1cm} Reduced form of the asymptotic variance }\label{app:reducedformasympvar}
In this appendix we will derive the expression in \eqref{eq:var_pop}. First we observe that, 
\begin{align}
\begin{split}
    v_\infty^2 &= \var\left[r_0'\phi_0 + r_1'\phi_1\right]\\
    &= r_0'^{\, 2}\var[\phi_0]+r_1'^{\, 2}\var[\phi_1]+2r_0'r_1'\cov[\phi_0, \phi_1]\\
    &= r_0'^{\, 2}\var[\phi_0]+r_1'^{\, 2}\var[\phi_1]-2|r_0'r_1'|\cov[\phi_0, \phi_1], 
\end{split}
\end{align}
where the sign change is guaranteed by the assumption in \eqref{eq:asump_r}. Then we just need to evaluate the variances and covariances and assemble these. For this we will make use of the fact that $1_a(A)Y = 1_a(A)= 1_a(A)Y(A) = 1_a(A)Y(a)$ to replace outcomes with potential outcomes. We will also that we are in the setting of an RCT meaning that $A\indep Y(a)$ and $A\indep W$. 

\subsection{Evaluating \texorpdfstring{$\var[\phi_a]$}{Var[phi-a]}}
\begin{align}
\begin{split}
   \var[\phi_a] &= \var\left[\frac{1_a(A)}{\pi_a}(Y-\mu^*(W, a))+(\mu^*(W, a)-\Psi_a) \right]\\
   &= \var\left[\frac{1_a(A)}{\pi_a}(Y-\mu^*(W, a)) \right] + \var\left[\mu^*(W, a) \right] + 2\cov\left[\frac{1_a(A)}{\pi_a}(Y-\mu^*(W, a)), \mu^*(W, a)\right]\\
   &= \E\left[\left(\frac{1_a(A)}{\pi_a}(Y(A)-\mu^*(W, a)) \right)^2\right] - \E\left[\frac{1_a(A)}{\pi_a}(Y(A)-\mu^*(W, a)) \right]^2 + \var\left[\mu^*(W, a) \right] \\&\quad+ 2\cov\left[\frac{1_a(A)}{\pi_a}Y(A), \mu^*(W, a)\right]- 2\cov\left[\frac{1_a(A)}{\pi_a}\mu^*(W, a), \mu^*(W, a)\right]\\
   &= \E\left[\frac{1_a(A)}{\pi^2_a}\left(Y(A)-\mu^*(W, a) \right)^2\right] - \left(\Psi_a -\Psi_a\right)^2 + \var\left[\mu^*(W, a) \right] \\&\quad+ 2\cov\left[Y(A), \mu^*(W, a)\right]- 2\var\left[\mu^*(W, a)\right]\\
   &= \frac{1}{\pi_a}\E\left[\left(Y(A)-\mu^*(W, a) \right)^2\right] - \var\left[\mu^*(W, a) \right] + 2\cov\left[Y(A), \mu^*(W, a)\right]- \var\left[Y(a)\right]+\var\left[Y(a)\right]\\
   &= \frac{1}{\pi_a}\kappa_a^2 -  \var\left[Y(A)-\mu^*(W, a)\right]+\var\left[Y(a)\right]\\
   &= \frac{1}{\pi_a}\kappa_a^2 - \E\left[\left(Y(A)-\mu^*(W, a)\right)^2\right]+\sigma_a^2\\
   &= \frac{1}{\pi_a}\kappa_a^2 - \kappa_a^2 +\sigma_a^2\\
   &= \frac{1-\pi_a}{\pi_a}\kappa_a^2 +\sigma_a^2
\end{split}
\end{align}

\subsection{Evaluating \texorpdfstring{$\cov[\phi_0, \phi_1]$}{Cov[phi-0, phi-1]}}
Here we will use that $1_0(A)\cdot 1_1(A)=0$. 
\begin{align}
\begin{split}
    &\cov[\phi_0, \phi_1]\\
    =& \cov\left[\frac{1_0(A)}{\pi_0}(Y-\mu^*(W, 0))+(\mu^*(W, 0)-\Psi_0) , \frac{1_1(A)}{\pi_1}(Y-\mu^*(W, 1))+(\mu^*(W, 1)-\Psi_1) \right]\\
    =& \cov\left[\frac{1_0(A)}{\pi_0}Y(0), \frac{1_1(A)}{\pi_1}Y(1)\right]
    +\cov\left[\frac{1_0(A)}{\pi_0}Y(0), \left(1-\frac{1_1(A)}{\pi_1}\right) \mu^*(W, 1)\right]\\
    &+ \cov\left[\frac{1_1(A)}{\pi_1}Y(1), \left(1-\frac{1_0(A)}{\pi_0}\right) \mu^*(W, 0)\right] +
    \cov\left[\left(1-\frac{1_1(A)}{\pi_1}\right) \mu^*(W, 1), \left(1-\frac{1_0(A)}{\pi_0}\right) \mu^*(W, 0)\right]\\
    =& -\Psi_0\Psi_1+\E[Y(0)\mu^*(W, 1)]+\E[Y(1)\mu^*(W, 0)]-\E[\mu^*(W, 0)\mu^*(W, 1)]\\
    =& \E[Y(0)\mu^*(W, 1)]-\Psi_0\Psi_1
    +\E[Y(1)\mu^*(W, 0)]-\Psi_0\Psi_1
    -\left(\E[\mu^*(W, 0)\mu^*(W, 1)]-\Psi_0\Psi_1\right)\\
    =& \cov[Y(0),\mu^*(W, 1)]+ \cov[Y(1),\mu^*(W, 0)] - \cov[\mu^*(W, 0),\mu^*(W, 1)]\\
    =& \cov[Y(0),\mu^*(W, 1)]+ \cov[Y(1),\mu^*(W, 0)] - \cov[\mu^*(W, 0),\mu^*(W, 1)] - \cov\left[Y(0),Y(1)\right]+\cov\left[Y(0),Y(1)\right]\\
    =& \cov\left[Y(0),Y(1)\right] - \cov\left[Y(0)- \mu^*(W, 0),Y(1)- \mu^*(W, 1)\right]\\
    =& \tau \sigma_0\sigma_1-\eta\kappa_0\kappa_1.
\end{split}
\end{align}
The last equation comes from the definition of the correlation $\cor(x,y)=\frac{\cov(x,y)}{\sqrt{\var(x)\var(y)}}$ applied to each term. 

\subsection{Special case - unadjusted estimator}
Let our estimand of interest be the ATE $\Psi = \Psi_1 -\Psi_0$ and assume that we use a linear model (GLM with normal density and identity as link function) with no covariate adjustment i.e. an ANOVA model. In this case the estimator would just be the difference of the two sample means and thus $\mu^*(W, a) = \Psi_a$ and $\kappa_a = \sigma_a$. This implies that the covariance term $\tau \sigma_0\sigma_1-\eta\kappa_0\kappa_1=0$. Thus, 
\begin{align}
    v_\infty^2 = \left( \frac{\pi_1}{\pi_0}\sigma_0^2 + \sigma_0^2\right) + \left(\frac{\pi_1}{\pi_0}\sigma_1^2 + \sigma_1^2\right)=  \frac{\sigma_0^2}{\pi_0} + \frac{\sigma_1^2}{\pi_1},
\end{align}
which is precisely the know variance of the unadjusted estimator. 

\subsection{Special case - unidimensional linear regression}
Let our parameter of interest be the ATE $\Psi = \Psi_1 -\Psi_0$ and presume
we use a linear GLM adjusted for a single covariate and a treatment-covariate interaction term. Standard theory (see \cite{HojbjerreFrandsen2024} or \cite{Schuler2020}) states that $\mu^*(W, a)=\var(W)^{-1}\cov(W^\top,Y(a))w$. Thus some algebra shows $\kappa_a^2=\E\left[(Y(a)-\var(W)^{-1}\cov(W^\top,Y(a))W)^2\right]= (1-\rho_a^2)\sigma_1^2$. Further algebra gives
\begin{align}
    v_\infty^2 =\frac{\sigma_0^2}{\pi_0} + \frac{\sigma_1^2}{\pi_1} - \pi_1\pi_0\left(\frac{\rho_0\sigma_0}{\pi_0}+ \frac{\rho_1\sigma_1}{\pi_1}\right)^2,
\end{align}
which shows that the result of Schuler \cite{Schuler+2022+151+171} is a special case of \eqref{eq:var_pop}.

\section{. \hspace{0.1cm} Supporting Lemmas and Proofs for Theorem 1}\label{app:proof}

In this section we will state and proof results leading to the establishment of \autoref{thm:main}. We start by defining a generic observation from the current RCT trial using an additional covariate $\rho(W)$ as $O_{\rho}\coloneqq(W, \rho(W), A, Y)$, where $\rho: \mathcal{W} \to \R$. To denote the $i$'th observation we write $o^i_{\rho}$. Then $O_{\mu_0}\coloneqq(W, \mu(W, 0), A, Y)$ is an observation using the true conditional mean outcome function. Let the score function of the GLM be denoted by $s(o_\rho; \beta)=  \frac{\partial }{\partial \beta} \ell(o_\rho; \beta)$. Let $\mathscr{L}(\rho ;\beta)=\E_\mathcal{P}[\ell(O_\rho; \beta)]$ and $\mathscr{S}(\rho; \beta)=\E_\mathcal{P}[s(O_\rho; \beta)]$ be the expected log-likelihood and score, respectively, at $\beta$ under a given distribution $\mathcal{P}$ for $O_\rho$. We again denote the new RCT data by $\P_n$ and the historical data by $\P_{\widetilde{n}}\,$ sampled from the same distribution $\mathcal{P}$.

\subsection{Supporting lemmas}

\begin{lemma}\label{lemma:convergence_beta}
    Let $\hat\rho_{\tilde n}\, : \mathcal{W} \to \mathcal{Y}$ be a function learned from  $\P_{\widetilde{n}}$. Let $\beta^*_0$ maximize $\mathscr{L}(\rho_0, \beta)=\E_\mathcal{P}[\ell(O_{\rho_0}\, ; \beta)]$ over a compact parameter space $\{\beta : |\beta|\leq b\}$ for a fixed function $\rho_0\, : \mathcal{W} \to \mathcal{Y}$. Let $\beta_{\widetilde{n}, \, n}$ maximize $\mathscr{L}_n(\hat\rho_{\tilde n}, \beta)= \frac{1}{n}\sum_{i=1}^n \ell(o^i_{\hat\rho_{\tilde n}}; \beta)$. Assume that the log-likelihood $\ell$ fulfills a $m(\beta)$-Lipschitz condition in the data,
    \begin{align*}
        |\ell(o, \beta)-\ell(o', \beta)|\leq m(\beta)||O-O'||_{\mathcal{L}_2},
    \end{align*}
    for all realisations $o$ and $o'$ and where $m$ is bounded on the domain of $\beta$. Then $\left|\hat\rho_{\tilde n}\,(W) - \rho_0(W)\right|\overset{L^2}{\longrightarrow} 0$ as $\widetilde{n}\to\infty$ implies $\beta_{\widetilde{n}, \, n}\pto \beta^*_0$ for $n, \widetilde{n} \to \infty$. 
\end{lemma}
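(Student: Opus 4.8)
The plan is to prove this as a standard argmax‑consistency (M‑estimation) statement, where the only nonstandard feature is that the prognostic covariate $\hat\rho_{\tilde n}$ fed into the GLM is itself a random function fitted on the independent historical sample. The backbone has two parts: (i) establish the uniform convergence $\sup_{|\beta|\le b}\bigl|\mathscr{L}_n(\hat\rho_{\tilde n},\beta)-\mathscr{L}(\rho_0,\beta)\bigr|\pto 0$ as $n,\widetilde n\to\infty$; and (ii) combine this with the fact that $\beta^*_0$ is a well‑separated maximizer of the limiting criterion $\mathscr{L}(\rho_0,\cdot)$ to conclude $\beta_{\widetilde n,n}\pto\beta^*_0$, exactly as in the argmax theorem (Theorem~5.7 of \citet{van2000asymptotic}). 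For part (ii) I would use that, for a canonical‑link GLM, $\mathscr{L}(\rho_0,\cdot)$ is strictly concave (due to \citet{R+vdL2009}), so on the compact set $\{|\beta|\le b\}$ it has the unique maximizer $\beta^*_0$, which lies in the interior by the assumed bound; continuity plus compactness then make the maximum well separated, and the standard chain $\mathscr{L}(\rho_0,\beta^*_0)\ge\mathscr{L}(\rho_0,\beta_{\widetilde n,n})\ge\mathscr{L}_n(\hat\rho_{\tilde n},\beta_{\widetilde n,n})-o_P(1)\ge\mathscr{L}_n(\hat\rho_{\tilde n},\beta^*_0)-o_P(1)\ge\mathscr{L}(\rho_0,\beta^*_0)-o_P(1)$ forces $\mathscr{L}(\rho_0,\beta_{\widetilde n,n})\pto\mathscr{L}(\rho_0,\beta^*_0)$ and hence $\beta_{\widetilde n,n}\pto\beta^*_0$.

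For part (i) I would split $\mathscr{L}_n(\hat\rho_{\tilde n},\beta)-\mathscr{L}(\rho_0,\beta)$ as $\bigl[\mathscr{L}_n(\hat\rho_{\tilde n},\beta)-\mathscr{L}_n(\rho_0,\beta)\bigr]+\bigl[\mathscr{L}_n(\rho_0,\beta)-\mathscr{L}(\rho_0,\beta)\bigr]$. The first bracket measures the effect of plugging in the estimated rather than the target prognostic covariate: since $o^i_{\hat\rho_{\tilde n}}$ and $o^i_{\rho_0}$ differ only in the prognostic coordinate, the Lipschitz hypothesis gives $|\ell(o^i_{\hat\rho_{\tilde n}};\beta)-\ell(o^i_{\rho_0};\beta)|\le m(\beta)\,|\hat\rho_{\tilde n}(W_i)-\rho_0(W_i)|$, so averaging over $i$ and setting $M:=\sup_{|\beta|\le b}m(\beta)<\infty$ yields $\sup_{|\beta|\le b}\bigl|\mathscr{L}_n(\hat\rho_{\tilde n},\beta)-\mathscr{L}_n(\rho_0,\beta)\bigr|\le M\cdot\frac1n\sum_{i=1}^n|\hat\rho_{\tilde n}(W_i)-\rho_0(W_i)|$. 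The second bracket is a textbook uniform law of large numbers over the compact set $\{|\beta|\le b\}$: for each fixed $\beta$, $\mathscr{L}_n(\rho_0,\beta)\to\mathscr{L}(\rho_0,\beta)$ almost surely by the SLLN (integrability of $\ell(O_{\rho_0};\beta)$ follows from compactness of $\mathcal W,\mathcal Y$ and the design regularity already assumed in \autoref{sec:est}), and since $\beta\mapsto\mathscr{L}_n(\rho_0,\beta)$ is concave for a canonical‑link GLM, pointwise convergence of concave functions upgrades to uniform convergence on compacta; alternatively one could note $\beta\mapsto\ell(O_{\rho_0};\beta)$ is itself Lipschitz on $\{|\beta|\le b\}$ under bounded design and bounded outcome, making the class Glivenko--Cantelli.

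The main work, and the main obstacle, is showing $\frac1n\sum_{i=1}^n|\hat\rho_{\tilde n}(W_i)-\rho_0(W_i)|\pto 0$ as $n,\widetilde n\to\infty$, which is delicate because $\hat\rho_{\tilde n}$ is random and the double index must be handled simultaneously. Here I would condition on the historical sample $\P_{\widetilde n}$: conditionally, the summands are i.i.d.\ in $i$ with conditional mean $\E_W\bigl[|\hat\rho_{\tilde n}(W)-\rho_0(W)|\,\big|\,\P_{\widetilde n}\bigr]$, and using the uniform boundedness of $\hat\rho_{\tilde n}$ together with compactness of $\mathcal Y$ to bound the conditional variance uniformly, Chebyshev's inequality gives $\P\bigl(\bigl|\frac1n\sum_i|\hat\rho_{\tilde n}(W_i)-\rho_0(W_i)|-\E_W[\,\cdot\mid\P_{\widetilde n}]\bigr|>\varepsilon\bigr)\le C/(n\varepsilon^2)\to 0$. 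It then remains to control the conditional mean: by Jensen's inequality, $\E\bigl[\E_W[\,|\hat\rho_{\tilde n}(W)-\rho_0(W)|\mid\P_{\widetilde n}]\bigr]=\E_{W,\hat\rho_{\tilde n}}|\hat\rho_{\tilde n}(W)-\rho_0(W)|\le\bigl(\E_{W,\hat\rho_{\tilde n}}|\hat\rho_{\tilde n}(W)-\rho_0(W)|^2\bigr)^{1/2}\to 0$ by the assumed $L^2$ convergence, so Markov's inequality makes the conditional mean asymptotically negligible; combining the two pieces gives the claim. Feeding this back through the decomposition in part (i) and then invoking the argmax argument of part (ii) completes the proof. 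I expect the bookkeeping of the joint limit $n,\widetilde n\to\infty$ (and verifying that the uniform variance bound on the conditional summands really only needs uniform boundedness of $\hat\rho_{\tilde n}$) to be the one genuinely subtle point; everything else is routine once the decomposition is in place.
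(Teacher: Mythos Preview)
Your proposal is correct and follows essentially the same route as the paper: invoke Theorem~5.7 of \citet{van2000asymptotic}, obtain well-separation from strict concavity of the expected log-likelihood, split the uniform convergence into a ULLN piece for the fixed covariate $\rho_0$ and a Lipschitz-controlled remainder $M\cdot\frac1n\sum_i|\hat\rho_{\tilde n}(W_i)-\rho_0(W_i)|$, and then drive that empirical average to zero using the assumed $L^2$ convergence. The only minor difference is that for this last step the paper argues more directly: rather than your Chebyshev-plus-Markov decomposition, it simply notes that $L^1$ convergence implies convergence in probability and computes $\E\bigl[\frac1n\sum_i|\hat\rho_{\tilde n}(W_i)-\rho_0(W_i)|\bigr]=\|\hat\rho_{\tilde n}(W)-\rho_0(W)\|_{L^1}\le\|\hat\rho_{\tilde n}(W)-\rho_0(W)\|_{L^2}\to 0$ via the tower property (conditioning on $\P_{\widetilde n}$), which avoids the variance bound entirely.
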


\begin{proof}
We consider Theorem 5.7 from \citet{Vaart_1998} using $\mathscr{L}_n(\hat\rho_{\tilde n}, \beta)= \frac{1}{n}\sum_{i=1}^n \ell(o^i_{\hat\rho_{\tilde n}}; \beta)$ as a sequence of random functions and $\mathscr{L}(\rho_0, \beta)=\E_\mathcal{P}[\ell(O_{\rho_0}\, ; \beta)]$ as a fixed function of $\beta$. The theorem states that $\beta_{\widetilde{n}, \, n}\pto \beta^*_0$ for $n, \widetilde{n} \to \infty$ if for every $\eps > 0$
\begin{align}
    &\underset{\{\beta : |\beta|\leq b\}}{\text{sup}} |\mathscr{L}(\rho_0, \beta) - \mathscr{L}_n(\hat\rho_{\tilde n}, \beta) | \pto 0, \\
    &\underset{\{\beta : ||\beta - \beta^*_0||\geq \eps\}}{\text{sup}} \mathscr{L}(\rho_0, \beta) < \mathscr{L}(\rho_0, \beta_0^*).
\end{align}
The second condition is always fulfilled due to the strict concavity of the expected log-likelihood and since $\beta^*_0$ is the maximizer of this. For the first condition we first observe using the triangle inequality that,
\begin{align*}
    \underset{\{\beta : |\beta|\leq b\}}{\text{sup}} |\mathscr{L}(\rho_0, \beta) - \mathscr{L}_n(\hat\rho_{\tilde n}, \beta) | \leq \underset{\{\beta : |\beta|\leq b\}}{\text{sup}} |\mathscr{L}(\rho_0, \beta) - \mathscr{L}_n(\rho_0, \beta) | + \underset{\{\beta : |\beta|\leq b\}}{\text{sup}} |\mathscr{L}_n(\rho_0, \beta) - \mathscr{L}_n(\hat\rho_{\tilde n}, \beta) |,
\end{align*}
where $\mathscr{L}_n(\rho_0, \beta)= \frac{1}{n}\sum_{i=1}^n \ell(o^i_{\rho_0}; \beta)$. The first term goes to 0 in probability due to the law of large numbers. For the second term we use the triangle inequality and the $m(\beta)$-Lipschitz condition, to get 
\begin{align*}
    \underset{\{\beta : |\beta|\leq b\}}{\text{sup}} |\mathscr{L}_n(\rho_0, \beta) - \mathscr{L}_n(\hat\rho_{\tilde n}, \beta) | &= \underset{\{\beta : |\beta|\leq b\}}{\text{sup}} \left|\frac{1}{n}\sum_{i=1}^n  \ell(o^i_{\rho_0}; \beta) - \ell(o^i_{\hat\rho_{\tilde n}}; \beta) \right|\\
    &\leq \underset{\{\beta : |\beta|\leq b\}}{\text{sup}} \frac{1}{n}\sum_{i=1}^n \left|  \ell(o^i_{\rho_0}; \beta) - \ell(o^i_{\hat\rho_{\tilde n}}; \beta) \right|\\
    &\leq \underset{\{\beta : |\beta|\leq b\}}{\text{sup}} \frac{1}{n}\sum_{i=1}^n m(\beta) ||O_{\rho_0}-O_{\hat\rho_{\tilde n}}||_{L_2}\\
    &= \underset{\{\beta : |\beta|\leq b\}}{\text{sup}} \frac{1}{n}\sum_{i=1}^n m(\beta) |\rho_0(W_i)-\hat\rho_{\tilde n}(W_i)|,
\end{align*}
where we in the last equality use the definition of the $L_2$ norm. Since $m$ is bounded on the domain of $\beta$ we know that the last term is smaller than the bound on $m$. Thus, we now only need to show $ \frac{1}{n}\sum_{i=1}^n |\rho_0(W_i)-\hat\rho_{\tilde n}(W_i)|\pto 0$. Since convergence in probaility is implied by $L_1$ convergence we can instead show $\E\left[\frac{1}{n}\sum_{i=1}^n |\rho_0(w_i)-\hat\rho_{\tilde n}(w_i)|\right] \to 0$. Using the law of total expectation we get
\begin{align*}
    \E\left[\frac{1}{n}\sum_{i=1}^n |\rho_0(W_i)-\hat\rho_{\tilde n}(W_i)|\right] &= \E\left[\E\left[\frac{1}{n}\sum_{i=1}^n |\rho_0(W_i)-\hat\rho_{\tilde n}(W_i)|\, \big|\, \P_{\widetilde{n}}\right]\right]\\
    &= \E\left[ \E\left[|\rho_0(W)-\hat\rho_{\tilde n}(W)|\, \big|\, \P_{\widetilde{n}}\right]\right]\\
    &= \E\left[ |\rho_0(W)-\hat\rho_{\tilde n}(W)|\right]\\
    &= || \rho_0(W)-\hat\rho_{\tilde n}(W)||_{L_1}\\
    &\leq || \rho_0(W)-\hat\rho_{\tilde n}(W)||_{L_2}
\end{align*}
where we in the second equality use that when the historical data is given the terms in the sum have the same mean value since $\hat\rho_{\tilde n}$ becomes a fixed function. Now using the assumption that $\left|\hat\rho_{\tilde n}\,(W) - \rho_0(W)\right|\overset{L^2}{\longrightarrow} 0$ as $\widetilde{n}\to\infty$ completes the proof. 
\end{proof}

\begin{lemma}\label{lemma:mean_0_star}
    The IF $\phi^*_\Psi$ in \eqref{eq:IF} for the marginal effect estimator $\hat{\Psi}$ has mean 0.
\end{lemma}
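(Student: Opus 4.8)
The plan is to reduce the statement to showing that each counterfactual-mean influence function $\phi^*_a$ has mean zero. Indeed, $r_0' = r_0'(\Psi_1,\Psi_0)$ and $r_1' = r_1'(\Psi_1,\Psi_0)$ are fixed (non-random) scalars, so by linearity of expectation $\E[\phi^*_\Psi] = r_0'\,\E[\phi^*_0] + r_1'\,\E[\phi^*_1]$, and it suffices to prove $\E[\phi^*_a] = 0$ for each $a \in \{0,1\}$.

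For a fixed $a$, I would split $\phi^*_a$ as in \eqref{eq:IF_mean} into the inverse-probability-weighted residual term $\frac{1_a(A)}{\pi_a}\big(Y - \mu^*(W,a)\big)$ and the term $\mu^*(W,a) - \Psi_a$. The second term plainly contributes $\E[\mu^*(W,a)] - \Psi_a$. For the first term I would use the identity $1_a(A)Y = 1_a(A)Y(a)$ (the observed outcome agrees with the potential outcome on the event $\{A = a\}$) to rewrite it as $\frac{1_a(A)}{\pi_a}\big(Y(a) - \mu^*(W,a)\big)$, and then invoke randomization, under which $A$ is independent of $(W, Y(a))$. Since $\E[1_a(A)] = \P(A = a) = \pi_a$, independence gives $\E\!\big[\tfrac{1_a(A)}{\pi_a}(Y(a)-\mu^*(W,a))\big] = \E\big[Y(a) - \mu^*(W,a)\big] = \Psi_a - \E[\mu^*(W,a)]$, using $\E[Y(a)] = \Psi_a$. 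Adding the two pieces, the $\E[\mu^*(W,a)]$ and $\Psi_a$ terms cancel, yielding $\E[\phi^*_a] = 0$. Equivalently, one may simply quote the consistency relation \eqref{eq:consistency}, $\E[\mu^*(W,a)] = \Psi_a$, which forces both terms to vanish separately.

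There is essentially no serious obstacle here; the computation is a short, direct verification. The only points requiring a little care are the measure-theoretic bookkeeping around the randomization assumption of Section~\ref{sec:est} — specifically that $A \indep (W, Y(0), Y(1))$ in an RCT, which is what allows the weight $1_a(A)/\pi_a$ to factor out of the expectation — and the consistency link $Y = Y(A)$ between observed and potential outcomes. Both are already in force in our setting, so I expect the proof to occupy only a few lines.
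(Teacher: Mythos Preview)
Your proposal is correct and follows essentially the same approach as the paper's proof: reduce to $\E[\phi^*_a]=0$ by linearity, rewrite $1_a(A)Y$ as $1_a(A)Y(a)$, use randomization to factor out the weight, and finish via \eqref{eq:consistency} (or, equivalently, by the cancellation you describe). The only cosmetic difference is that the paper applies \eqref{eq:consistency} up front to kill the $\mu^*(W,a)-\Psi_a$ term, whereas your main route keeps both pieces and lets them cancel; you already note this alternative.
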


\begin{proof}
    In this proof we will show that the consistency of the estimated counterfactual means, $\hat{\Psi}_a$ for $a\in \{0, 1\}$, implies $\E[\phi^*_\Psi(O)]=0$. We start by using \eqref{eq:consistency} and observe that the mean of the IF for the counterfactual outcome mean is 0, since for both values of $a$ we have
\begin{align}
\begin{split}
    \E[\phi_a^*(O)]&=\E\left[\frac{1_a(A)}{\pi_a}(Y-\mu^*(W, a))+(\mu^*(W, a)-\Psi_a)\right] \\
    &= \E\left[\frac{1_a(A)}{\pi_a}(Y(a)-\mu^*(W, a))\right] \\
    &= \E\left[Y(a)-\mu^*(W, a)\right] \\
    &=0.
\end{split}
\end{align}
Thus, we can determine the mean of the IF for the marginal effect estimator
\begin{align}
\begin{split}
    \E[\phi_\Psi^*(O)]&=\E\left[r_0'(\Psi_1, \Psi_0)\phi^*_0 + r_1'(\Psi_1, \Psi_0)\phi^*_1\right] \\
    &=r_0'(\Psi_1, \Psi_0) \E\left[\phi^*_0\right] + r_1'(\Psi_1, \Psi_0) \E\left[\phi^*_1\right] \\
    &=0.
\end{split}
\end{align}
\end{proof}

\begin{lemma}\label{lemma:convergence_IF}
    Let the IF for the GLM based marginal effect estimator be given by $\phi_\Psi^*$ in \eqref{eq:IF}. We define $\phi_\Psi$ as the EIF for the marginal effect estimand, i.e. 
\begin{align} \label{eq:EIF}
\begin{split}
        \phi_{\Psi}(O) = r_0'(\Psi_1, \Psi_0)\phi_0(O) + r_1'(\Psi_1, \Psi_0)\phi_1(O),
\end{split}
\end{align}
with EIF for the counterfactual mean,
    \begin{align} \label{eq:EIF_mean}
    \begin{split}
        \phi_a(O) = \frac{1_a(A)}{\pi_a}(Y-\mu(W, a))+(\mu(W, a)-\Psi_a).
    \end{split}
    \end{align}
Then $\phi_\Psi^* \overset{L_2}{\to}\phi_\Psi$ is implied by $\mu^*(W, a) \overset{L^2}{\longrightarrow} \mu(W, a)$ for $n\to\infty$ and for $a\in\{0, 1\}$.
\end{lemma}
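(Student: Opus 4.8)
\textit{Proof proposal.} The plan is to reduce the claim to a single scalar $\L_2$-convergence by (i) writing the per-arm influence-function difference explicitly, (ii) computing its $\L_2$ norm using randomization, and (iii) combining the two arms with the triangle inequality. First I would subtract \eqref{eq:EIF_mean} from \eqref{eq:IF_mean}. Since $\E[\mu(W,a)]=\E[\mu^*(W,a)]=\Psi_a$ (by \eqref{eq:estimand} and \eqref{eq:consistency}), the centering constants are identical and cancel, and the $Y$-terms combine, leaving
\begin{align*}
    \phi_a^*(O) - \phi_a(O) = \left(1 - \frac{1_a(A)}{\pi_a}\right)\bigl(\mu^*(W,a) - \mu(W,a)\bigr).
\end{align*}

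Next I would take $\L_2$ norms. Because treatment is randomized in the trial, $A \indep W$, so $(1 - 1_a(A)/\pi_a)^2$ and $(\mu^*(W,a) - \mu(W,a))^2$ are independent and the expectation factors. Using $\E[1_a(A)/\pi_a]=1$ and $\E[(1_a(A)/\pi_a)^2]=1/\pi_a$ gives $\E[(1-1_a(A)/\pi_a)^2]=(1-\pi_a)/\pi_a$, hence
\begin{align*}
    \|\phi_a^* - \phi_a\|_{\L_2}^2 = \frac{1-\pi_a}{\pi_a}\,\bigl\|\mu^*(\cdot,a) - \mu(\cdot,a)\bigr\|_{\L_2}^2 .
\end{align*}
Since $0<\pi_a<1$, the prefactor is a finite constant (independent of $n$, which is what lets the argument tolerate $\mu^*$ being an $n$-dependent sequence arising from the estimated prognostic score), so the hypothesis $\mu^*(W,a)\overset{\L^2}{\to}\mu(W,a)$ forces $\phi_a^*\overset{\L_2}{\to}\phi_a$ for $a\in\{0,1\}$.

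Finally, because $r$ is continuously differentiable the coefficients $r_0'(\Psi_1,\Psi_0)$ and $r_1'(\Psi_1,\Psi_0)$ in \eqref{eq:IF} and \eqref{eq:EIF} are the same fixed finite numbers, so the triangle inequality yields
\begin{align*}
    \|\phi_\Psi^* - \phi_\Psi\|_{\L_2} \le |r_0'(\Psi_1,\Psi_0)|\,\|\phi_0^* - \phi_0\|_{\L_2} + |r_1'(\Psi_1,\Psi_0)|\,\|\phi_1^* - \phi_1\|_{\L_2} \longrightarrow 0,
\end{align*}
which is the desired conclusion.

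There is no deep obstacle here; the computation is short. The two points that require care are: invoking \eqref{eq:consistency} so that the centering terms genuinely cancel rather than leaving a residual $\Psi_a^* - \Psi_a$, and using randomization $A \indep W$ to split the expectation into the product above — the latter is precisely the structural feature of the RCT that makes the bound clean and would be absent in an observational analysis. I would also briefly note that compactness of $\mathcal W,\mathcal Y$ together with uniform boundedness of $\hat\rho_{\tilde n}$ (from the hypotheses of Theorem~\ref{thm:main}) guarantees $\mu^*(\cdot,a)-\mu(\cdot,a)$ lies in $\L_2$ so that all norms above are finite.
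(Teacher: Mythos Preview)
Your proof is correct and follows essentially the same approach as the paper: both compute $\phi_a^*-\phi_a$ as a multiple of $\mu^*(W,a)-\mu(W,a)$, use randomization $A\indep W$ to factor the expectation, and then combine the two arms via the fixed scalars $r_a'(\Psi_1,\Psi_0)$. Your version is slightly tidier---you write the per-arm difference as the product $(1-1_a(A)/\pi_a)(\mu^*-\mu)$ and invoke the triangle inequality, whereas the paper expands the square term by term and handles the cross term with an $L_2\times L_2\Rightarrow L_1$ argument---but these are cosmetic differences in the same computation.
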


\begin{proof}
Assume $\mu^*(W, a) \overset{L^2}{\longrightarrow} \mu(W,a)$ for $n\to\infty$, then $\phi_a^* \overset{L_2}{\to}\phi_a$, since
\begin{align}
\begin{split}
    \E\left[\left(\phi_a^*(O)-\phi_a(O) \right)^2\right]=&\E\left[\left(\frac{1_a(A)}{\pi_a}(Y-\mu^*(W, a))+(\mu^*(W, a)-\Psi_a) - \left(\frac{1_a(A)}{\pi_a}(Y-\mu(W, a))+(\mu(W, a)-\Psi_a)\right)\right)^2\right] \\
    =&\E\left[\left(\frac{1_a(A)}{\pi_a}(\mu(W, a)-\mu^*(W, a))+\mu^*(W, a)-\mu(W, a)\right)^2\right] \\
    =&\E\left[\frac{1_a(A)}{\pi_a^2}(\mu(W, a)-\mu^*(W, a))^2\right]+\E\left[(\mu^*(W, a)-\mu(W, a))^2\right]\\
    &+ 2\E\left[\frac{1_a(A)}{\pi_a}(\mu^*(W, a)-\mu(W, a))(\mu(W, a)-\mu^*(W, a))\right]  \\
    =&\frac{1}{\pi_a}\E\left[(\mu(W, a)-\mu^*(W, a))^2\right]+\E\left[(\mu^*(W, a)-\mu(W, a))^2\right]\\
    &- 2\frac{1}{\pi_a}\E\left[(\mu^*(W, a)-\mu(W, a))^2\right] \to 0,
\end{split}
\end{align}
for $n\to\infty$. Now the convergence of the IF for the marginal effect estimator follows since, 
\begin{align}
\begin{split}
    \E\left[\left(\phi_\Psi^*(O)- \phi_\Psi(O)\right)^2\right]=&\E\left[\left(r_0'(\Psi_1, \Psi_0)\phi^*_0 + r_1'(\Psi_1, \Psi_0)\phi^*_1- r_0'(\Psi_1, \Psi_0)\phi_0 + r_1'(\Psi_1, \Psi_0)\phi_1\right)^2\right] \\
    =&\E\left[\left(r_0'(\Psi_1, \Psi_0)(\phi^*_0- \phi_0) + r_1'(\Psi_1, \Psi_0)(\phi^*_1-\phi_1)\right)^2\right] \\
    =&r_0'(\Psi_1, \Psi_0)^2\E\left[(\phi^*_0- \phi_0)^2\right] + r_1'(\Psi_1, \Psi_0)^2\E\left[(\phi^*_1-\phi_1)^2\right] \\
    &+ 2r_0'(\Psi_1, \Psi_0)r_1'(\Psi_1, \Psi_0) \E\left[(\phi^*_1-\phi_1)(\phi^*_0-\phi_0)\right].
\end{split}
\end{align}
The first two terms does converge to 0 for $n\to\infty$, as we just argued for. The last term converges to 0 since it holds that if $f_n\overset{L_2}{\to} f$ and $g_n\overset{L_2}{\to} g$ then $f_ng_n\overset{L_1}{\to} fg$.
\end{proof}

\subsection{Efficiency of GLM adjusted for true conditional control mean}

We are now ready to establish that under link-scale additive treatment assignment the GLM based plug-in estimator by \citet{RosenblumvanderLaan+2010} which adjusts for the link-scale true conditional control mean is efficient.

\begin{theorem}\label{theorem:true_prog}
Assume that the true treatment effect being additive on the link scale, $g\left(\mu(W, 1)\right) = \zeta + g\left(\mu(W, 0)\right)$ and let the true counterfactual mean function be denoted by $\mu(w,0)=\E[Y(0) | W= w]$. If we include $g(\mu(W,0))$ as an additional covariate in the GLM based plug-in procedure suggested by \citet{RosenblumvanderLaan+2010}, then $\mu^*(W, a) = \mu(W, a)$ for $a\in\{0, 1\}$.
\end{theorem}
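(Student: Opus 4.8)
The plan is to exhibit an explicit value of the working GLM's coefficient vector that reproduces $\mu(W,\cdot)$ exactly in both arms, and then to argue --- via strict concavity of the expected log-likelihood --- that this value must be the population maximizer $\beta^*$, so that $\mu^* = \mu$.

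First I would write out the working model. With the prognostic covariate added on the link scale, the linear predictor of a generic observation is $g(\mu_\beta(W,A)) = \beta_0 + \beta_A A + \beta_\rho\, g(\mu(W,0)) + \sum_j \beta_j f_j(W,A)$, where the $f_j$ are the (optional) pre-specified extra adjustments. Consider the candidate $\tilde\beta$ given by $\tilde\beta_0 = 0$, $\tilde\beta_A = \zeta$, $\tilde\beta_\rho = 1$, and $\tilde\beta_j = 0$ for every $j$. At $A = 0$ the linear predictor is exactly $g(\mu(W,0))$, so $\mu_{\tilde\beta}(W,0) = \mu(W,0)$; at $A = 1$ it is $\zeta + g(\mu(W,0))$, which by the link-scale additivity hypothesis equals $g(\mu(W,1))$, so $\mu_{\tilde\beta}(W,1) = \mu(W,1)$. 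Hence $\mu_{\tilde\beta}(W,a) = \mu(W,a)$ for $a \in \{0,1\}$.

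Next I would verify that $\tilde\beta$ solves the population score equation. For a GLM with canonical link the per-observation score equals $(Y - \mu_\beta(W,A))$ times the design row $X$ (including the intercept term), so, conditioning on $(W,A)$ and using $\E[Y \mid W,A] = \mu(W,A)$,
\[
\E[(Y - \mu_{\tilde\beta}(W,A))\,X] = \E[\,X\,(\E[Y\mid W,A] - \mu(W,A))\,] = 0.
\]
It is immaterial that the coefficients on the $f_j$ were set to zero: the residual $Y - \mu_{\tilde\beta}(W,A)$ has conditional mean zero pointwise in $(W,A)$, so the expected score vanishes in \emph{every} coordinate of the design. Thus $\tilde\beta$ is a stationary point of the expected log-likelihood $\mathscr{L}$.

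Finally I would invoke the strict concavity of the expected log-likelihood of a canonical-link GLM on the bounded parameter region containing $\beta^*$ --- the fact of \citet{R+vdL2009} recalled in \autoref{sec:est} --- so that the stationary point $\tilde\beta$ is the unique global maximizer; hence $\beta^* = \tilde\beta$ and $\mu^*(W,a) = \mu(W,a)$. This is exactly the GLM analogue of \citet{Schuler2020}'s argument for ANCOVA, which is the special case $g = \mathrm{id}$. The only genuine bookkeeping is to confirm that $\tilde\beta$ actually lies in the assumed region $\{|\beta|\le b\}$ and that $g(\mu(W,0))$ is an admissible, bounded covariate; this is where compactness of $\mathcal{W}$ and $\mathcal{Y}$ --- together with, for the Gamma and inverse-Gaussian families, the extra boundedness restriction of \citet{RosenblumvanderLaan+2010} --- is used. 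I expect this to be the only obstacle, the main content being the one-line stationarity check above.
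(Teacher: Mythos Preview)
Your proposal is correct and follows essentially the same route as the paper: propose the candidate $\tilde\beta=(0,\zeta,1,0,\ldots,0)$, observe that it reproduces $\mu(W,a)$ under link-scale additivity, verify it solves the population score equations, and invoke strict concavity for uniqueness. The only difference is that the paper checks the four score-equation components one by one, whereas your single tower-property argument (the residual has conditional mean zero given $(W,A)$, and every design column is a function of $(W,A)$) dispatches all coordinates at once --- a cleaner presentation of the same idea.
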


\begin{proof}
Let $\beta_{\mu_0}^* = (\beta_0^*, \beta_a^*, \beta_g^*, \beta_w^*)$ maximize $\mathscr{L}(\mu_0, \beta)=\E_\mathcal{P}[\ell(O_{\mu_0}\, ; \beta)]$ over a compact parameter space $\{\beta : |\beta|\leq b\}$. The limiting GLM estimator where we use the true prognostic score $\mu(W, 0)$ uses the parameters $\beta_{\mu_0}^*$ which is the solution to $\mathscr{S}(\mu_0; \beta)=0$. The solution is always unique for the GLMs we consider, due to the strict concavity of the expected log-likelihood. In this proof we will show that the unique solution $\beta_{\mu_0}^*$ leads to $\mu^*(W, a) = \mu(W, a)$ for $a\in\{0, 1\}$.  

First, using \eqref{eq:consistency} derived from the consistency of the estimated counterfactual means $\widehat{\Psi}_a$, we have for $A=0$
\begin{align} \label{eq:oracle1}
\begin{split}
        & \E\left[\mu(W, 0)\right]= \E[Y(0)] =  \E[\mu^*(W, 0)]=\E\left[g^{-1}\left(\beta_0^*+g(\mu(W, 0))\beta_g^* + W\beta^*_w \right)\right]
\end{split}
\end{align}
and for $A=1$
\begin{align} \label{eq:oracle2}
\begin{split}
         \E\left[\mu(W, 1)\right]&= \E\left[g^{-1}\left(\zeta + g(\mu(w, 0))\right)\right] = \E[Y(1)] =  \E[\mu^*(W, 1)]\\
        &=\E\left[g^{-1}\left(\beta_0^*+ \beta_a^* + \left(g(\mu(W, 1))- \zeta\right)\beta_g^* + W\beta^*_w\right)\right]. 
\end{split}
\end{align}
In \eqref{eq:score_nb} the score for an distribution in the exponential family using the canonical link function is given. From this we can determine that $\mathscr{S}(\mu_0; \beta)=\E\left[\left(Y - \mu^*(W, A)\right)\cdot   \frac{\partial \eta}{\partial \beta}\right]$, where $\frac{\partial \eta}{\partial \beta} = \left(1, A, g(\mu(W, 0)), W\right)^\top$. From what is inside the expectation of \eqref{eq:oracle1} and \eqref{eq:oracle2} we see that if we choose $\beta_{\mu_0}^*=(0, \zeta, 1, 0)$ then $\mu^*(W, A)=\mu(W, A)$. Note that we are using that the treatment effect is additive on the link scale to conclude this. We will now show that the score equation is fulfilled when $\mu^*(W, A)=\mu(W, A)$. Thus, the $\beta_{\mu_0}^*$ that ensures $\mu^*(W, A)=\mu(W, A)$ is the unique solution to the score equation. Using the law of total expectation conditioning on $A$, the first score equation is 
\begin{align}
\begin{split}
    \E[Y-\mu^*(W, A)]=\E\left[Y-\mu^*(W, A)|A = 1\right]\pi_1 + \E\left[Y-\mu^*(W, A)|A = 0\right]\pi_0 . 
\end{split}
\end{align}
This is solved by $\mu^*(W, A)=\mu(W, A)$. The second score equation is also solved in this case, since 
\begin{align}
\begin{split}
    \E[A\left(Y-\mu^*(W, A)\right)]&=\E[AY(1)]-\E[A\mu^*(W, A)]\\
    &= \pi_1\E[Y(1)]-\E[A\E[Y|W, A]] \\
    &= \pi_1\E[Y(1)]-\E[\E[A Y(1)|W, A]]\\
    &= 0.
\end{split}
\end{align}
The third score equation is solved, since
\begin{align}
\begin{split}
    \E[g(\mu(W, 0))\left(Y-\mu^*(W, A)\right)]&= \E[g(\mu(W, 0))\left(Y-\mu(W, A)\right)]\\
    &= \E\left[\E\left[g(\mu(W, 0))\left(Y-\mu(W, A)\right)| W, A\right]\right]\\
    &= \E\left[g(\mu(W, 0))\E\left[Y-\mu(W, A)| W, A\right]\right]\\
    &= 0.
\end{split}
\end{align}
The fourth and last score equation is also solved using the same argument as for the third score equation. Thus, $\beta_{\mu_0}^*$ is the unique solution to the score equation. In other words, if we have the true conditional mean, including it in the link scale as a covariate ensures that the GLM simply "passes it through" as a prediction under these conditions. 
\end{proof}

Using this result we can conclude that, the IF in \eqref{eq:IF} equals the EIF in \eqref{eq:EIF}, thus we have established efficiency of the GLM based plug-in method using the true conditional control mean.

\subsection{Efficiency of GLM adjusted for emulated conditional control mean}\label{app:thm_main}
We are now ready to restate and proof \autoref{thm:main}. 
\restatedtheorem*

\begin{proof} 
First, let $\beta_{\widetilde{n}, \, n} = (\widetilde{\beta}_0, \widetilde{\beta}_a, \widetilde{\beta}_g, \widetilde{\beta}_w)$ maximize $\mathscr{L}_n(\hat\rho_{\tilde n}, \beta)= \frac{1}{n}\sum_{i=1}^n \ell(O_{\hat\rho_{\tilde n}}; \beta)$ and denote the conditional mean found by using the GLM model with $\beta_{\widetilde{n}, \, n}$ by
\begin{align*}
    \hat{\mu}_{\widetilde{n}, n}(W, A) = g^{-1}\left(\widetilde{\beta}_0+ \widetilde{\beta}_aA + g(\hat\rho_{\tilde n}\,(W))\widetilde{\beta}_g + W\widetilde{\beta}_w\right).
\end{align*}
Let $\beta_{\mu_0}^*$ maximize $\mathscr{L}(\mu_0, \beta)=\E_\mathcal{P}[\ell(O_{\mu_0}\, ; \beta)]$ over the compact parameter space $\{\beta : |\beta|\leq b\}$. Since we have assumed that $\mathcal{W}$ and $\mathcal{Y}$ are compact the Lipschitz condition is fulfilled and we can use \autoref{lemma:convergence_beta} to obtain $\beta_{\widetilde{n},\, n} \pto \beta_{\mu_0}^*$ for any $\hat\rho_{\tilde n}(W) \overset{L^2}{\longrightarrow} \mu(W, 0)$ when $\widetilde{n}, n \to \infty$. Since the parameter space is compact the convergence also hold in $L^2$, $\beta_{\widetilde{n},\, n} \overset{L_2}{\to} \beta_{\mu_0}^*$, which is a special case of Vitali convergence theorem.

Since $\hat\rho_{\tilde n}$ is a uniformly bounded random function and $\mu(W, 0)$ is bounded the canonical link function for the GLMs that we consider are all uniformly bounded with these as input. This implies that $\hat\rho_{\tilde n}(W) \overset{L^2}{\longrightarrow} \mu(W, 0) \implies g(\hat\rho_{\tilde n}(W)) \overset{L^2}{\longrightarrow} g(\mu(W, 0))$. Now we can use that $\mathcal{W}$ is compact to preserve to convergence in $L_2$ when including $g(\hat\rho_{\tilde n}(W))$ in the linear predictor of the GLM. Consequently using \autoref{theorem:true_prog} we have 
\begin{align} \label{eq:mu_conv}
\begin{split}
        \hat{\mu}_{\widetilde{n}, n}(W, A) \overset{L^2}{\longrightarrow} \mu(W, a),
\end{split}
\end{align}
when both $n$ and $\tilde{n}$ to go to $\infty$, which is obtained for $n=\mathcal{O}(\tilde{n})$ when $n\to\infty$, and when $\hat\rho_{\tilde n}(W) \overset{L^2}{\longrightarrow} \mu(W, 0)$ for $\tilde{n} \to \infty$.

\paragraph{}
The goal is now to show that the GLM based plug-in estimator $\hat{\Psi}$ in \eqref{eq:estimator} proposed by \citet{RosenblumvanderLaan+2010} that uses $g(\hat\rho_{\tilde n}\,(W))$ as an additional adjustment covariate is efficient under the assumption of additive treatment effect on the link scale. 
Now, we specifically want to show
\begin{align} \label{eq:goal}
\begin{split}
         \hat{\Psi}-\Psi= \frac{1}{n}\sum_{i=1}^n \phi_\Psi(O_i) + o_P(n^{-1/2}),
\end{split}
\end{align}
where $\hat{\Psi}-\Psi = \frac{1}{n}\sum_{i=1}^n \phi_\Psi(O_i) = o_P(n^{-1/2})$ is short-hand notation for $\sqrt{n}\left(\hat{\Psi}-\Psi- \frac{1}{n}\sum_{i=1}^n \phi_\Psi(O_i)\right) \overset{P}{\to} 0$. 
Now, fix the prognostic model $\hat\rho_{\tilde n}\,$ and include this in the GLM model on the link scale. Then $\hat{\mu}_{\widetilde{n}, n}(W, A)$ is the conditional mean found by using the GLM fit with $g(\hat\rho_{\tilde n}\,(W))$ included as an adjustment covariate, i.e. using $\beta_{\widetilde{n}, n}$. For a fixed $\widetilde{n}$ the limiting GLM conditional mean will be denoted by $\mu^*_{\widetilde{n}}(W, A)$, which we know fulfills \eqref{eq:consistency}. The estimator $\hat{\Psi}$ is RAL with IF $\phi_\Psi^*$ given in \eqref{eq:IF} using $\mu^*_{\widetilde{n}}(W, A)$, and we have 
\begin{align} 
\begin{split}
        \hat{\Psi}-\Psi&= \frac{1}{n}\sum_{i=1}^n \phi^*_\Psi(O_i) + o_P(n^{-1/2})\\
        &= \frac{1}{n}\sum_{i=1}^n \phi_\Psi(O_i) + \frac{1}{n}\sum_{i=1}^n \left(\phi^*_\Psi(O_i)-\phi_\Psi(O_i) \right)+ o_P(n^{-1/2}). 
\end{split}
\end{align}
This implies that $\hat{\Psi}$ is efficient if $\frac{1}{n}\sum_{i=1}^n \left(\phi^*_\Psi(O_i)-\phi_\Psi(O_i) \right) = o_P(n^{-1/2})$. Since convergence in $L_2$ implies convergence in probability, we can instead show  
\begin{align} 
\begin{split}
        n\E\left[\left(\frac{1}{n}\sum_{i=1}^n \left(\phi^*_\Psi(O_i)-\phi_\Psi(O_i) \right)\right)^2\right]\to 0.
\end{split}
\end{align}
An IF have mean zero when the
expectation is taken under the distribution to which they pertain. This implies that $\E[\phi_\Psi(O)]=0$. However, due to the consistency of $\hat{\Psi}_a$ it can be shown that $\E[\phi^*_\Psi(O)]=0$, see \autoref{lemma:mean_0_star}. Therefore, we have 
\begin{align} 
\begin{split}
        n\E\left[\left(\frac{1}{n}\sum_{i=1}^n \left(\phi^*_\Psi(O_i)-\phi_\Psi(O_i) \right)\right)^2\right]&= n\var\left(\frac{1}{n}\sum_{i=1}^n \left(\phi^*_\Psi(O_i)-\phi_\Psi(O_i) \right)\right)\\
        &= \frac{1}{n}\var\left(\sum_{i=1}^n \left(\phi^*_\Psi(O_i)-\phi_\Psi(O_i) \right)\right)\\
        &= \frac{1}{n}\sum_{i=1}^n\var\left(\phi^*_\Psi(O_i)-\phi_\Psi(O_i) \right)\\
        &= \var\left(\phi^*_\Psi(O)-\phi_\Psi(O) \right)\\
        &= \E\left[\left(\phi^*_\Psi(O)-\phi_\Psi(O) \right)^2\right],
\end{split}
\end{align}
where we are able to pass the variance through the sum in equality three since the prognostic model $\hat\rho_{\tilde n}$ is trained on historical data independent from the i.i.d trial data. The last equality implies that we just need $\phi_\Psi^* \overset{L_2}{\to}\phi_\Psi$ to complete the proof. In \autoref{lemma:convergence_IF} we show that $\phi_\Psi^* \overset{L_2}{\to}\phi_\Psi$ if $\mu^*_{\tilde{n}}(W, A) \overset{L^2}{\longrightarrow} \mu(W, a)$ under a known exposure model $\pi_a$. In \eqref{eq:mu_conv} we showed that $\hat\rho_{\tilde n}(W) \overset{L^2}{\longrightarrow} \mu(W, 0)$ gives $\hat{\mu}_{\widetilde{n}, n}(W, A) \overset{L^2}{\longrightarrow} \mu(W, a)$, when the treatment effect is additive on the link scale and for $n=\mathcal{O}(\tilde{n})$. However, we also know that $\hat{\mu}_{\widetilde{n}, n}(W, a) \overset{L^2}{\longrightarrow} \mu^*_{\widetilde{n}}(W, a)$ for $n\to \infty$. Using the squeeze lemma for convergence means that we are done.
\end{proof}

\section{. \hspace{0.1cm} Relative Efficiency of ATE Estimation with Nested Regressions in Randomized Trials}\label{app:ATE_nested}

We will use $\E_P[\cdot]$ to denote expectation with respect to a measure $P$. $\mathcal G^\perp$ denotes the orthogonal complement of a closed subspace $\mathcal G$ of $\mathcal L_2$. Let $\mathcal M$ be a set of measures over $(W, A, Y)$ such that we are in an 1:1 RCT setting, i.e. $A \sim \text{Bern}(1/2)$ independent of $W$. For any generic $P \in \mathcal M$, let $\mu_{P}(W, A) = \E_P[Y | W, A]$. We are interested in estimating the ATE evaluated at the true but unknown distribution $\mathcal{P}$, i.e. $\Psi = \Psi(\mathcal{P}) = \E_\mathcal{P}[\mu_{\mathcal{P}}(W, 1) -\mu_{\mathcal{P}}(W, 0)]$. Using the abbreviation $\mu(W, a) = \mu_{\mathcal{P}}(W, a)$, the estimand is $\Psi =\E_\mathcal{P} [\mu(W, 1)-\mu(W, 0)]$. We also use the abbreviation $\Psi_a = \E_\mathcal{P}[\mu(W, a)]$ for the treatment-specific mean at this distribution. Any RAL estimator of the ATE estimand has IF
$\phi^*(O) = \phi_{\mu^*,1}(O) - \phi_{\mu^*,0}(O)$ where
\begin{equation}\label{eq:IF_gen}
\phi_{\mu^*,a}(O) = \frac{1_a(A)}{1/2}(Y-\mu^*(W, a)) + (\mu^*(W, a) - \Psi_a)
\end{equation}
for some function $\mu^*(W, a)$ such that $\E_\mathcal{P}[\mu^*(W, 1) - \mu^*(W, 0)] = \Psi$. Notice that now $\mu^*$ does not have to be the large sample GLM model that we have otherwise considered in this study. The EIF is now $\phi(O) = \phi_{1}(O)-\phi_{0}(O)$ using \eqref{eq:EIF_mean}. The asymptotic variance of the RAL estimator with IF $\phi^*$ is $\|\phi^*\|_{L_2}^2$. From now on, when we write $\E[\cdot]$ we mean $\E_\mathcal{P}[\cdot]$. We will not be taking expectations under any distributions other than $\mathcal{P}$ so this will not introduce any ambiguity.

Assume $\overline\mu(\cdot, a)$ are the population minimizers of mean-squared error in some class of functions $\overline{\mathcal M}$ which is a closed subspace of $\mathcal L_2$. Let $\widetilde\mu(\cdot, a)$ be the equivalent for a larger closed subspace $\widetilde{\mathcal M} \supset \overline{\mathcal M}$. For example, imagine this is the limit of two regression where $\widetilde\mu(\cdot, a)$ uses a \textit{larger} set of basis functions for $W$ (e.g. including an estimated prognostic score). Our high-level question is whether adding terms in the regression always reduces the asymptotic variance. Concretely, abbreviating $\widetilde\phi = \phi_{\widetilde\mu}$ and $\overline\phi = \phi_{\overline\mu}$,  is $\|\widetilde\phi\|_{L_2} \le \|\overline\phi\|_{L_2}$? Indeed, a special case of this was proven by \citet{Schuler2020}, who showed that when $\pi_1=\pi_0$ or when there is a constant treatment effect adding covariates to the linear regression will only increase the asymptotic efficiency of the ATE estimate. However, when $\pi_1\neq \pi_0$ \textit{and} there is a heterogeneous treatment effect there exists cases where adding covariates will result in a higher asymptotic variance compared to the difference-in-mean estimator. In this section we are assuming 1:1 randomization and hence $\pi_1=\pi_0$, so the result does indeed hold for the linear model in the scenario that we are considering.

\subsection{Influence Function Geometry} 
By the geometry of influence functions in this setting, the difference $\phi^* -\phi$ is orthogonal to $\phi$ for all influence functions $\phi^*$ (see \autoref{fig:if}) \cite{tsiatis2007semiparametric}. Therefore by the Pythagorean Theorem the norm of $\widetilde\phi$ is less than that of $\overline\phi$ if and only if $\widetilde\phi$ is closer to the efficient IF than $\overline\phi$ is, i.e. if $\| \widetilde\phi - \phi \|_{L_2} \le \| \overline\phi - \phi \|_{L_2}$. The utility of this is that terms like $\|\phi^* - \phi\|_{L_2}$ can be expressed quite simply.

\begin{figure}[ht!]
\centering
\begin{tikzpicture}[scale=3,tdplot_main_coords]
    \def\x{.5}
    \filldraw[
        draw=teal,
        fill=teal!20,
    ]          (-1.5,-0.5,-1)
            -- (1,-0.5,-1)
            -- (1,1,-1)
            -- (-1.5,1,-1)
            -- cycle; 
    \node[teal] at (-0.3,-1.4,-1) {Influence Functions of $\Psi$ at $\mathcal{P}$};

    \coordinate (origin) at (0,0,0);
    
    \coordinate (EIF) at (0,0,-1);
    \draw[blue, thick, <->] (0,0,-1.1) -- (0,0,0.3) node[right] {Tangent Space of $\mathcal M$ at $\mathcal P$}; 

    \coordinate (tilde-phi) at (0.5,0.7,-1);
    \fill (tilde-phi) circle (0.5pt) node[right] {$\widetilde\phi$};
    \draw[thin,->] (origin) -- (tilde-phi) node[] {}; 

    \coordinate (bar-phi) at (-1,0.75,-1);
    \fill (bar-phi) circle (0.5pt) node[right] {$\overline\phi$};
    \draw[thin,->] (origin) -- (bar-phi) node[] {}; 
    
    \fill (EIF) circle (0.5pt) node[left] {$\phi$};
    \fill (origin) circle (0.5pt) node[left] {$0$};
    
    \draw[gray, dashed, thin,->] (EIF) -- (bar-phi) node[midway, above] {$\overline\phi-\phi$}; 
    \draw[gray, dashed, thin,->] (EIF) -- (tilde-phi) node[midway, below left] {$\widetilde\phi-\phi$}; 
    \end{tikzpicture}
\caption{Conceptual illustration of various influence functions. All influence functions (up to an offset of $\phi$) are orthogonal to the tangent space of $\mathcal M$ at $\mathcal P$.}
\label{fig:if}
\end{figure}

\begin{lemma}
\label{thm:if-norm-diff}
Abbreviate $\phi^* = \phi^*_\mu$ for some allowable $\mu^*$. In our setting, $\|\phi^* - \phi\| = \|\omega^* -\omega \|$ where $\omega^*(W) = \mu^*(W, 1) + \mu^*(W, 0)$ and $\omega(W) = \mu(W, 1) + \mu(W, 0)$.
\end{lemma}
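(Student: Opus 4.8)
The plan is to compute the difference $\phi^* - \phi$ of the two influence functions directly; in the $1{:}1$ randomized setting it collapses onto a function of $W$ times a sign that depends only on $A$, and the identity falls out. First I would work arm-by-arm. With the shorthand $\delta_a(W) = \mu^*(W,a) - \mu(W,a)$, and using $\tfrac{1_a(A)}{1/2} = 2\cdot 1_a(A)$,
\begin{align*}
\phi_{\mu^*,a}(O) - \phi_{\mu,a}(O)
= \frac{1_a(A)}{1/2}\bigl(\mu(W,a) - \mu^*(W,a)\bigr) + \bigl(\mu^*(W,a) - \mu(W,a)\bigr)
= \bigl(1 - 2\cdot 1_a(A)\bigr)\,\delta_a(W).
\end{align*}
The key point is that the only $Y$-dependent term, $\tfrac{1_a(A)}{1/2}Y$, is common to $\phi_{\mu^*,a}$ and $\phi_{\mu,a}$ and cancels, and likewise the centering constant $\Psi_a$ is the same in both influence functions — it is the true treatment-specific mean $\E[\mu(W,a)] = \E[Y(a)]$, which does not involve the working regression $\mu^*$ — so it cancels too.

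Next I would assemble $\phi^* - \phi = (\phi_{\mu^*,1} - \phi_{\mu,1}) - (\phi_{\mu^*,0} - \phi_{\mu,0})$ and read off the two cases. On $\{A=1\}$ this equals $-\delta_1(W) - \delta_0(W)$, and on $\{A=0\}$ it equals $\delta_1(W) + \delta_0(W)$; in either case
\begin{align*}
\bigl(\phi^*(O) - \phi(O)\bigr)^2 = \bigl(\delta_0(W) + \delta_1(W)\bigr)^2 \quad\text{almost surely.}
\end{align*}
Since $\delta_0(W) + \delta_1(W) = \mu^*(W,1) + \mu^*(W,0) - \mu(W,1) - \mu(W,0) = \omega^*(W) - \omega(W)$, taking $\E = \E_\mathcal P$ of both sides yields $\|\phi^* - \phi\|^2 = \E\bigl[(\omega^*(W) - \omega(W))^2\bigr] = \|\omega^* - \omega\|^2$, which is the claim. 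Equivalently, one can note that $1 - 2\cdot 1_1(A) = -(1 - 2\cdot 1_0(A))$, so that $\phi^* - \phi = (1 - 2\cdot 1_1(A))\bigl(\delta_0(W) + \delta_1(W)\bigr)$, and then use $(1 - 2\cdot 1_1(A))^2 \equiv 1$.

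I do not anticipate a genuine obstacle here: once the two cancellations are spotted, the argument is a two-line computation, and the $A$-dependence is killed because $1{:}1$ randomization forces $1 - 2\cdot 1_a(A)$ to be a $\pm 1$ sign that flips between the arms. The single point that deserves a sentence of justification is why the centering constant $\Psi_a$ is the same in $\phi^*$ as in $\phi$; this follows because every admissible influence function of the form in \eqref{eq:IF_gen} is mean-zero under $\mathcal P$, which pins down $\Psi_a = \E_\mathcal P[Y(a)]$ regardless of $\mu^*$, so there is nothing to reconcile. The harder geometric content of the appendix — the Pythagorean argument in the space of influence functions — is carried out separately; this lemma is exactly the bookkeeping identity that feeds into it.
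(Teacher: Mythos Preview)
Your proof is correct and follows essentially the same route as the paper: the paper also computes $\phi_a^* - \phi_a = (1 - 2\cdot 1_a(A))(\mu^*(W,a) - \mu(W,a))$, writes $Z_a = 1 - 2\cdot 1_a(A)$, uses $Z_1 = -Z_0$ to collapse the difference to $Z_1(\omega^* - \omega)$, and then squares. Your observation that $(1 - 2\cdot 1_1(A))^2 \equiv 1$ pointwise is in fact slightly more direct than the paper's appeal to independence of $Z_1$ from $W$ together with $\E[Z_1^2]=1$, but the two arguments are otherwise identical.
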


\begin{proof}
Firstly, we can decompose the difference as,
\begin{align*}
\phi^* - \phi
&= (\phi_1^* - \phi_0^*) - (\phi_1 - \phi_0)\\
&= (\phi_1^* - \phi_1) - (\phi_0^* - \phi_0).
\end{align*}
Using \eqref{eq:IF_gen}, we have
\begin{align*}
\phi_{a}^*(W, A, Y) - \phi_{a}(W, A, Y) 
&= 2 \cdot 1_a(A)\bigl(\mu(W, a)-\mu^*(W, a)\bigr) + \bigl(\mu^*(W, a)-\mu(W, a)\bigr) \\
&= \bigl(1 - 2\cdot1_a(A)\bigr)\, \bigl(\mu^*(W, a)-\mu(W, a)\bigr) \\
&= Z_a \bigl(\mu^*(W, a)-\mu(W, a)\bigr),
\end{align*}
where $Z_a = 1 - 2\cdot1_a(A)$ is a symmetric Bernoulli variable which takes value $-1$ when $A=a$ and $1$ otherwise. By definition, $Z_1 = -Z_0$. Since they depend only on $A$, the variables $Z_a$ are independent from $W$. Thus, 
$
\phi^* - \phi
= Z_1(\mu^*(W, 1)-\mu(W, 1)) - Z_0 (\mu^*(W, 0)-\mu(W, 0)) = Z_1(\omega^* - \omega)
$. Thus,
\begin{align*}
\|\phi - \phi^* \|_{L_2}^2
&= \E\left[Z_1^2\bigl(\omega(W) - \omega^*(W)\bigr)^2\right] \\
&= \E\left[Z_1^2\right] \E\left[\bigl(\omega(W) - \omega^*(W)\bigr)^2\right] \\
&= \|\omega - \omega^* \|_{L_2}^2 \\
\end{align*}
\end{proof}

\begin{corollary}
\label{thm:if-norm-diff-2}
Let $f(w, a), g(w, a)$ index $\phi_f, \phi_g$. Then $\|\phi_g\|^2 - \|\phi_f\|^2 = \|\omega_g-\omega\|^2 - \|\omega_f-\omega\|^2$.
\end{corollary}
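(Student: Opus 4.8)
The plan is to derive the corollary from Lemma~\ref{thm:if-norm-diff} together with the influence-function geometry recalled above. Write $\phi$ for the efficient influence function of $\Psi$ at $\mathcal P$. By the geometry illustrated in Figure~\ref{fig:if}, every influence function $\phi^*$ of the form \eqref{eq:IF_gen} satisfies $\langle \phi^* - \phi,\ \phi\rangle_{L_2} = 0$: the difference $\phi^* - \phi$ lies in the orthogonal complement of the tangent space of $\mathcal M$ at $\mathcal P$, whereas $\phi$ lies in that tangent space. For a self-contained check one can verify this orthogonality by hand, starting from the identity $\phi^* - \phi = Z_1(\omega^* - \omega)$ established inside the proof of Lemma~\ref{thm:if-norm-diff} and the decomposition $\phi = b(W) + r(W,A,Y)$, where $b(W) = \mu(W,1) - \mu(W,0) - \Psi$ has mean zero and $\E[r \mid W, A] = 0$. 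The inner product $\E[Z_1(\omega^* - \omega)\,\phi]$ then splits into a term against $b$, which vanishes because $Z_1$ depends only on $A$, $A \indep W$, and $\E[Z_1] = 0$, and a term against $r$, which vanishes after conditioning on $(W,A)$.

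Granting the orthogonality, the Pythagorean theorem gives $\|\phi^*\|_{L_2}^2 = \|\phi\|_{L_2}^2 + \|\phi^* - \phi\|_{L_2}^2$ for every allowable $\phi^*$. Applying this with $\phi^* = \phi_g$ and with $\phi^* = \phi_f$ and subtracting cancels the common term $\|\phi\|^2$, so that $\|\phi_g\|^2 - \|\phi_f\|^2 = \|\phi_g - \phi\|^2 - \|\phi_f - \phi\|^2$. Finally, Lemma~\ref{thm:if-norm-diff} identifies $\|\phi_g - \phi\|^2 = \|\omega_g - \omega\|^2$ and $\|\phi_f - \phi\|^2 = \|\omega_f - \omega\|^2$, which is precisely the claimed identity.

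I expect the only point needing care to be the orthogonality step, specifically that it applies to the $f$ and $g$ in question. If one appeals to the abstract semiparametric geometry it formally needs $\phi_f$ and $\phi_g$ to be genuine influence functions of $\Psi$, which holds whenever $\E_{\mathcal P}[\mu^*(W,1) - \mu^*(W,0)] = \Psi$ --- part of what it means for $f$ and $g$ to ``index'' $\phi_f$ and $\phi_g$ through \eqref{eq:IF_gen}. The direct computation above avoids even this, since it uses only the algebraic form of \eqref{eq:IF_gen} and the fact that $1{:}1$ randomization makes each $Z_a$ a symmetric sign independent of $W$; I would include it so the corollary is self-standing for its later use in the proof of Theorem~\ref{thm:nested}. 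Everything past the orthogonality is routine.
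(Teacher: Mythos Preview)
Your proof is correct and follows essentially the same route as the paper: use $\phi \perp (\phi^*-\phi)$ to get the Pythagorean decomposition $\|\phi^*\|^2 = \|\phi\|^2 + \|\phi^*-\phi\|^2$, apply it to $\phi_f$ and $\phi_g$, subtract, and invoke Lemma~\ref{thm:if-norm-diff}. The paper's version is a terse three-liner that simply cites the orthogonality from the preceding influence-function geometry discussion, whereas you additionally spell out a direct verification via $\phi^*-\phi = Z_1(\omega^*-\omega)$ and the decomposition $\phi = b(W) + r(W,A,Y)$; this is a helpful self-contained check but not a different argument.
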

\begin{proof}
For any $\phi^*$, $\|\phi^*\|_{L_2}^2 = \|\phi^* - \phi\|_{L_2}^2 + \|\phi\|_{L_2}^2$ since $\phi \perp (\phi^* - \phi)$. Applying this to $\phi_f$ and $\phi_g$, taking the difference, and applying \autoref{thm:if-norm-diff} gives the result.
\end{proof}

\autoref{thm:if-norm-diff-2} says that the difference in efficiency depends on the geometry of the triangle created by $(\omega, \omega_g,\omega_f)$. If the side between $\omega$ and $\omega_f$ is longer than the side between $\omega$ and $\omega_g$ then the estimator using $f$ as the estimated mean function is more efficient.

\subsection{Nested Regressions}

Return now to the setting where $\overline\mu$ and $\widetilde \mu$ are the limits in the ``nested'' regression models $\overline{\mathcal M} \subseteq \widetilde{\mathcal M}$. The key observation is that, by virtue of minimizing mean-squared error, $\widetilde\mu$ is the unique $\mathcal L_2$ projection of $\mu$ onto $\widetilde{\mathcal M}$ and $\overline\mu$ is the equivalent projection onto $\overline{\mathcal M}$. Moreover, $\overline\mu$ is also the projection of $\widetilde\mu$ onto $\overline{\mathcal M}$ by using a iterated projection. Because of this we know $\widetilde\mu(\cdot, a) - \mu(\cdot, a) \in \widetilde{\mathcal M}^\perp$,  $\overline\mu(\cdot, a) - \mu(\cdot, a) \in \overline{\mathcal M}^\perp$, and  $\overline\mu(\cdot, a) - \widetilde\mu(\cdot, a) \in \overline{\mathcal M}^\perp \cap \widetilde{\mathcal M}$. Since $\mathcal L_2$ projection is a linear operator, the same relationships hold between the respective $\omega$ functions. \autoref{fig:proj} illustrates the geometry conceptually.

\begin{figure}[ht!]
\centering
\begin{tikzpicture}[scale=3,tdplot_main_coords]
    \def\x{.5}
    \filldraw[
        draw=teal,
        fill=teal!20,
    ]          (-1.5,-0.5,0)
            -- (1,-0.5,0)
            -- (1,1,0)
            -- (-1.5,1,0)
            -- cycle; 
    \node[teal] at (-0.3,1.1,0) {$\widetilde{\mathcal M}$};

    \draw[blue, thick,<->] (-1.5,0,0) -- (1,0,0) node[anchor=north east]{$\overline{\mathcal M}$};
    \draw[thick,<->] (0,-0.5,0) -- (0,1,0) node[anchor=north west]{};
    \draw[thick,<->] (0,0,-0.5) -- (0,0,1) node[anchor=south]{};

    \coordinate (mu-star) at (-1,0.5,1);
    \fill (mu-star) circle (0.5pt) node[right] {$\mu$};
    \coordinate (mu-tilde) at (-1,0.5,0);
    \fill (mu-tilde) circle (0.5pt) node[right] {$\widetilde\mu$};
    \coordinate (mu-bar) at (-1,0,0);
    \fill (mu-bar) circle (0.5pt) node[left] {$\overline\mu$};

    \draw[gray, dashed, thin,->] (mu-star) -- (mu-tilde) node[midway, right] {$\widetilde\mu - \mu$}; 
    \draw[gray, dashed, thin,->] (mu-star) -- (mu-bar) node[midway, left] {$\overline\mu - \mu$}; 
    \draw[gray, dashed, thin,->] (mu-tilde) -- (mu-bar) node[midway, below] {$\overline\mu - \widetilde\mu$}; 
\end{tikzpicture}
\caption{Conceptual illustration of the limiting regressions $\overline\mu$ and $\widetilde\mu$ as projections of $\mu$ onto their respective nested models.}
\label{fig:proj}
\end{figure}

\nested*

\begin{proof}
Let $\widetilde\omega = \widetilde\mu(\cdot, 1) + \widetilde\mu(\cdot, 0)$ and similar for $\overline\omega$. By \autoref{thm:if-norm-diff-2}, $\widetilde\phi$ has smaller norm if and only if $\|\widetilde\delta\| = \|\widetilde\omega - \omega\| < \|\overline\omega - \omega\| = \|\overline\delta\|$. The vectors $\widetilde\delta$, $\overline\delta$, and $\alpha = \overline\omega - \widetilde\omega$ form the sides of a triangle, see \autoref{fig:proj}. Furthermore, we have $\widetilde\delta = \widetilde\omega - \omega \in \widetilde{\mathcal M}^\perp$, as concluded before \autoref{thm:nested}. However, $\alpha = \overline\omega - \widetilde\omega \in \widetilde{\mathcal M}$, so $\alpha$ and $\widetilde\delta$ are orthogonal and thus are the two ``sides'' of a right triangle, while $\overline\delta$ is the hypotenuse. The hypotenuse cannot be shorter than the sides by the Pythagorean theorem $\|\overline\delta\|^2 = \|\widetilde\delta\|^2 + \|\alpha\|^2$. The inequality is strict since $\|\alpha\| = \|\overline\omega-\widetilde\omega\| > 0$ by our assumption that $\overline{\mathcal M}$ is a strict subset of $\widetilde{\mathcal M}$.
\end{proof}

The result says, for example, that adding terms to a linear regression can only \textit{improve} the efficiency of the resulting plug-in ATE estimator in a randomized trial setting with a 1:1 randomization scheme.

\section{. \hspace{0.1cm} Results in different data generation scenarios} \label{app:sim_dpg_cv_no}

\begin{figure}[!ht]
    \centering
    \includegraphics[width=1\textwidth, clip]{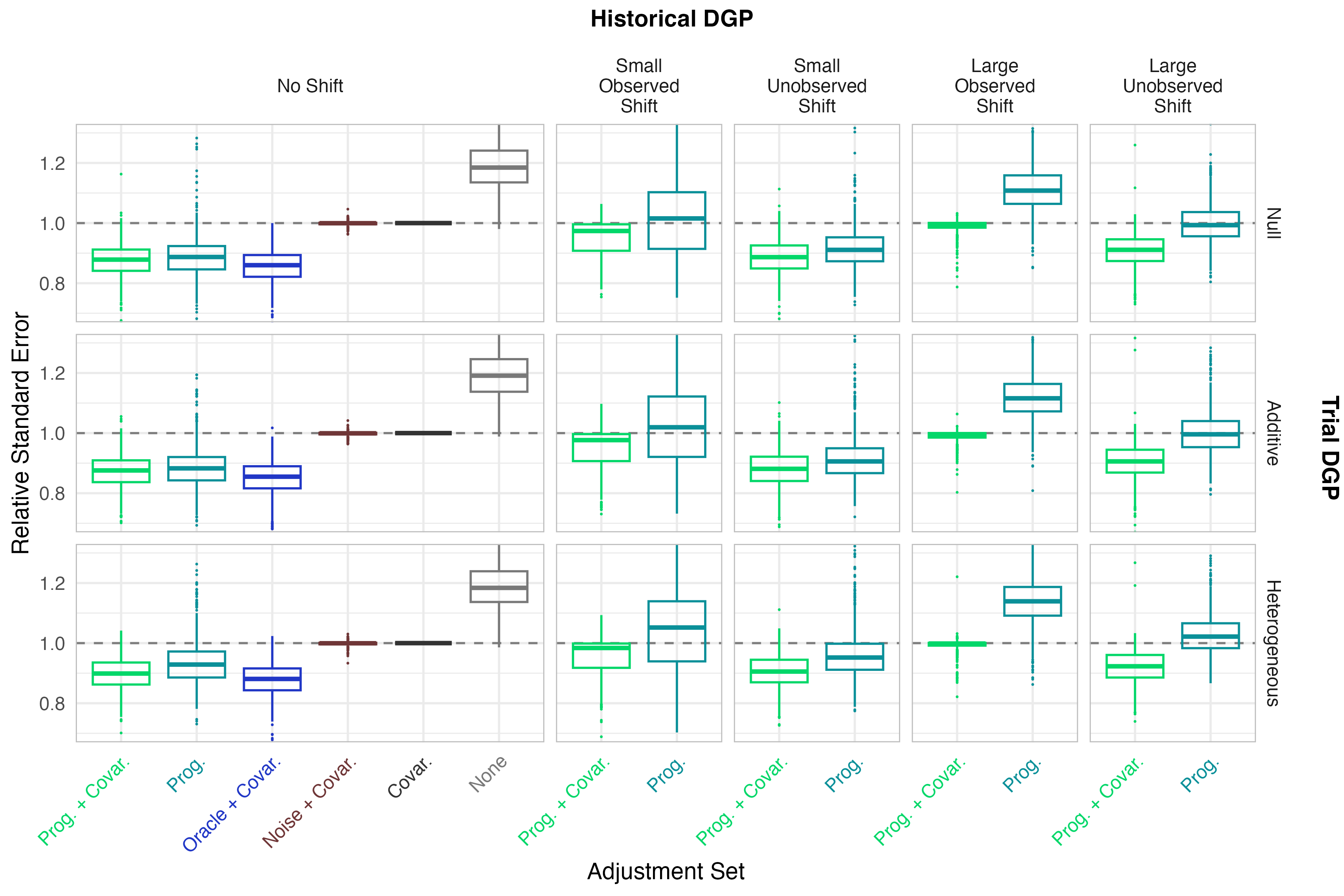}
    \caption{Equivalent of \autoref{fig:se}, but with standard errors estimated without cross-fitting.}
    \label{fig:se-nocv}
\end{figure}

\begin{figure}[!ht]
    \centering
    \includegraphics[width=1\textwidth, clip]{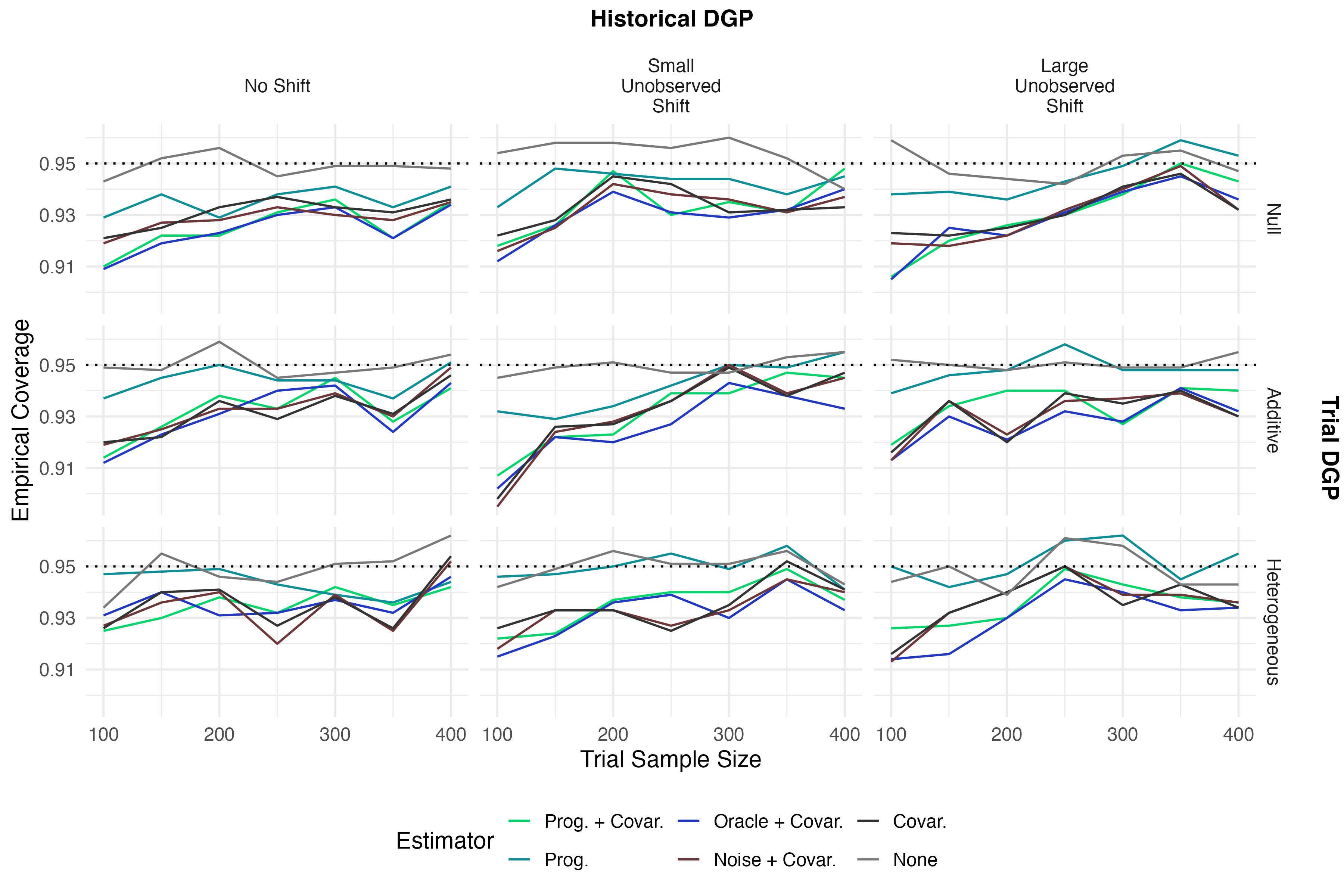}
    \caption{Equivalent of \autoref{fig:coverage}, but with standard errors estimated without cross-fitting.}
    \label{fig:coverage-nocv}
\end{figure}

\begin{figure}[!ht]
    \centering
    \includegraphics[width=1\textwidth, clip]{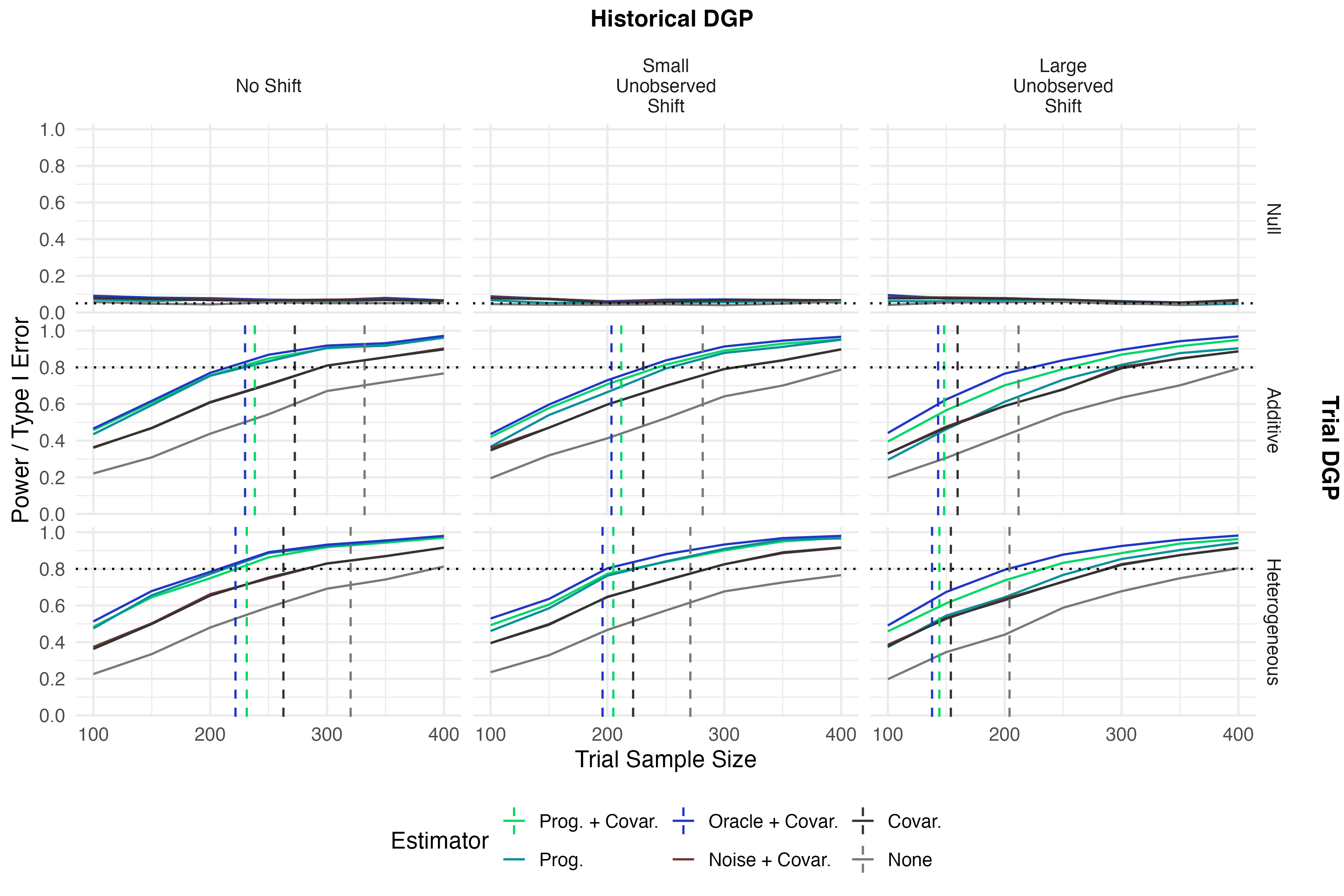}
    \caption{Equivalent of \autoref{fig:se}, but with standard errors estimated without cross-fitting.}
    \label{fig:power-nocv}
\end{figure}

\FloatBarrier
\section{. \hspace{0.1cm} Case study}\label{app:case_study}

\subsection{ \hspace{0.1cm} Summary of case study data} \label{app_overview_cs}

\setlength\heavyrulewidth{0.50ex}
\begin{table}[hb!!]
\centering
\begingroup
{
\renewcommand*{\arraystretch}{0.8}
\begin{tabular}{llrrrrr}
\toprule
\textbf{Data name} & \textbf{Trial ID} & \textbf{Duration} & \textbf{Titration target} & \textbf{Blinding type} & \multicolumn{2}{l}{ \textcolor{white}{test} \underline{ \textcolor{white}{test}\textbf{Number of participants}\textcolor{white}{test}}}\\ 
  & & & (mmol/L)& & \\
\multicolumn{5}{r}{} & \textcolor{white}{test} Randomized & Completed \\
  \midrule
\rowcolor{lightgray} Trial & NN9068-4229 & 26 weeks & 4.0-5.0 & Open-label & 210 & 206 \\
 & NN9068-4228 & 104 weeks & 4.0-5.0 & Open-label & 504 & 481 \\ 
 & NN1250-3579 & 52 weeks  & 4.0-5.0 & Open-label & 257 & 197 \\
 & NN1250-3586 & 26 weeks  & 4.0-5.0 & Open-label & 146 & 136\\
 & NN1250-3672 & 26 weeks & 4.0-5.0 & Open-label & 230  & 201 \\
 Historical & NN1250-3718 & 26 weeks & 4.0-5.0 & Open-label & 234  & 209 \\
 & NN1250-3724 & 26 weeks & 4.0-5.0 & Open-label & 230  & 206 \\
 & NN1250-3587 & 26 weeks & 4.0-5.0 & Open-label & 278  & 254 \\
 & NN5401-3590 & 26 weeks & 3.9-5.0 & Open-label & 264  & 232 \\
 & NN5401-3726 & 26 weeks & 3.9-5.0 & Open-label & extension of 3590 & 209\\ 
 & NN5401-3896 & 26 weeks & 3.9-5.0 & Open-label & 149 &  137\\

\bottomrule
\end{tabular}
}
\caption{Summary of case study data provided by Novo Nordisk A/S. The trial data set used for analysis of the RR is highlighted in grey. The historical data consists of all the data sets that are not highlighted. The number of participants refers to the number of participants receiving insulin IGlar. }\label{tab:cs_trials}
\endgroup
\end{table}

\FloatBarrier

\subsection{ \hspace{0.1cm} Data missingness} \label{app_datmis}

There was no missing data for the primary endpoint i.e. the number of ADA classified hypoglycaemic episodes for any of the 10 trials. 

A total of $95\%$ of the participants had complete data in the pooled trial and historical data for the baseline covariates. A total of 49 baseline covariates was included in the data, for more details on the specific covariates see \citet{Liao2023}. A missingness pattern plot for the covariates can be seen in \autoref{fig:missingness}. The missing covariates for the remaining $5\%$ of the participants were imputed using an RF on the historical and testing data, separately. In the historical data sample, the normalized root mean square error for continuous covariates was 0.211, with a false classification rate of 0.003. For every covariate with missing values, a missingness indicator was created as an additional covariate for prognostic model development {\color{black}(this is a standard approach for missing covariates in predictive models \cite{groenwold2020informative})}. In contrast, the new trial data exhibited a normalized root mean square error of 0.11, with no instances of falsely classified data.

\begin{figure}[!ht]
    \centering
    \includegraphics[width=1\textwidth, trim={0mm 0mm 0mm 0mm}, clip]{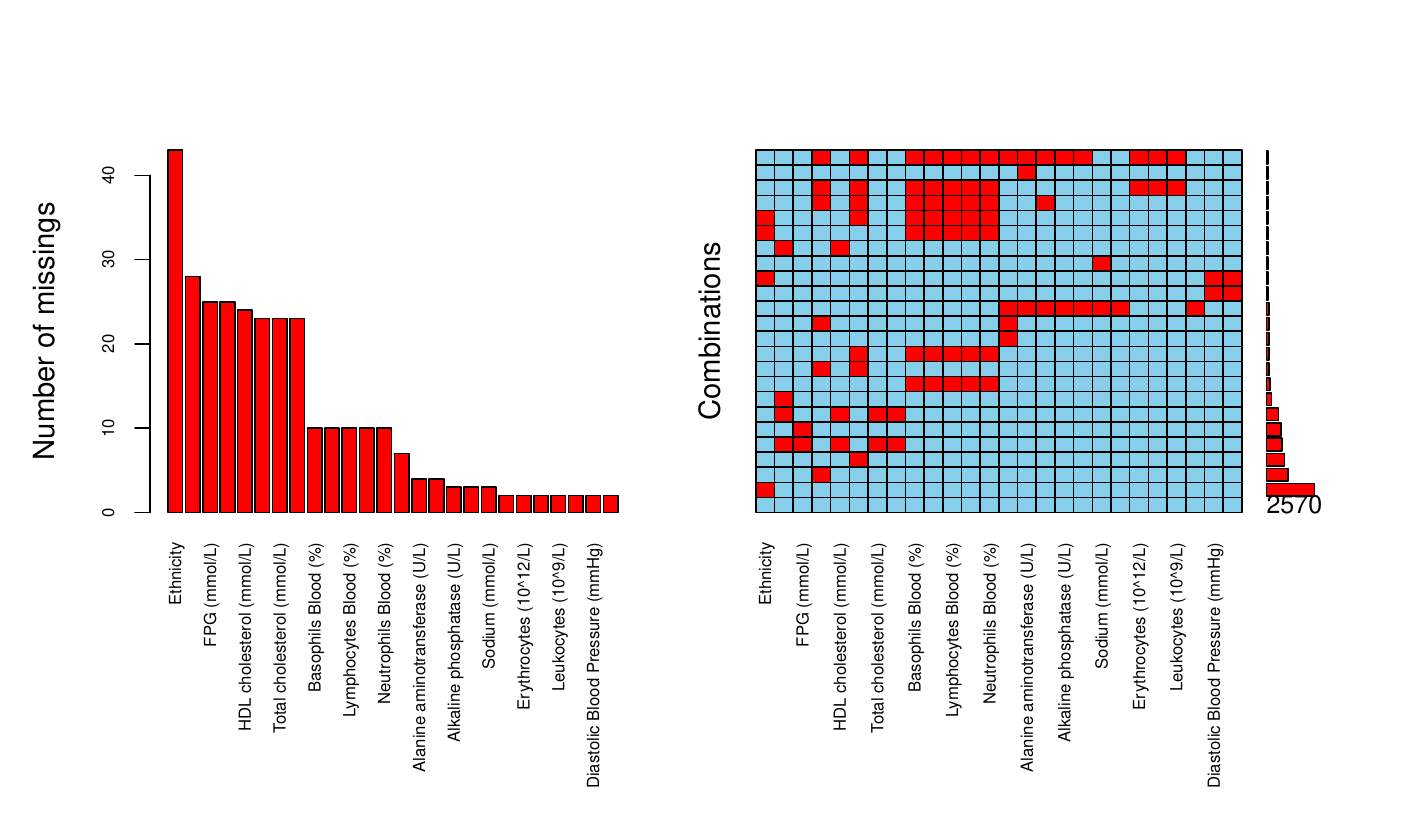}
    \caption{Left: Total number of missing values for each covariate. Right: Combination pattern of missingness.}
    \label{fig:missingness}
\end{figure}

\subsection{Discrete Super learner} \label{app:case_tuning}

\textbf{Number of folds}: Cross-validation is used to select the best candidate learner in the library for historical sample. A 5-fold scheme was used. \\
\textbf{Library of learners}: 
\begin{itemize}
    \item Multivariate Adaptive Regression Splines with the highest interaction to be to the 3rd degree
    \item Poisson regression
    \item Extreme gradient boosting with specifications: learning rate 0.1, tree depth 3, crossed with trees specified 25 to 500 by 25 increments
    \item Random Forest with number of trees found from cross validation  specified by 25 to 500 by 25 increments.
    \item K nearest neighbors with number of neighbors between 3, 4, 5, 7 and 9 found by cross validation
    \item Lasso Poisson regression with penalty found by cross validation
\end{itemize}
\textbf{Loss function}: Mean square error loss.

\end{document}